\documentclass[10pt]{article}
\usepackage{authblk}

\usepackage{fullpage}

\usepackage{amsthm,amssymb,amsfonts,amsmath}
\usepackage{mathrsfs}
\usepackage{graphicx}
\usepackage{color}
\usepackage{stmaryrd} \usepackage[ruled,linesnumbered]{algorithm2e}
\usepackage{./kbordermatrix}
\usepackage[utf8x]{inputenc}

\usepackage{tikz}
\usetikzlibrary{positioning}
\usetikzlibrary{arrows,automata}
\usepackage{caption}
\usepackage{subcaption}
\captionsetup{compatibility=false}

\usepackage[authoryear,round]{natbib}

\newcommand{\R}{\mathbb{R}}
\newcommand{\N}{\mathbb{N}}
\newcommand{\Z}{\mathbb{Z}}

\renewcommand{\P}{\mathbb{P}}

\newcommand{\mat}[1]{\mathbf{#1}}
\newcommand{\rank}{\operatorname{rank}}
\newcommand{\linspan}{\operatorname{span}}
\newcommand{\supp}{\operatorname{supp}}
\newcommand{\diag}{\operatorname{diag}}
\newcommand{\vecm}{\operatorname{vec}}

\newcommand{\norm}[1]{\|#1\|}
\newcommand{\bnorm}[1]{\left\|#1\right\|}
\newcommand{\sval}{\sigma}

\DeclareMathOperator{\Tr}{Tr}
\newcommand{\symbola}{a}

\newcommand{\ltwo}{\ell^2}
\newcommand{\lone}{\ell^1}

\newcommand{\lp}{\ell^p}
\newcommand{\lpq}{\ell^q}
\newcommand{\ltworat}{\ell^2_{\cR}}
\newcommand{\lonerat}{\ell^1_{\cR}}
\newcommand{\lprat}{\ell^p_{\cR}}
\newcommand{\finsup}{\mathcal{C}_{00}}

\newcommand{\sstar}{\Sigma^\star}

\newcommand{\cR}{\mathcal{R}}
\newcommand{\A}{\mat{A}}
\newcommand{\hA}{\hat{\mat{A}}}
\newcommand{\tA}{\tilde{\mat{A}}}
\newcommand{\bA}{\bar{\mat{A}}}
\newcommand{\B}{\mat{B}}

\newcommand{\balpha}{\boldsymbol{\alpha}}
\newcommand{\bbeta}{\boldsymbol{\beta}}
\newcommand{\azero}{\balpha}
\newcommand{\tazero}{\tilde{\balpha}}
\newcommand{\hazero}{\hat{\balpha}}
\newcommand{\bazero}{\bar{\balpha}}
\newcommand{\ainf}{\bbeta}
\newcommand{\hainf}{\hat{\bbeta}}
\newcommand{\tainf}{\tilde{\bbeta}}
\newcommand{\bainf}{\bar{\bbeta}}
\newcommand{\bzero}{\boldsymbol{\beta}_0}
\newcommand{\binf}{\boldsymbol{\beta}_{\infty}}

\newcommand{\wa}{\langle \azero, \ainf, \{\A_\symbola\} \rangle}

\newcommand{\waS}{\langle \azero, \ainf, \{\A_\symbola\}_{\symbola \in \Sigma} \rangle}
\newcommand{\waQ}{\langle \mQ^\top \azero, \mQ^{-1} \ainf, \{\mQ^{-1} \A_\symbola
\mQ\} \rangle}
\newcommand{\hwa}{\langle \hazero, \hainf, \{\hat{\A}_\symbola\} \rangle}
\newcommand{\bwa}{\langle \bazero, \bainf, \{\bA_\symbola\} \rangle}
\newcommand{\twa}{\langle \tazero, \tainf, \{\tilde{\A}_\symbola\} \rangle}

\newcommand{\Ps}{\mathcal{P}}
\newcommand{\Ss}{\mathcal{S}}

\newcommand{\normop}[1]{\|#1\|_{\mathrm{op}}}
\newcommand{\normsp}[1]{\|#1\|_{\mathrm{S},p}}
\newcommand{\normsone}[1]{\|#1\|_{\mathrm{S},1}}
\newcommand{\normstwo}[1]{\|#1\|_{\mathrm{S},2}}
\newcommand{\normsinf}[1]{\|#1\|_{\mathrm{S},\infty}}

\newcommand{\normtr}[1]{\|#1\|_{\mathrm{tr}}}
\newcommand{\normf}[1]{\|#1\|_{\mathrm{F}}}

\renewcommand{\H}{\mat{H}}

\renewcommand{\v}{\mat{v}}
\newcommand{\mM}{\mat{M}}
\newcommand{\mN}{\mat{N}}
\newcommand{\mP}{\mat{P}}
\newcommand{\mS}{\mat{S}}
\newcommand{\mG}{\mat{G}}

\newcommand{\mU}{\mat{U}}
\newcommand{\mV}{\mat{V}}
\newcommand{\mW}{\mat{W}}
\newcommand{\mD}{\mat{D}}
\newcommand{\mQ}{\mat{Q}}
\newcommand{\mR}{\mat{R}}
\newcommand{\mI}{\mat{I}}
\newcommand{\mJ}{\mat{J}}
\newcommand{\mT}{\mat{T}}
\newcommand{\mX}{\mat{X}}
\newcommand{\mY}{\mat{Y}}
\newcommand{\mZ}{\mat{Z}}
\newcommand{\mL}{\mat{L}}
\newcommand{\vx}{\mat{x}}
\newcommand{\vy}{\mat{y}}
\newcommand{\vz}{\mat{z}}

\newcommand{\me}{\mat{e}}
\newcommand{\mPi}{\mat{\Pi}}
\newcommand{\mGamma}{\mat{\Gamma}}
\newcommand{\TrD}{\Tr^{\mD}}

\newtheorem{theorem}{Theorem}[section]
\newtheorem{lemma}[theorem]{Lemma}
\newtheorem{proposition}[theorem]{Proposition}
\newtheorem{corollary}[theorem]{Corollary}
\newtheorem{definition}[theorem]{Definition}

\title{Singular Value Automata and Approximate Minimization}
\author[1]{Borja Balle}
\author[2]{Prakash Panangaden}
\author[2]{Doina Precup}
\affil[1]{Amazon Research\footnote{This work was completed while the author was at Lancaster University.}}
\affil[2]{School of Computer Science, McGill University}

\begin{document}
\maketitle

\begin{abstract}
The present paper uses spectral theory of linear operators to construct approximately minimal realizations of weighted languages.  Our new contributions are: (i) a new algorithm for the SVD decomposition of finite-rank infinite Hankel matrices based on their representation in terms of weighted automata, (ii) a new canonical form for weighted automata arising from the SVD of its corresponding Hankel matrix and (iii) an algorithm to construct approximate minimizations of given weighted automata by truncating the canonical form.  We give bounds on the quality of our approximation.
\end{abstract}

\section{Introduction}

When one considers \emph{quantitative systems} it becomes meaningful to ask about the
\emph{approximate} minimization of transition systems or automata.  This
concept, meaningless for ordinary automata, is appropriate for many types
of systems: weighted automata, probabilistic automata of various kinds, and
timed automata.  The present paper focuses on weighted automata where we
are able to exploit spectral theory of linear operators to construct
approximately minimal realizations of weighted languages.  Our main
contributions are:
\begin{itemize}
\item A new algorithm for the SVD decomposition of finite-rank infinite Hankel matrices
based on their representation in terms of weighted automata (Sections~\ref{sec:fundeqns} and~\ref{sec:computegramians}).

\item A new canonical form for weighted automata arising from the SVD of its
corresponding Hankel matrix (Section~\ref{sec:sva}).

\item An algorithm to construct approximate minimizations of given weighted
automata by truncating the canonical form (Section~\ref{sec:approxmin}).
\end{itemize}

Minimization of automata has been a major subject since the 1950s, starting
with the now classical work of the pioneers of automata theory.  Recently
there has been activity on novel algorithms for minimization based on
duality~\citep{Bezhanishvili12,Bonchi14} which are ultimately based on a
remarkable algorithm due to Brzozowski from the 1960s~\citep{Brzozowski62}.
The general co-algebraic framework permits one to generalize Brzozowski's
algorithm to other classes of automata like weighted automata.

Weighted automata are also used in a variety of practical settings, such as
machine learning where they are used to represent predictive models for
time series data and text.  For example, weighted automata are commonly
used for pattern recognition in sequences occurring in speech
recognition~\citep{MohriPereiraRiley2008}, image
compression~\citep{AlbertLari2009}, natural language
processing~\citep{knight2009applications}, model
checking~\citep{baier2009model}, and machine
translation~\citep{DeGispert2010}.  The machine learning motivations of our
work are discussed at greater length in Section~\ref{sec:related}, as they are the
main impetus for the present work.  There has also been interest in this
type of representations in the general theory of quantitative systems,
including concurrency theory~\citep{boreale2009weighted} and
semantics~\citep{Bonchi2012}.

While the detailed discussion of the machine learning motivations appears
in the related work section, it is appropriate to make a few points at the
outset.  First, the formalism of weighted finite automata (WFA) serves as a
unifying formalism; examples of models that are subsumed include: hidden
Markov models (HMM), predictive representations of state (PSR), and
probabilistic automata of various kinds.  Second, in many learning
scenarios one has to make a guess of the number of states in advance of the
learning process; the resulting algorithm is then trying to construct as
best it can a minimal realization within the given constraint.  Thus our
work gives a general framework for the analysis of these types of learning
scenarios.  

The present paper extends and improves the results of our previous work \citep{balle2015canonical}, where the singular value automaton was defined for the first time.
The contents of this paper are organized as follows.
Section~\ref{sec:background} defines the notation that will be used
throughout the paper and reviews a series of well-known results that will be needed.
Section~\ref{sec:ratfuns} develops some basic results on analytic properties of rational series computed by weighted automata.
Section~\ref{sec:sva} establishes the existence of the singular value automaton, a canonical form for weighted automata computing square-summable rational series.
Section~\ref{sec:fundeqns} proves some fundamental equations satisfied by singular value automata and provides an algorithm for computing the canonical form.
Section~\ref{sec:computegramians} shows how to implement the algorithms from the previous section using two different methods for computing the Gramian matrices associated with a factorization of the Hankel matrix.
Section~\ref{sec:approxmin} describes the main application of singular value automata to approximate minimization and proves an important approximation result.
Section~\ref{sec:related} discusses related work in approximate minimization, spectral learning of weighted automata, and the theory of linear dynamical systems.
We conclude with Section~\ref{sec:conclusion}, where we point out interesting
future research directions.

\section{Notation and Preliminaries}\label{sec:background}

Given a positive integer $d$, we denote $[d] = \{1,\ldots,d\}$.  We use $\R$ to denote the field of real numbers, and $\N = \{0,1,\ldots\}$ for the commutative monoid of natural numbers.  In this section we present notation and preliminary results about linear algebra, functional analysis, and weighted automata that will be used throughout the paper.  We state all our results in terms of real numbers because this is the most common choice in the literature on weighted automata, but all our results remain true (and the proofs are virtually the same) if one considers automata with weights in the field of complex numbers $\mathbb{C}$. 

\subsection{Linear Algebra and Functional Analysis}

We use bold letters to denote vectors $\v \in \R^d$ and matrices $\mM \in
\R^{d_1 \times d_2}$. Unless explicitly stated, all vectors are column vectors.
We write $\mI$ for the identity matrix, $\diag(a_1,\ldots,a_n)$ for a
diagonal matrix with $a_1, \ldots, a_n$ in the diagonal, and
$\diag(\mM_1,\dots,\mM_n)$ for the block-diagonal matrix containing the square
matrices $\mM_i$ along the diagonal.
The $i$th coordinate vector $(0,\ldots,0,1,0,\ldots,0)^\top$ is denoted by
$\me_i$ and the all ones vector $(1,\ldots,1)^\top$ is denoted by
$\mat{1}$. 
For a matrix $\mM \in \R^{d_1 \times d_2}$, $i \in [d_1]$, and $j \in [d_2]$, we
use $\mM(i,:)$ and $\mM(:,j)$ to denote the $i$th row and the $j$th column of
$\mM$ respectively.
Given a matrix $\mM \in \R^{d_1 \times d_2}$ we denote by
$\vecm(\mM) \in \R^{d_1 \cdot d_2}$ the vector obtained by concatenating the columns of
$\mM$ so that $\vecm(\mM)((i-1) d_2 + j) = \mM(i,j)$.
Given two matrices $\mM \in \R^{d_1 \times d_2}$ and $\mM' \in \R^{d_1' \times
d_2'}$ we denote their tensor (or Kronecker) product by $\mM \otimes \mM' \in
\R^{d_1 d_1' \times d_2 d_2'}$, with entries given by $(\mM \otimes
\mM')((i-1)d_1' + i', (j-1)d_2' + j') = \mM(i,j) \mM'(i',j')$, where $i \in
[d_1]$, $j \in [d_2]$, $i' \in [d_1']$, and $j' \in [d_2']$.
For simplicity, we will sometimes write $\mM^{\otimes 2} = \mM \otimes \mM$, and
similarly for vectors.
A \emph{rank factorization} of a rank $n$ matrix $\mM \in \R^{d_1 \times d_2}$
is an expression of the form $\mM = \mQ \mR$ where $\mQ \in \R^{d_1 \times n}$
and $\mR \in \R^{n \times d_2}$ are full-rank matrices; i.e.\ $\rank(\mQ) = \rank(\mR) = \rank(\mM) = n$.
When $\mQ$ is a square invertible matrix, we use the shorthand notation $\mQ^{-\top}$ to denote the transpose of its inverse $(\mQ^{-1})^\top$.

Given a matrix $\mM \in \R^{d_1 \times d_2}$ of rank $n$, its \emph{singular
value decomposition} (SVD)\footnote{To be more precise, this is a \emph{compact}
singular value decomposition, since the inner dimensions of the decomposition
are all equal to the rank. In this paper we shall always use the term SVD to
mean compact SVD.} is a decomposition of the form $\mM = \mU \mD \mV^\top$ where
$\mU \in \R^{d_1 \times n}$, $\mD \in \R^{n \times n}$, and $\mV \in \R^{d_2
\times n}$ are such that: $\mU^\top \mU = \mV^\top \mV = \mI$, and $\mD =
\diag(\sval_1,\ldots,\sval_n)$ with $\sval_1 \geq \cdots \geq \sval_n > 0$. The
columns of $\mU$ and $\mV$ are thus orthonormal and are called left and right \emph{singular vectors}
respectively, and the $\sval_i$ are its \emph{singular values}.
The SVD is unique (up to sign changes in associate singular vectors) whenever
all inequalities between singular values are strict.
The Moore--Penrose pseudo-inverse of $\mM$ is denoted by $\mM^\dagger$ and
is the \emph{unique} matrix (if it exists) such that
$\mM \mM^\dagger \mM = \mM$ and $\mM^\dagger \mM \mM^\dagger = \mM^\dagger$.  It
can be computed from the SVD $\mM = \mU \mD \mV^\top$ as
$\mM^\dagger = \mV \mD^{-1} \mU^\top$.

For $1 \leq p \leq \infty$ we will write $\norm{\v}_p$ for the $\lp$ norm of
vector $\v$.
The corresponding \emph{induced norm} on matrices is $\norm{\mM}_p =
\sup_{\norm{\v}_p = 1} \norm{\mM \v}_p$.
We recall the following characterizations for induced norms with $p \in \{1, \infty\}$: $\norm{\mM}_1 = \max_j \sum_{i} |\mM(i,j)|$ and $\norm{\mM}_{\infty} = \max_i \sum_{j} |\mM(i,j)|$.
In addition to induced norms we will also use Schatten norms.
If $\mM$ is a rank-$n$ matrix with singular values $\mat{s} = (\sval_1, \ldots,
\sval_n)$, the \emph{Schatten $p$-norm} of $\mM$ is given by $\normsp{\mM} =
\norm{\mat{s}}_p$.
Most of these norms have given names: $\norm{\cdot}_2 = \normsinf{\cdot} =
\normop{\cdot}$ is the \emph{operator (or spectral) norm}; $\normstwo{\cdot} =
\normf{\cdot}$ is the \emph{Frobenius norm}; and $\normsone{\cdot} =
\normtr{\cdot}$ is the \emph{trace (or nuclear) norm}.
For a matrix $\mM$ the \emph{spectral radius} is the largest modulus $\rho(\mM)
= \max_i |\lambda_i(\mM)|$ among the eigenvalues $\lambda_i(\mM)$ of $\mM$.
For a square matrix $\mM$, the series $\sum_{k \geq 0} \mM^k$ converges if and
only if $\rho(\mM) < 1$, in which case the sum yields $(\mI - \mM)^{-1}$.

Recall that if a square matrix $\mM \in \R^{d \times d}$ is symmetric then
all its eigenvalues are real. A symmetric matrix $\mM$ is \emph{positive
  semi-definite} when all its eigenvalues are non-negative; we denote this
fact by writing $\mM \geq \mat{0}$, where $\mat{0}$ is a zero $d \times d$ matrix.  The Loewner partial ordering on the set of all
$d \times d$ matrices is obtained by defining $\mM_1 \geq \mM_2$ to
mean $\mM_1 - \mM_2 \geq \mat{0}$.  The fact that this gives a partial
order follows from the fact that the positive semi-definite operators form
a cone.  In particular, $\mM_1 \geq \mM_2$ implies the trace inequality $\Tr(\mM_1) \geq \Tr(\mM_2)$.

Sometimes we will name the columns and rows of a matrix using
ordered index sets $\mathcal{I}$ and $\mathcal{J}$. In this case we will
write $\mat{M} \in \R^{\mathcal{I} \times \mathcal{J}}$ to denote a matrix of
size $|\mathcal{I}| \times |\mathcal{J}|$ with rows indexed by $\mathcal{I}$ and
columns indexed by $\mathcal{J}$.

Recall that a \emph{Banach space} is a complete normed vector space
$(X, \norm{\cdot}_X)$. A \emph{Hilbert space} is a Banach space
$(X, \norm{\cdot}_X)$ where the norm arises from an inner product:
$\norm{x}_X^2 = \langle x,x \rangle_X$. A Hilbert space is separable if it
admits a countable orthonormal basis. The \emph{operator norm} of a linear
operator $T : X \to Y$ between two Banach spaces is given by
$\normop{T} = \sup_{\norm{x}_X \leq 1} \norm{T x}_Y$. The operator is
\emph{bounded} (and continuous) if $\normop{T}$ is finite. An operator
$T : X \to Y$ is \emph{compact} if the closure in the topology of $Y$ of
the image under $T$ of the unit ball in $X$ is a compact set in $Y$. A
sufficient condition for compactness is to be a bounded finite-rank
operator.

Our main interest in compact operators is motivated by the existence of a
decomposition equivalent to SVD for compact operators in Hilbert
spaces. Note that for a bounded operator $T : X \to Y$ between separable
Hilbert spaces it is possible to choose countable orthonormal basis $F =
(f_j)_{j \in \mathcal{J}}$ and $E = (e_i)_{i \in \mathcal{I}}$ for $X$ and
$Y$ respectively, and write down an infinite matrix $\mT \in
\R^{\mathcal{I} \times \mathcal{J}}$ for $T$ with entries given by
$\mT(i,j) = \langle e_i, T f_j \rangle_Y$. In the case of finite-rank
bounded operators the \emph{Hilbert--Schmidt decomposition}
\citep{zhu1990operator} provides a decomposition for the infinite matrix
associated with an operator analogous to the compact SVD decomposition for
finite matrices. In particular, if $T$ has rank $n$, then the decomposition
theorem yields singular values $\sigma_1 \geq \cdots \geq \sigma_n > 0$ and
singular vectors $v_i \in X$ and $u_i \in Y$ for $i \in [n]$ such that for
all $x \in X$ we have 
\begin{equation}
Tx = \sum_{i=1}^n \sigma_i \langle v_i, x \rangle_X u_i \enspace.
\end{equation}
By writing the singular vectors $u_i$ and $v_i$ in terms of the bases $E$
and $F$ we can write this decomposition as $\mT = \mU \mD \mV^\top$ with
$\mU \in \R^{\mathcal{I} \times n}$ and $\mV \in \R^{\mathcal{J} \times n}$
satisfying the same properties as the SVD for finite matrices. 

\subsection{Weighted Automata and Rational Functions}

Let $\Sigma$ be a fixed finite alphabet with $|\Sigma| < \infty$ symbols, and
$\sstar$ the set of all finite strings with symbols in $\Sigma$. We use
$\varepsilon$ to denote the empty string.
Given two strings $p, s \in \sstar$ we write $w = p s$ for their concatenation,
in which case we say that $p$ is a prefix of $w$ and $s$ is a suffix of $w$.
We denote by $|w|$  the length (number of symbols) in a string $w \in
\sstar$.
Given a set of strings $X \subseteq \sstar$ and a function $f : \sstar \to \R$,
we denote by $f(X)$ the summation $\sum_{x \in X} f(x)$ if defined. For example,
we will write $f(\Sigma^t) = \sum_{|x| = t} f(x)$ for any $t \geq 0$. The notation $\Sigma^{< t}$ (resp.\ $\Sigma^{\leq t}$) denotes all string of length less than (resp.\ at most) $t$. As customary, we use $\Sigma^+$ to denote the set of non-empty strings.

Now we introduce our notation for weighted automata. We want to note that we
will not work with weights in arbitrary semi-rings; this paper only
considers automata with real weights and the usual addition and multiplication
operations. In addition, instead of resorting to the usual description of
automata as directed graphs with labelled nodes and edges, we will use a
linear-algebraic representation which is more convenient for our purposes.
Thus, a \emph{weighted finite automata} (WFA) of dimension $n$ over $\Sigma$ is a
tuple $A = \waS$ where $\azero \in \R^n$ is the vector of \emph{initial
weights}, $\ainf \in \R^n$ is the vector of \emph{final weights}, and for each
symbol $a \in \Sigma$ the matrix $\A_a \in \R^{n \times n}$ contains
the \emph{transition weights} associated with $a$. An example is provided in Figure~\ref{fig:wfa}.
Note that in this representation a fixed initial state is given by $\azero$ (as
opposed to formalisms that only specify a transition structure), and the
transition endomorphisms $\A_a$ and the final linear form $\ainf$ are given
in a fixed basis on $\R^n$ (as opposed to abstract descriptions where these
objects are represented as basis-independent elements objects on an abstract $n$-dimensional
vector space).

\begin{figure}[t]
\centering
\begin{subfigure}[c]{0.4\textwidth}
\centering
\begin{tikzpicture}[shorten >=1pt,node distance=3cm,auto]
\tikzstyle{every state}=[draw=blue!50,very thick,on grid,fill=blue!20,inner sep=3pt,minimum size=4mm]
\node[state with output,initial,initial text=$1$] (q_0) {$q_1$ \nodepart{lower} $1$};
\node[state with output,initial,initial text=$-2$] (q_1) [right of=q_0] {$q_2$ \nodepart{lower} $-1$};
\path[->]
(q_0) edge [loop above] node
{\footnotesize $\begin{matrix}{{a , 1}}\\b , 0 \end{matrix}$} ()
(q_0) edge [bend right] node [swap]
{\footnotesize $\begin{matrix}{{a , -1}}\\{b , -2}\end{matrix}$} (q_1)
(q_1) edge [loop above] node
{\footnotesize $\begin{matrix}{{a , 3}}\\b , 5 \end{matrix}$} ()
(q_1) edge [bend right] node [swap]
{\footnotesize $\begin{matrix}{{a , -2}}\\{b , 0}\end{matrix}$} (q_0);
\end{tikzpicture}\caption{\label{fig:trans}}
\end{subfigure}
\begin{subfigure}[c]{0.4\textwidth}
\centering
\begin{minipage}[t]{\textwidth}
\begin{tabular}{cc}
$\azero =
\left[
\begin{matrix}
1\\
-2\\
\end{matrix}
\right]$
&
$\A_a =
\left[
\begin{matrix}
1 & -1 \\
-2 & 3\\
\end{matrix}
\right]$\\[.5cm]
$\ainf =
\left[
\begin{matrix}
1\\
-1\\
\end{matrix}
\right]$
&
$\A_b =
\left[
\begin{matrix}
0 & -2 \\
0 & 5\\
\end{matrix}
\right]$
\end{tabular}
\end{minipage}
\caption{\label{fig:weights}}
\end{subfigure}
\caption{(\protect\subref{fig:trans}) Example of WFA $A$ with two
  states. Within each circle we denote the state number $q_i$ and the
  corresponding final weight. The initial weights are denoted using arrows
  pointing to each state, and the transition weights are given by arrows
  between states. For example, $f_A(ba) = 1 \times (-2) \times 3 \times
  (-1) + 1 \times (-2) \times (-2) \times 1 + (-2) \times 5 \times 3 \times
  (-1) + (-2) \times 5 \times (-2) \times 1 = 60$. (\protect\subref{fig:weights}) Corresponding initial vector $\azero$, final vector $\ainf$, and transition matrices $\A_a$ and $\A_b$.}
\label{fig:wfa} 
\end{figure}
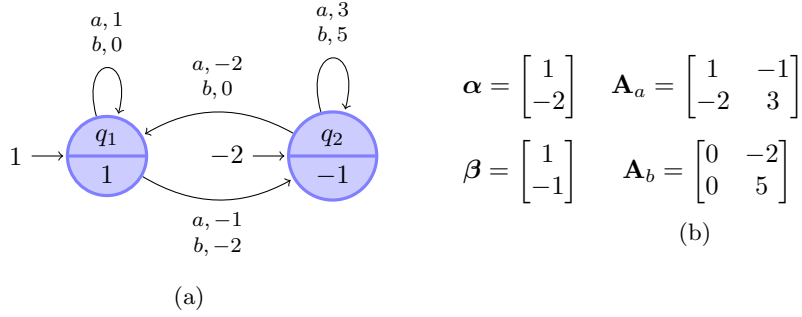

We will use $\dim(A)$ to denote the dimension of a WFA, to which we sometimes also refer to as the number of states in the WFA. The state-space of a WFA of dimension $n$ is identified with the integer set $[n]$.
Every WFA $A$ \emph{realizes} a function $f_A : \sstar \to \R$ which, given a
string $x = x_1 \cdots x_t \in \sstar$, produces
\begin{equation*}
f_A(x) = \azero^\top \A_{x_1} \cdots \A_{x_t} \ainf =
\azero^\top \A_x \ainf \enspace,
\end{equation*}
where we defined the shorthand notation $\A_x = \A_{x_1} \cdots \A_{x_t}$ that
will be used throughout the paper. In terms of the graphical description of $A$, the value $f_A(x)$ can be interpreted as the sum of the weights of all paths labeled by $x$ from an initial to a final state, where the weight of a path is the product of the initial weight, the corresponding transition weights, and the final weight:
\begin{equation*}
f_A(x) = \sum_{(q_0,\ldots,q_{t}) \in [n]^{t+1}} \azero(q_0) \left(\prod_{i=1}^t \A_{x_i}(q_{i-1},q_i) \right) \ainf(q_t) \enspace,
\end{equation*}
where $t = |x|$.
A function $f : \sstar \to \R$ is called \emph{rational}\footnote{Some authors call these functions \emph{recognizable} and use a notion of rationality associated with belonging to a set of functions closed under certain operations. Since both notions are equivalent for the computation model of WFA we consider in this paper, we purposefully avoid the distinction between rationality and recognizability.} if there exists a WFA $A$ such that $f = f_A$.
The \emph{rank} of a rational function $f$ is the dimension of the smallest WFA
realizing $f$.
We say that a WFA $A$ is \emph{minimal} if $\dim(A) = \rank(f_A)$.

Hankel matrices provide a powerful characterization of rationality that will be heavily used in the sequel. Let $\H \in \R^{\sstar \times \sstar}$ be an infinite matrix whose rows and
columns are indexed by strings. We say that $\H$ is \emph{Hankel}\footnote{In real analysis a matrix $\mM$ is Hankel if $\mM(i,j)=\mM(k,l)$
whenever $i + j = k + l$, which implies that $\mM$ is symmetric. In our case we
have $\H(p,s) = \H(p',s')$ whenever $p s = p' s'$, but $\H$ is not symmetric
because string concatenation is not commutative whenever $|\Sigma| > 1$.}
if for all strings $p, p', s, s' \in \sstar$ such that $p s = p' s'$ we have
$\H(p,s) = \H(p',s')$.
Given a function $f : \sstar \to \R$ we can associate with it a Hankel matrix
$\H_f \in \R^{\sstar \times \sstar}$ with entries $\H_f(p,s) = f(p s)$.
Conversely, given a matrix $\H \in \R^{\sstar \times \sstar}$ with the Hankel
property, there exists a unique function $f : \sstar \to \R$ such that $\H_f =
\H$.
The following well-known theorem characterizes all Hankel matrices of finite
rank.

\begin{theorem}[\citep{berstel2011noncommutative}]\label{thm:fundamentalWFA}
For any function $f : \sstar \to \R$, the Hankel matrix $\H_f$ has finite rank
$n$ if and only if $f$ is rational with $\rank(f) = n$. In other words,
$\rank(f) = \rank(\H_f)$ for any function $f : \sstar \to \R$.
\end{theorem}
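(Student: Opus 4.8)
The plan is to prove the two inequalities $\rank(\H_f) \leq \rank(f)$ and $\rank(f) \leq \rank(\H_f)$ separately; together they give $\rank(f) = \rank(\H_f)$, and in particular $\H_f$ has finite rank exactly when $f$ is rational, with matching ranks.

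First I would establish $\rank(\H_f) \leq \rank(f)$ by exhibiting a low-rank factorization of $\H_f$ coming from any WFA. Let $A = \wa$ be a minimal WFA for $f$, so that $\dim(A) = \rank(f) =: n$ and $f(x) = \azero^\top \A_x \ainf$. Using $\A_{ps} = \A_p \A_s$, I define the forward matrix $\mP \in \R^{\sstar \times n}$ by $\mP(p,:) = \azero^\top \A_p$ and the backward matrix $\mS \in \R^{n \times \sstar}$ by $\mS(:,s) = \A_s \ainf$. Then $(\mP \mS)(p,s) = \azero^\top \A_p \A_s \ainf = f(ps) = \H_f(p,s)$, so $\H_f = \mP \mS$ factors through $\R^n$ and hence has rank at most $n$.

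For the reverse inequality, the harder and constructive direction, I assume $\rank(\H_f) = n < \infty$ and build a WFA of dimension $n$ realizing $f$. The central object is the (finite-dimensional, $n$-dimensional) row space $V = \linspan\{\H_f(p,:) : p \in \sstar\} \subseteq \R^{\sstar}$, whose elements I view as functions of the suffix. For each symbol $a \in \Sigma$ I introduce the shift operator $T_a$ acting on suffix-functions by $(T_a g)(s) = g(as)$. The Hankel property is exactly what makes this work: it gives $T_a \H_f(p,:) = \H_f(pa,:)$, so $T_a$ sends a spanning set of $V$ back into $V$ and therefore restricts to a well-defined endomorphism of $V$. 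Fixing a basis of $V$, I represent each $T_a$ by a matrix $\A_a \in \R^{n \times n}$, let $\azero$ be the coordinates of the empty-prefix row $\H_f(\varepsilon,:) = f$, and let $\ainf$ be the coordinate representation of the evaluation-at-$\varepsilon$ functional $g \mapsto g(\varepsilon)$. A short induction on $|x|$, transporting the identity $\H_f(x,:) = T_{x_t}\cdots T_{x_1}\H_f(\varepsilon,:)$ into coordinates, then yields $f(x) = \H_f(x,\varepsilon) = \azero^\top \A_{x_1}\cdots \A_{x_t}\ainf$, so this WFA computes $f$ and $\rank(f) \leq n$.

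The step I expect to be the crux is the invariance of $V$ under every shift $T_a$: once this is in hand, everything reduces to ordinary finite-dimensional linear algebra on $V$, sidestepping the infinite index sets. I would also take care with the bookkeeping, namely the reversal of symbol order when composing the $T_a$ (operators acting on the left of $V$ become right-multiplications of coordinate row-vectors, turning $T_{x_t}\cdots T_{x_1}$ into the forward product $\A_{x_1}\cdots\A_{x_t}$), and the verification that the same $\azero$, $\ainf$, and $\{\A_a\}$ reproduce $f$ on \emph{all} strings rather than merely on a basis. These are routine once $T_a : V \to V$ is shown to be well defined.
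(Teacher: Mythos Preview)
The paper does not give its own proof of this theorem; it is stated as a classical result and attributed to \citep{berstel2011noncommutative}. Your argument is correct and is essentially the standard proof found in that reference (originating with Fliess and Carlyle--Paz): the easy direction via the factorization $\H_f = \mP\mS^\top$ is precisely what the paper later calls the forward--backward factorization, and your converse via shift-invariance of the row space under the operators $T_a$ is the classical construction. Your remarks about the order-reversal bookkeeping and the well-definedness of $T_a|_V$ identify the only points that require care.
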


\subsection{Probabilistic Automata}

Probabilistic automata will be used as a recurring example throughout the
paper. Here we introduce the main definitions and stress some key
differences arising from subtle changes in the definition that can make a
difference in terms of the analytic properties of this kind of
automata. Generally speaking, a probabilistic automaton is a WFA $A$ whose
weights have a probabilistic interpretation and such that the values
$f_A(x)$ of the function computed by $A$ represent the likelihood of an
event associated with string $x$.  

A \emph{generative probabilistic automaton} (GPA) is a WFA $A$ such that
the function $f_A$ computes a probability distribution on $\sstar$. That
is, we have $f_A(x) \geq 0$ and $\sum_{x \in \sstar} f_A(x) = 1$. In
addition, we say a GPA $A = \wa$ is \emph{proper} (pGPA) if its weights
have a probabilistic interpretation, i.e.\ 
\begin{enumerate}
\item Initial weights represent a probability distribution over states:
  $\azero \geq 0$ and $\azero^\top \mat{1} = 1$. 
\item Transition weights and final weights represent probabilities of
  emitting a symbol and transitioning to a next state or terminating:
  $\A_\sigma \geq 0$\footnote{Note that these inequalities have scalars in
    their RHS and should be interpreted as entry-wise inequalities, and not
    as claims about positive semi-definite matrices.}, $\ainf \geq 0$, and
  $\sum_{\sigma \in \Sigma} \A_\sigma \mat{1} + \ainf = \mat{1}$. 
\end{enumerate}
An example is provided in Figure~\ref{fig:gpa}.
It is shown in \citep{denis2008rational} that not all GPA are pGPA, and that
there exists probability distributions on $\sstar$ that cannot be computed by any
pGPA. 

A \emph{dynamic probabilistic automaton} (DPA) is a WFA $A$ defining a
probability distribution $D_A$ over streams in $\Sigma^\omega$ and such
that the function $f_A$ on finite strings computes the probability under
$D_A$ of cones of the form $x \Sigma^{\omega}$ for $x \in \sstar$. That is,
we have the semantics $f_A(x) = \P_{D_A}[x \Sigma^{\omega}]$, which implies
that $f_A(\Sigma^t) = 1$ for all $t \geq 0$. Again, we say that a DPA $A =
\wa$ is \emph{proper} (pDPA) if its weights have a probabilistic
interpretation as follows: 
\begin{enumerate}
\item Initial weights represent a probability distribution over states:
  $\azero \geq 0$ and $\azero^\top \mat{1} = 1$. 
\item Final weights are all equal to one: $\ainf = \mat{1}$.
\item Transition weights represent probabilities of emitting a symbol and
  transitioning to a next state: $\A_\sigma \geq 0$ and $\sum_{\sigma \in
    \Sigma} \A_\sigma \mat{1} = \mat{1}$. 
\end{enumerate}
An example is provided in Figure~\ref{fig:dpa}.  As with GPA, there exist
improper DPA, and distributions $D_A$ on $\Sigma^\omega$ that cannot be
computed by any pDPA \citep{denis2008rational}. An important subclass of
pDPA are those for which there is no state with deterministic emissions. A
pDPA $A = \wa$ is \emph{det-free} if we have
$\norm{\A_\sigma}_{\infty} < 1$ for each $\sigma \in \Sigma$. Note that if
$A$ has $n$ states and there exists $\sigma$ such that
$\norm{\A_\sigma}_{\infty} = 1$, then there exists $i \in [n]$ such that
$\A_\sigma \mat{1} (i) = 1$ and therefore from state $i$ the automaton $A$
always emits symbol $\sigma$.

\begin{figure}[t]
\centering
\begin{subfigure}[c]{0.4\textwidth}
\centering
\begin{tikzpicture}[shorten >=1pt,node distance=3cm,auto]
\tikzstyle{every state}=[draw=blue!50,very thick,on grid,fill=blue!20,inner sep=3pt,minimum size=4mm]
\node[state with output,initial,initial text=$1$] (q_0) {$q_1$ \nodepart{lower} $0$};
\node[state with output] (q_1) [right of=q_0] {$q_2$ \nodepart{lower} $1/3$};
\path[->]
(q_0) edge [loop above] node
{\footnotesize $\begin{matrix}{{a , 1/4}}\end{matrix}$} ()
(q_0) edge [bend right] node [swap]
{\footnotesize $\begin{matrix}{{a , 1/4}}\\{b , 1/2}\end{matrix}$} (q_1)
(q_1) edge [loop above] node
{\footnotesize $\begin{matrix}{{a , 1/3}}\end{matrix}$} ()
(q_1) edge [bend right] node [swap]
{\footnotesize $\begin{matrix}{{b , 1/3}}\end{matrix}$} (q_0);
\end{tikzpicture}
\end{subfigure}
\begin{subfigure}[c]{0.4\textwidth}
\centering
\begin{minipage}[t]{\textwidth}
\begin{tabular}{cc}
$\azero =
\left[
\begin{matrix}
1\\
0\\
\end{matrix}
\right]$
&
$\A_a =
\left[
\begin{matrix}
1/4 & 1/4 \\
0 & 1/3\\
\end{matrix}
\right]$\\[.5cm]
$\ainf =
\left[
\begin{matrix}
0\\
1/3\\
\end{matrix}
\right]$
&
$\A_b =
\left[
\begin{matrix}
0 & 1/2 \\
1/3 & 0\\
\end{matrix}
\right]$
\end{tabular}
\end{minipage}
\end{subfigure}
\caption{Example of pGPA $A$ with two states.}
\label{fig:gpa}
\end{figure}
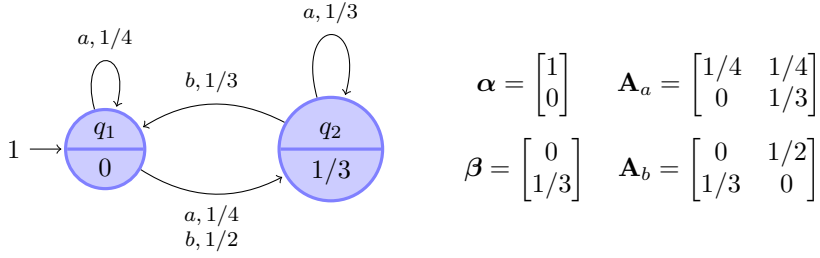

\begin{figure}[t]
\centering
\begin{subfigure}[c]{0.4\textwidth}
\centering
\begin{tikzpicture}[shorten >=1pt,node distance=3cm,auto]
\tikzstyle{every state}=[draw=blue!50,very thick,on grid,fill=blue!20,inner sep=3pt,minimum size=4mm]
\node[state with output,initial,initial text=$1$] (q_0) {$q_1$ \nodepart{lower} $1$};
\node[state with output] (q_1) [right of=q_0] {$q_2$ \nodepart{lower} $1$};
\path[->]
(q_0) edge [loop above] node
{\footnotesize $\begin{matrix}{{a , 1/4}}\end{matrix}$} ()
(q_0) edge [bend right] node [swap]
{\footnotesize $\begin{matrix}{{a , 1/4}}\\{b , 1/2}\end{matrix}$} (q_1)
(q_1) edge [loop above] node
{\footnotesize $\begin{matrix}{{a , 1/3}}\end{matrix}$} ()
(q_1) edge [bend right] node [swap]
{\footnotesize $\begin{matrix}{{b , 2/3}}\end{matrix}$} (q_0);
\end{tikzpicture}
\end{subfigure}
\begin{subfigure}[c]{0.4\textwidth}
\centering
\begin{minipage}[t]{\textwidth}
\begin{tabular}{cc}
$\azero =
\left[
\begin{matrix}
1\\
0\\
\end{matrix}
\right]$
&
$\A_a =
\left[
\begin{matrix}
1/4 & 1/4 \\
0 & 1/3\\
\end{matrix}
\right]$\\[.5cm]
$\ainf =
\left[
\begin{matrix}
1\\
1\\
\end{matrix}
\right]$
&
$\A_b =
\left[
\begin{matrix}
0 & 1/2 \\
2/3 & 0\\
\end{matrix}
\right]$
\end{tabular}
\end{minipage}
\end{subfigure}
\caption{Example of det-free pDPA $A$ with two states.}
\label{fig:dpa}
\end{figure}
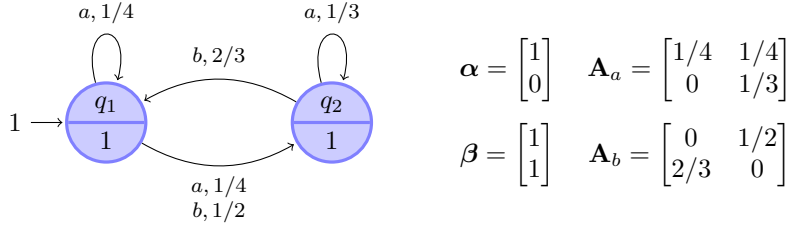

\section{Banach and Hilbert Spaces of Rational Functions}\label{sec:ratfuns}

In the literature on formal language theory, functions $f : \sstar \to \R$
are sometimes regarded as weighted languages and weighted automata
computing them as linear representations.  From an algebraic point of view,
one can identify a weighted language $f$ with an element of
the vector space $\R^{\sstar}$. 
This vector space contains several subspaces that play an important role in
the theory developed in this paper.  Furthermore, some of these spaces can
be endowed with additional operations and norms, yielding a wide variety of
algebraic and analytic structures.  To the best of our knowledge, analytic
properties of these spaces have never been systematically studied before in
the automata theory literature.  This section introduces the basic facts
and definitions that will be needed in the rest of the paper.  We also take
this as an opportunity to prove basic facts about these spaces and pinpoint
ideas that need to be developed further.  Overall, we hope this provides
the foundations for a much needed \emph{analytic theory of rational
  functions}.

A fundamental linear subspace of $\R^{\sstar}$ is the space of all rational
functions, which we denote by $\cR(\Sigma)$. 
That $\cR(\Sigma)$ is a linear subspace follows from the simple observations that if $f, g
\in \cR(\Sigma)$ and $c \in \R$, then $c f$ and $f + g$ are both rational
\citep{berstel2011noncommutative}. 
An important subspace of $\cR(\Sigma)$ is the space of all $f \in
\R^{\sstar}$ with finite support, which we denote by
$\finsup(\Sigma)$. That is, $f \in \finsup(\Sigma)$ if and only if
$|\supp(f)| < \infty$, where $\supp(f) = \{ x : f(x) \neq 0 \}$ is the
support of $f$. It is immediate from this definition that $\finsup(\Sigma)$
is a linear subspace of $\R^{\sstar}$. The containment $\finsup(\Sigma)
\subset \cR(\Sigma)$ follows from observing that every function with finite
support is rational \citep{berstel2011noncommutative}. 

Another important family of subspaces of $\R^{\sstar}$ are the ones containing
all functions with finite $p$-norm for some $1 \leq p \leq \infty$, which is
given by $\norm{f}_p^p = \sum_{x \in \sstar} |f(x)|^p$ for finite $p$, and
$\norm{f}_\infty = \sup_{x \in \sstar} |f(x)|$; we denote this space by
$\lp(\Sigma)$. 
Note that these are Banach spaces, and as with the usual theory of Banach
spaces over sequences we have 
$\lp(\Sigma) \subset \lpq(\Sigma)$ for $p < q$.
Of these, $\ltwo(\Sigma)$ can be endowed with the structure of a separable
Hilbert space with the inner product $\langle f, g \rangle = \sum_{x \in \sstar}
f(x) g(x)$.  Recall that in this case we have the \emph{Cauchy--Schwarz
inequality} $\langle f, g \rangle^2 \leq \norm{f}_2^2 \, \norm{g}_2^2$.
In addition, we have its generalization, \emph{H{\"o}lder's inequality}: given $f
\in \lp(\Sigma)$ and $g \in \lpq(\Sigma)$ with $p^{-1} + q^{-1} \leq 1$, then
$\norm{f \cdot g}_1 \leq \norm{f}_p \norm{g}_q$, where $(f \cdot g)(x) =
f(x) g(x)$ is the \emph{Hadamard product} between two languages.

By intersecting any of the previous subspaces with $\cR(\Sigma)$ one obtains
$\lprat(\Sigma) = \cR(\Sigma) \cap \lp(\Sigma)$, the normed vector space containing
all rational functions with finite $p$-norm.
In most cases the alphabet $\Sigma$ will be clear from the context and we will
just write $\cR$, $\finsup$, $\lp$, and $\lprat$.
It is important to note that although the $\lp$ spaces can be endowed with
the structure of a Banach or Hilbert space, the $\lprat$ spaces cannot, because
they are not complete; i.e.\ it is possible to find sequences of functions
in $\lprat$ whose limit in the topology induced by the corresponding norm
is not rational. 
For example, consider the function given by $f(x) = (k+1)^{-|x|}$ if $x$ is
a palindrome and $f(x) = 0$ otherwise. Since $\supp(f)$ is the set of all
palindromes then $f$ is not rational \citep{berstel2011noncommutative}, and
in addition $\norm{f}_1 < \infty$ by construction. Thus, we have we have $f
\in \lp \setminus \cR$ for any $1 \leq p \leq \infty$. Now, for any $l \geq
0$ let $f_l(x) = f(x)$ if $|x| \leq l$ and $f_l(x) = 0$ otherwise. Since
$f_l$ has finite support for every $l \geq 0$ we have $f_l \in
\lprat$. Finally, it is easy to check that $\lim_{l \to \infty} \norm{f -
  f_l}_p = 0$, implying that we have a sequence of functions in $\lprat$
converging to a non-rational function. Therefore none of the $\lprat$
spaces is complete.  Nonetheless, the following result shows that all $\lp$
spaces with $1 \leq p < \infty$ can be obtained as the completion of their
corresponding $\lprat$ space. 

\begin{theorem}
For any $1 \leq p < \infty$, the Banach space $\lp$ coincides with the
completion of $\lprat$ with respect to $\norm{\cdot}_p$. 
\end{theorem}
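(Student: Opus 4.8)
The plan is to invoke the standard functional-analytic fact that the completion of a normed space $Y$ is isometrically isomorphic to the closure of $Y$ taken inside any Banach space in which $Y$ sits as a subspace. Since $\lprat = \cR \cap \lp$ is by definition a subspace of the Banach space $\lp$, its abstract completion with respect to $\norm{\cdot}_p$ may be identified with the closure $\closure(\lprat)$ computed inside $\lp$ (a closed subspace of a complete space is complete, and it contains $\lprat$ as a dense subspace, so it is a completion; uniqueness of the completion up to isometric isomorphism does the rest). Thus the theorem reduces to the single claim that $\lprat$ is dense in $\lp$, i.e.\ that $\closure(\lprat) = \lp$.

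To establish density I would route through the intermediate subspace $\finsup$ of finitely supported functions. As recalled earlier in the paper, every finitely supported function is rational, so $\finsup \subseteq \cR$; moreover such a function trivially has finite $p$-norm, whence $\finsup \subseteq \lprat$. Taking closures in the chain $\finsup \subseteq \lprat \subseteq \lp$ gives $\closure(\finsup) \subseteq \closure(\lprat) \subseteq \lp$, so it suffices to prove the stronger statement that $\finsup$ is already dense in $\lp$.

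The density of $\finsup$ in $\lp$ for finite $p$ is the one real computation, and it exploits both the finiteness of the alphabet and the hypothesis $p < \infty$. Given $f \in \lp$, define for each $l \geq 0$ the truncation $f_l(x) = f(x)$ if $|x| \leq l$ and $f_l(x) = 0$ otherwise. Because $|\Sigma| < \infty$ the set $\Sigma^{\leq l}$ is finite, so $\supp(f_l)$ is finite and hence $f_l \in \finsup$. Then
\begin{equation*}
\norm{f - f_l}_p^p = \sum_{|x| > l} |f(x)|^p \longrightarrow 0 \quad \text{as } l \to \infty,
\end{equation*}
since the right-hand side is the tail of the convergent series $\sum_{x \in \sstar} |f(x)|^p = \norm{f}_p^p < \infty$. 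Hence $f_l \to f$ in $\lp$, so $f \in \closure(\finsup)$, and density follows.

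The only genuine subtlety is to see why the restriction $1 \leq p < \infty$ is essential: for $p = \infty$ the tail estimate breaks down, and indeed $\closure(\finsup)$ inside $\linf$ is the proper subspace of functions vanishing at infinity rather than all of $\linf$, so the conclusion would fail. Everything else is a routine application of the definition of completion together with the truncation estimate, which is in fact the same construction already carried out in the palindrome example preceding the statement.
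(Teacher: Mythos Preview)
Your proposal is correct and follows essentially the same route as the paper: reduce to the density of $\finsup$ in $\lp$ via the inclusion $\finsup \subseteq \lprat$, then approximate an arbitrary $f \in \lp$ by its length-$l$ truncations $f_l$ and observe that $\norm{f - f_l}_p^p$ is a tail of the convergent series $\sum_x |f(x)|^p$. Your write-up is in fact slightly more careful than the paper's, since you make explicit the identification of the abstract completion with the closure inside $\lp$ and give the direct tail-of-a-convergent-series argument rather than the paper's somewhat roundabout contradiction; you also add the remark on why the case $p=\infty$ fails, which the paper omits.
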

\begin{proof}
Fix $1 \leq p < \infty$. Since $\finsup \subset \lprat$, it suffices to
show that $\finsup$ is dense in $\lp$ with respect to the topology induced
by $\norm{\cdot}_p$. Let $f \in \lp$ and for any $l \geq 0$ define $f_l(x)
= f(x)$ if $|x| \leq l$ and $f_l(x) = 0$ otherwise. Clearly we have $f_l
\in \finsup$ by construction. To see that $f_l \to f$ in the topology of
$\lp$ as $l \to \infty$ we write $s_l = \norm{f_l - f}_p^p = \sum_{|x| > l}
|f(x)|^p$ and observe that we must have $s_l \to 0$. Otherwise we would
have $\lim_{l \to \infty} \sum_{|x| = l} |f(x)|^p > 0$, which is a
contradiction with $\norm{f}_p^p = \sum_{x \in \sstar} |f(x)|^p < \infty$. 
\end{proof}

\subsection{Bounded Hankel Operators}

Recall that Theorem~\ref{thm:fundamentalWFA} gives a characterization of
the functions $f : \sstar \to \R$ which have a Hankel matrix with finite
rank. Using the concepts introduced above we can interpret the Hankel
matrix as an operator on Hilbert spaces and ask when this operator
satisfies some nice properties. The main result of this section is a
characterization of the rational functions whose Hankel matrix induces a
bounded operator on $\ltwo$. 

Recall that a matrix $\mT \in \R^{\sstar \times \sstar}$ can be interpreted as the expression of a (possibly unbounded) linear operator $T : \ltwo \to \ltwo$ in terms of the canonical basis $(\me_x)_{x \in \sstar}$.
In the case of a Hankel matrix $\H_f$, we can see it is associated with an operator $H_f$ corresponding to the operation $g \mapsto H_f g$ with $(H_f g)(x) = \sum_y f(x y) g(y)$ (assuming the series converges).
An operator $T : \ltwo \to \ltwo$ is bounded if $\normop{T} = \sup_{\norm{g}_2 \leq 1} \norm{T g}_2 < \infty$.
Not all Hankel operators $H_f$ are bounded, but we shall give a necessary and sufficient condition for $H_f$ to be bounded when $f$ is rational. We start with the following a technical lemma.

\begin{lemma}\label{lem:denis}
Let $A = \wa$ be a WFA such that $f_A(x) \geq 0$ for all $x \in \sstar$. Define $\A = \sum_\sigma \A_\sigma$ and let $\rho = \rho(\A)$ be its spectral radius. Then the following hold:
\begin{enumerate}
\item If $A$ is minimal and $f_A \in \lonerat$, then $\rho < 1$.
\item If $\rho < 1$, then $f_A \in \lonerat$.
\end{enumerate}
\end{lemma}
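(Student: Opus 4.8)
The plan is to exploit the basic identity $f_A(\Sigma^t) = \azero^\top \A^t \ainf$, which follows by expanding $\sum_{|x| = t} f_A(x) = \azero^\top (\sum_\sigma \A_\sigma)^t \ainf = \azero^\top \A^t \ainf$ using $\A_x = \A_{x_1} \cdots \A_{x_t}$. Since $f_A \geq 0$ by hypothesis, grouping the nonnegative series $\norm{f_A}_1 = \sum_{x \in \sstar} f_A(x)$ by length gives
\[
\norm{f_A}_1 = \sum_{t \geq 0} \azero^\top \A^t \ainf,
\]
the rearrangement being justified by nonnegativity (Tonelli). Thus $f_A \in \lonerat$ is equivalent to convergence of this scalar series, and the whole lemma becomes a statement relating convergence of $\sum_t \azero^\top \A^t \ainf$ to the condition $\rho(\A) < 1$.

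For part (2) I would argue directly: $\rho < 1$ implies, by the fact recalled in the preliminaries, that $\sum_{t \geq 0} \A^t$ converges to $(\mI - \A)^{-1}$, whence $\norm{f_A}_1 = \azero^\top (\mI - \A)^{-1} \ainf < \infty$. Since $f_A$ is rational by construction, this gives $f_A \in \lonerat$.

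Part (1) is the substantial direction, and is where minimality must enter: convergence of $\sum_t \azero^\top \A^t \ainf$ alone controls only one bilinear ``coordinate'' of $\sum_t \A^t$ and cannot by itself force $\rho < 1$, since the offending eigenvalue could be invisible to $\azero$ and $\ainf$. The idea is to use nonnegativity more aggressively. For any fixed prefix $u$ and suffix $v$, the map $x \mapsto uxv$ is injective, so $\sum_{x} f_A(uxv) \leq \norm{f_A}_1 < \infty$; expanding as before shows that $\azero^\top \A_u \bigl(\sum_{t} \A^t\bigr) \A_v \ainf$ converges for every $u, v \in \sstar$. Writing $r_u = \A_u^\top \azero$ and $c_v = \A_v \ainf$, this says $\sum_t r_u^\top \A^t c_v$ converges for all $u,v$. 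Now minimality of $A$ is precisely the statement that the factorization $\H_f(p,s) = (\azero^\top \A_p)(\A_s \ainf)$ is a rank factorization (via Theorem~\ref{thm:fundamentalWFA}), so the families $\{r_u\}_u$ and $\{c_v\}_v$ each span $\R^n$. Expanding any $\me_i, \me_j$ as finite linear combinations of the $r_u$ and $c_v$ and using linearity, I conclude that every entry $\sum_t \A^t(i,j) = \me_i^\top \bigl(\sum_t \A^t\bigr) \me_j$ converges; hence the matrix series $\sum_t \A^t$ converges, and the same preliminary fact then yields $\rho < 1$.

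The main obstacle is this lifting step in part (1): passing from scalar convergence of $\sum_t r_u^\top \A^t c_v$ along the specific spanning families to entrywise convergence of $\sum_t \A^t$. This is exactly where minimality is indispensable, as it supplies the spanning property that lets finite linear combinations transfer convergence from the bilinear forms to all matrix entries. A secondary point requiring care is the interchange of the summations over $t$ and over $x$; this is harmless here because all terms are nonnegative, but I would state it explicitly to be safe.
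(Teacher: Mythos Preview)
Your proposal is correct and follows essentially the same approach as the paper. Both arguments hinge on the same two ingredients: nonnegativity gives $\sum_t \azero^\top \A_u \A^t \A_v \ainf \leq \norm{f_A}_1$ for every prefix $u$ and suffix $v$, and minimality makes the families $\{\A_u^\top\azero\}$ and $\{\A_v\ainf\}$ span $\R^n$. The only cosmetic difference is that the paper picks an eigenvector $\v$ of $\A$, expands $\v$ in these spanning bases, and bounds $\sum_k |\lambda|^k$ directly, whereas you expand the standard basis vectors $\me_i,\me_j$ and deduce entrywise convergence of $\sum_t \A^t$ before invoking the equivalence $\sum_t \A^t$ converges $\Leftrightarrow \rho(\A)<1$ from the preliminaries; these are interchangeable endings to the same argument.
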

\begin{proof}
We start by recalling that if $A = \wa$ is a minimal WFA with $n$ states,
then there exist sets of prefixes $\Ps = \{p_1,\ldots,p_n\}$ and suffixes
$\Ss = \{s_1,\ldots,s_n\}$ such that the sets of vectors $\{\azero^\top
\A_{p_1}, \ldots, \azero^\top \A_{p_n}\}$ and $\{\A_{s_1} \ainf, \ldots,
\A_{s_n} \ainf\}$ define two bases for $\R^n$
\citep{berstel2011noncommutative}. 
For convenience we will write $\balpha_{p_i}^\top = \azero^\top \A_{p_i}$
and $\bbeta_{s_j} = \A_{s_j} \ainf$. 

Now assume $f_A \in \lonerat$ and suppose $\lambda$ is an arbitrary
eigenvalue of $\A$. We need to show that $|\lambda| < 1$. Let $\v$ be any
eigenvector with eigenvalue $\lambda$ and suppose $\norm{\v}_2 = 1$. Using
the basis given by $\Ps$ and $\Ss$ we can find coefficients such that $\v =
\sum_{i \in [n]} \gamma_i \balpha_{p_i} = \sum_{j \in [n]} \delta_j
\bbeta_{s_j}$. For any $k \geq 0$ we can now write the following: 
\begin{align*}
|\lambda|^k &= |\lambda|^k \v^\top \v = \left|\v^\top (\lambda^k \v)\right| =
\left|\v^\top \A^k \v\right| \\
&=
\left|\left(\sum_i  \gamma_i \balpha_{p_i}^\top \right) \left(\sum_\sigma \A_\sigma\right)^k \left(\sum_j \delta_j \bbeta_{s_j}\right)\right| \\
&\leq
\sum_{i,j} |\gamma_i| |\delta_j| \sum_{x \in p_i \Sigma^k s_j} |f_A(x)| \enspace.
\end{align*}
Since we have $f_A \in \lonerat$ by hypothesis, for fixed $i$ and $j$ we have $\sum_{k \geq 0} \sum_{x \in p_i \Sigma^k s_j} |f_A(x)| \leq \sum_{x \in \sstar} |f_A(x)| < \infty$. Therefore we can conclude that $\sum_{k \geq 0} |\lambda|^k < \infty$, which necessarily implies $|\lambda| < 1$.

To obtain the converse suppose $\rho(\A) < 1$ and note that because $f_A$ is non-negative we have
\begin{equation}
\norm{f_A}_1 = \sum_{x \in \sstar} |f(x)| = \sum_{x \in \sstar} f(x) = \sum_{k \geq 0} \azero^\top \A^k \ainf < \infty \enspace.
\end{equation}
Note that this implication does not require the minimality of $A$.
\end{proof}

The following theorem is the main result of this section.

\begin{theorem}\label{thm:iffsvd}
Let $f : \sstar \to \R$ be a rational function. The Hankel operator $H_f$ is bounded if and only if $f \in \ltworat$.
\end{theorem}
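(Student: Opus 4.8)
The plan is to prove the two implications separately, with the forward direction being essentially immediate and the converse containing all the work.

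For the ``only if'' direction, suppose $H_f$ is bounded. The empty string gives a unit basis vector $\me_\varepsilon \in \ltwo$ with $\me_\varepsilon(y) = 1$ iff $y = \varepsilon$, and a direct computation gives $(H_f \me_\varepsilon)(x) = \sum_y f(xy)\me_\varepsilon(y) = f(x)$, so that $f = H_f \me_\varepsilon$. Boundedness of $H_f$ then forces $f \in \ltwo$, and since $f$ is rational by assumption we conclude $f \in \ltworat$ (in fact $\norm{f}_2 \le \normop{H_f}$).

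For the ``if'' direction, assume $f \in \ltworat$ and, without loss of generality, fix a minimal WFA $A = \wa$ of dimension $n = \rank(f)$ realizing $f$. The key is to factor $H_f$ through the finite-dimensional state space $\R^n$. Writing $f(xy) = \azero^\top \A_x \A_y \ainf$ and summing over $y$ suggests introducing the backward map $\B\colon \ltwo \to \R^n$, $\B g = \sum_y g(y)\A_y\ainf$, and the forward map $\mathbf{E}\colon \R^n \to \ltwo$, $(\mathbf{E}\v)(x) = \azero^\top\A_x\v$, so that $H_f = \mathbf{E}\B$. Since $\R^n$ is finite-dimensional, $H_f$ will be bounded as soon as both factors are, and the boundedness of each factor reduces to a single square-summability condition: $\B$ is bounded precisely when each backward coordinate function $\psi_j(x) = (\A_x\ainf)(j)$ lies in $\ltwo$ (its $j$-th output being $\langle g, \psi_j\rangle$), and $\mathbf{E}$ is bounded precisely when each forward coordinate function $\phi_i(x) = (\azero^\top\A_x)(i)$ lies in $\ltwo$ (since $\mathbf{E}\me_i = \phi_i$).

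It therefore remains to show that $f \in \ltworat$ forces all the $\phi_i$ and $\psi_j$ to be square-summable, and here I would use the minimality of $A$: a minimal WFA admits prefixes $\{p_1,\dots,p_n\}$ and suffixes $\{s_1,\dots,s_n\}$ whose associated vectors $\{\azero^\top\A_{p_i}\}$ and $\{\A_{s_j}\ainf\}$ each form a basis of $\R^n$ \citep{berstel2011noncommutative}. Expanding $\me_i$ in the suffix basis, $\me_i = \sum_j c_{ij}\A_{s_j}\ainf$, gives $\phi_i(x) = \sum_j c_{ij} f(x s_j)$; symmetrically, expanding $\me_j$ in the prefix basis yields $\psi_j$ as a finite combination of the maps $x \mapsto f(p_i x)$. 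Because distinct $x$ produce distinct strings $x s_j$, each map $x\mapsto f(x s_j)$ is a restriction of $f$ to a subset of $\sstar$ and hence satisfies $\sum_x |f(x s_j)|^2 \le \norm{f}_2^2 < \infty$; so each $\phi_i$ (and likewise each $\psi_j$) is a finite linear combination of $\ltwo$ functions, hence itself in $\ltwo$. Combined with the previous paragraph, this yields boundedness of $H_f$.

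The main obstacle is exactly this square-summability step, since finite rank alone does not guarantee boundedness of a Hankel operator; the crux is converting the global hypothesis $f\in\ltwo$ into square-summability of the individual state functions, which is where minimality (via the prefix/suffix bases) does the real work. A secondary technical point to handle carefully is justifying that the series defining $\B g$ converges in $\R^n$ and that the interchange $(H_f g)(x) = \azero^\top\A_x \sum_y g(y)\A_y\ainf = \sum_y f(xy)g(y)$ is legitimate; this follows from convergence of each coordinate $\langle g,\psi_j\rangle$ by Cauchy--Schwarz together with continuity of the fixed finite-dimensional functional $\azero^\top\A_x(\cdot)$. An alternative route would apply Lemma~\ref{lem:denis} to the nonnegative rational function $f^2$ realized by the Kronecker-square automaton $A\otimes A$, using $\norm{f^2}_1 = \norm{f}_2^2$ to control the Gramian series $\sum_x \A_x\A_x^\top$; but this requires reconciling minimality between $f^2$ and $A\otimes A$, so the direct argument above seems cleaner.
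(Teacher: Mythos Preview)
Your argument is correct and takes a genuinely different route from the paper's. The paper's proof of the ``if'' direction bounds $\norm{H_f g}_2^2$ by Cauchy--Schwarz applied row-by-row, obtaining $\norm{H_f g}_2^2 \leq \sum_{z}(1+|z|)f(z)^2$, and then invokes Lemma~\ref{lem:denis} on a \emph{minimal} WFA for $f^2$ together with a Jordan-form growth estimate to show this weighted sum is finite. Your approach instead factors $H_f$ through $\R^n$ via the forward--backward decomposition of a minimal WFA for $f$, and uses the prefix/suffix bases (the same minimality fact quoted in the proof of Lemma~\ref{lem:denis}) to write each coordinate function $\phi_i$, $\psi_j$ as a finite combination of shifts $x \mapsto f(xs_j)$ or $x \mapsto f(p_i x)$, each of which is trivially square-summable since shifting is injective on $\sstar$. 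This is more elementary: it bypasses Lemma~\ref{lem:denis} and the spectral-radius/Jordan analysis entirely, and as a bonus directly establishes that the Gramians $\mG_p$ and $\mG_s$ of a minimal WFA are well-defined whenever $f \in \ltworat$ (the ``if'' half of Proposition~\ref{prop:definedgramians}), which the paper obtains only later via the SVA construction. What the paper's approach buys in exchange is an intrinsic quantitative bound $\normop{H_f}^2 \leq \sum_z (1+|z|)f(z)^2$ that does not depend on a choice of realization, whereas your bound $\normop{H_f} \leq \normop{\mathbf{E}}\normop{\B}$ depends on the basis-change coefficients $c_{ij}$, $d_{ji}$.
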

\begin{proof}
It is easy to see that the membership $f \in \ltwo$ is a necessary condition for the boundedness of $\H_f$. Indeed, by noting that $f$ appears as the first column of $\H_f$ we have $f = H_f e_{\varepsilon}$, and since $\norm{e_{\varepsilon}}_2 = 1$ we have $\norm{f}_2 = \norm{H_f e_{\varepsilon}}_2 \leq \normop{H_f}$.

Next we prove sufficiency.
Let $g \in \ltwo$ with $\norm{g}_2 = 1$ and for any $x \in \sstar$ define the function $f_x(y) = f(x y)$. With this notation we can write
\begin{align}
\norm{H_f g}_2^2 &= \sum_{x \in \sstar} \left(\sum_{y \in \sstar} f(x y) g(y)\right)^2
= \sum_{x \in \sstar} \langle f_x, g \rangle^2 \notag \\
&\leq \norm{g}_2^2 \sum_{x \in \sstar} \norm{f_x}_2^2
= \sum_{x \in \sstar} \sum_{y \in \sstar} f(x y)^2 \notag \\
&= \sum_{z \in \sstar} (1+|z|) f(z)^2 \label{eqn:bounded}
\enspace,
\end{align}
where we used Cauchy--Schwarz's inequality, and the fact that a string $z$ can be written as $z = xy$ in $1 + |z|$ different ways.

Recall that $f \in \ltworat$ implies $f^2 \in \lonerat$. Let $A = \wa$ be a
minimal WFA for $f^2$ and write $\A = \sum_{\sigma} \A_\sigma$. Note we
have $\rho = \rho(\A) < 1$ by Lemma~\ref{lem:denis}. Suppose $\A = \mW \mJ
\mW^{-1}$ is the Jordan canonical form of $\A$ and let $m$ denote the
maximum algebraic multiplicity of any eigenvalue of $\A$. By computing the $k$th power of the largest Jordan block
\begin{align*}
\mJ_{\lambda} =
\left[
\begin{array}{ccccc}
\lambda & 1       & 0       & \cdots  & 0 \\
0       & \lambda & 1       & \cdots  & 0 \\
\vdots  & \vdots  & \vdots& \ddots  & \vdots \\
0       & 0       & 0        & \lambda & 1       \\
0       & 0       & 0       & 0       & \lambda
\end{array}
\right] \in \R^{m' \times m'}
\end{align*}
associated with the maximal eigenvalue $|\lambda| = \rho$ (with $m' \leq m$)
one can see there exists
a constant $c > 0$ such that the following holds for all $k \geq 0$: 
\begin{equation*}
\sum_{x \in \Sigma^k} f(x)^2 = \azero^\top \A^k \ainf =
\azero^\top \mW \mJ^k \mW^{-1} \ainf \leq c k^{m-1} \rho^k \enspace.
\end{equation*}
This is a standard calculation in the analysis of non-reversible Markov chains; see Fact 3 in \citep{rosenthal1995convergence} for more details.
Now we use that $\rho < 1$, in which case this bound yields
\begin{equation*}
\sum_{z \in \sstar} |z| f(z)^2 = \sum_{k \geq 0} k \sum_{z \in \Sigma^k} f(z)^2 \leq c \sum_{k \geq 0} k^m \rho^k < \infty \enspace.
\end{equation*}
Plugging this into \eqref{eqn:bounded} we can conclude that $\norm{H_f g}_2$ is finite and therefore $H_f$ is bounded.
\end{proof}

\section{The Singular Value Automaton}\label{sec:sva}

The central object of study in this paper is the singular value automaton
(SVA). Essentially, this is a canonical form for weighted automata which is
tightly connected to the singular value decomposition of the corresponding
Hankel matrix. We will start this section by establishing some fundamental
preliminary results on the relation between minimal WFAs and rank
factorizations of Hankel matrices. By assuming that one such Hankel matrix
admits a singular value decomposition, the relation above will lead us
directly to the definition of singular value automaton. We then proceed to
explore necessary conditions for the existence of SVA. These will
essentially say that only rational functions in $\ltworat$ admit a singular
value automaton, provide some easily testable conditions, and guarantee the
existence of an SVA for a large class of probabilistic automata.

\subsection{Correspondence between Minimal WFA and Rank Factorizations}

An important operation on WFA is conjugation by an invertible matrix.
Let $A = \wa$ be a WFA of dimension $n$ and suppose $\mQ \in \R^{n \times n}$ is
invertible. Then we can define the \emph{conjugate} of $A$ by $\mQ$ as:
\begin{equation}\label{eqn:wfaconjugation}
A' = A^{\mQ} = \mQ^{-1} A \mQ = \waQ \enspace.
\end{equation}
It follows immediately that $f_A = f_{A'}$ since, at every step in the
computation of $f_{A'}(x)$, the products $\mQ \mQ^{-1}$ vanish. 
This means that the function computed by a WFA is invariant under conjugation,
and that given a rational function $f$, there exist infinitely many WFA realizing $f$.
The following result offers a full characterization of all minimal WFA
realizing a particular rational function. 

\begin{theorem}[\citep{berstel2011noncommutative}]\label{thm:conjugacy}
If $A$ and $B$ are minimal WFA realizing the same function, then $B =
A^{\mQ}$ for some invertible $\mQ$. 
\end{theorem}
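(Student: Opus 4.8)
The plan is to lift the problem to the Hankel matrix $\H_f$ and exploit the essential uniqueness of rank factorizations. Since $A = \wa$ is minimal with $\dim(A) = n = \rank(f)$, the string-indexed matrices $\mP_A \in \R^{\sstar \times n}$ with rows $\mP_A(p,:) = \azero^\top \A_p$ and $\mS_A \in \R^{n \times \sstar}$ with columns $\mS_A(:,s) = \A_s \ainf$ satisfy $\H_f = \mP_A \mS_A$, because $\H_f(p,s) = f(ps) = \azero^\top \A_p \A_s \ainf$. The basis property of minimal WFA recalled in the proof of Lemma~\ref{lem:denis}---that there exist prefixes $\Ps$ and suffixes $\Ss$ for which $\{\azero^\top \A_{p_i}\}$ and $\{\A_{s_j} \ainf\}$ are bases of $\R^n$---says precisely that $\mP_A$ has full column rank $n$ and $\mS_A$ has full row rank $n$, so this is a (generalized) rank factorization of $\H_f$. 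Writing $B = \wb$, the same construction applied to $B$ yields a second rank factorization $\H_f = \mP_B \mS_B$ of the very same matrix, since $f_A = f_B$.

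First I would show that any two such factorizations differ by a conjugation. Full column rank of $\mP_A$ supplies a left inverse $L_A$ (built, e.g., by inverting the invertible $n \times n$ sub-block of $\mP_A$ whose rows are indexed by $\Ps$ and pre-composing with the projection onto those rows), and full row rank of $\mS_A$ supplies a right inverse $\mS_A^{R}$ with $\mS_A \mS_A^{R} = \mI$. From $\H_f = \mP_A \mS_A$ and $\mP_A = \H_f \mS_A^{R}$ one sees that $\mP_A$, $\mP_B$ and $\H_f$ share the same column space, so setting $\mQ = L_A \mP_B$ gives $\mP_A \mQ = \mP_A L_A \mP_B = \mP_B$, as $\mP_A L_A$ fixes that common column space. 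Cancelling $\mP_A$ on the left in $\mP_A \mS_A = \mP_B \mS_B = \mP_A \mQ \mS_B$ yields $\mS_B = \mQ^{-1} \mS_A$ once $\mQ$ is invertible; invertibility follows by the symmetric construction of $\mQ'$ with $\mP_A = \mP_B \mQ'$ and cancelling to get $\mQ' \mQ = \mQ \mQ' = \mI$. Reading off the rows indexed by $\varepsilon$ in $\mP_B = \mP_A \mQ$ gives $\bzero^\top = \azero^\top \mQ$, i.e. $\bzero = \mQ^\top \azero$, and the column indexed by $\varepsilon$ in $\mS_B = \mQ^{-1} \mS_A$ gives $\binf = \mQ^{-1} \ainf$.

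It remains to recover the transition matrices, where the key device is the shifted Hankel matrix. For each $\sigma \in \Sigma$ define $\H_f^{\sigma}(p,s) = f(p \sigma s)$; then $\H_f^{\sigma} = \mP_A \A_\sigma \mS_A = \mP_B \B_\sigma \mS_B = \mP_A \mQ \B_\sigma \mQ^{-1} \mS_A$. Cancelling $\mP_A$ on the left and $\mS_A$ on the right, using the one-sided inverses above, yields $\A_\sigma = \mQ \B_\sigma \mQ^{-1}$, that is $\B_\sigma = \mQ^{-1} \A_\sigma \mQ$, which together with the initial and final identities shows $B = A^{\mQ}$. The only genuinely delicate point is the handling of the infinite, string-indexed factors $\mP_A$ and $\mS_A$: the cancellations must be justified through honest one-sided inverses rather than pseudo-inverses (an infinite Gram matrix such as $\mP_A^\top \mP_A$ need not converge for general $f \in \cR$), and this is exactly where the finite prefix and suffix bases guaranteed by minimality carry the argument. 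Everything else is routine linear algebra.
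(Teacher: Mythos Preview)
The paper does not prove this theorem; it is quoted from \citep{berstel2011noncommutative} as a known result and then \emph{used} (for instance in the proof of Proposition~\ref{prop:duality}). So there is no in-paper proof to compare against.

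Your argument is correct. The first half --- that two rank factorizations $\H_f = \mP_A \mS_A = \mP_B \mS_B$ of the same finite-rank matrix are related by an invertible $\mQ$ with $\mP_B = \mP_A \mQ$ and $\mS_B = \mQ^{-1} \mS_A$ --- is exactly the mechanism the paper appeals to in Proposition~\ref{prop:duality}, though there the logical direction is reversed (they assume Theorem~\ref{thm:conjugacy} to deduce the factorization correspondence, whereas you derive the change of basis directly from the shared column space). Your second half, recovering $\B_\sigma = \mQ^{-1}\A_\sigma\mQ$ from the shifted Hankel identity $\H_f^{\sigma} = \mP_A \A_\sigma \mS_A = \mP_B \B_\sigma \mS_B$, is the standard device and is carried out correctly. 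Your explicit caution about building the one-sided inverses from finite invertible sub-blocks indexed by the prefix and suffix bases (rather than from $\mP_A^\top \mP_A$, which need not converge for general $f \in \cR$) is well placed and is precisely what makes the cancellation steps rigorous over the infinite string-indexed factors.
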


The goal of this section is to provide a ``lifted'' version of this result
establishing a connection between every pair of rank factorizations of the
Hankel matrix $\H_f$, and then show that these rank factorizations are in
bijection with all minimal WFA for $f$.  We start by recalling how every
minimal WFA realizing $f$ induces a rank factorization for $\H_f$.

Suppose $f$ is a rational function and $A = \wa$ is a WFA realizing $f$.
The \emph{forward matrix} of $A$ is defined as the infinite matrix $\mP_A
\in \R^{\sstar \times n}$ with entries given by $\mP_A(p,:) = \azero^\top
\A_p$ for any string $p \in \sstar$; sometimes we will refer to the strings
indexing rows in a forward matrix as \emph{prefixes}. 
Similarly, let $\mS_A \in \R^{\sstar \times n}$ be the \emph{backward
  matrix} of $A$ given by $\mS_A(s,:) = (\A_s \ainf)^\top$ for any string
$s \in \sstar$; strings indexing rows in a backward matrix are commonly
called \emph{suffixes}. 
Now note that for every $p, s \in \sstar$ we have
\begin{equation}
\H_f(p,s) = f(p s) = (\azero^\top \A_p) \left(\A_s \ainf \right) = \sum_{i
  \in [n]} \mP_A(p,i) \mS_A(s,i) = \mP_A(p,:) \mS_A^\top(:,s) \enspace. 
\end{equation}
Therefore, we see that the forward and backward matrix of $A$ yield the
factorization $\H_f = \mP_A \mS_A^\top$. 
This is known as the \emph{forward--backward} (FB) factorization of $\H_f$
induced by $A$ \citep{mlj13spectral}. 

Recall that a WFA $A$ with $n$ states is called \emph{reachable} when the
space spanned by all the forward vectors has dimension $n$; that is: 
\begin{equation}
\dim \linspan \{ \azero^\top \A_x \;|\; x \in \sstar \} = \rank(\mP_A) = n \enspace.
\end{equation}
Similarly, $A$ is called \emph{observable} if the dimension of the space
spanned by the backward vectors equals $n$; that is: 
\begin{equation}
\dim \linspan \{ \A_x \ainf \;|\; x \in \sstar \} = \rank(\mS_A) = n \enspace.
\end{equation}
Note that when $A$ is minimal, the number of columns of the forward and
backward matrices equals the rank of $\H_f$, and therefore the FB
factorization is a rank factorization. Therefore, it follows from
Theorem~\ref{thm:fundamentalWFA} the useful characterization of minimality
saying that a WFA $A$ is minimal if and only if it is both reachable and
observable. 

The following result shows that every rank factorization of $\H_f$ is
actually an FB factorization. We can understand this result as a refinement
of Theorem~\ref{thm:fundamentalWFA} in the sense that given a finite-rank
Hankel matrix, it provides a characterization of all its possible rank
factorizations. 

\begin{proposition}\label{prop:duality}
Let $f$ be rational and suppose $\H_f = \mP \mS^\top$ is a rank factorization. Then
there exists a minimal WFA $A$ realizing $f$ which induces this factorization.
\end{proposition}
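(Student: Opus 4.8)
The plan is to deduce the statement from the conjugation invariance of the function computed by a WFA, by comparing the given rank factorization against the forward--backward factorization induced by an arbitrary minimal realization of $f$. First I would fix such a realization: since $f$ is rational, Theorem~\ref{thm:fundamentalWFA} gives $n := \rank(f) = \rank(\H_f)$, and there exists a minimal WFA $B = \wb$ with $\dim(B) = n$. Being minimal, $B$ is reachable and observable, so its forward and backward matrices $\mP_B, \mS_B \in \R^{\sstar \times n}$ have full column rank $n$ and satisfy the FB rank factorization $\H_f = \mP_B \mS_B^\top$. This puts two rank factorizations of $\H_f$ side by side, the given $\mP \mS^\top$ and $\mP_B \mS_B^\top$, both of inner dimension $n$.

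Next I would produce an invertible change of basis relating them: I claim there is an invertible $\mQ \in \R^{n \times n}$ with $\mP = \mP_B \mQ$ and $\mS = \mS_B \mQ^{-\top}$. Indeed, writing each column of $\H_f = \mP \mS^\top$ as $\H_f(:,s) = \mP (\mS(s,:))^\top$ shows the column space of $\H_f$ is contained in that of $\mP$; as both have dimension $n$ they coincide, and likewise both coincide with the column space of $\mP_B$. Since the $n$ columns of $\mP_B$ form a basis of this common subspace, every column of $\mP$ is a unique combination of them, yielding $\mP = \mP_B \mQ$ for a unique $\mQ$, which is invertible because $\rank(\mP) = n$. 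Substituting into $\mP \mS^\top = \mP_B \mS_B^\top$ gives $\mP_B(\mQ \mS^\top - \mS_B^\top) = \mat{0}$, and the full column rank (hence left-injectivity) of $\mP_B$ forces $\mS_B^\top = \mQ \mS^\top$, i.e.\ $\mS = \mS_B \mQ^{-\top}$.

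Finally I would conjugate and verify. Set $A = B^{\mQ} = \langle \mQ^\top \bzero,\, \mQ^{-1} \binf,\, \{\mQ^{-1} \B_\sigma \mQ\} \rangle$. By the invariance observed right after~\eqref{eqn:wfaconjugation} we have $f_A = f_B = f$, and $\dim(A) = n$, so $A$ is minimal. To see that $A$ induces the given factorization, note that the transition operator of $A$ satisfies $\A_x = \mQ^{-1} \B_x \mQ$ (the inner factors $\mQ \mQ^{-1}$ telescope), so $\mP_A(p,:) = (\mQ^\top \bzero)^\top \mQ^{-1} \B_p \mQ = \bzero^\top \B_p \mQ = \mP_B(p,:) \mQ$, giving $\mP_A = \mP_B \mQ = \mP$; and $\mS_A(s,:) = (\mQ^{-1} \B_s \binf)^\top = (\B_s \binf)^\top \mQ^{-\top} = \mS_B(s,:) \mQ^{-\top} = \mS(s,:)$, giving $\mS_A = \mS_B \mQ^{-\top} = \mS$. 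Hence $A$ is a minimal WFA realizing $f$ and inducing $\H_f = \mP \mS^\top$.

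The step I expect to require the most care is the construction of the intertwiner $\mQ$, since $\mP, \mS, \mP_B, \mS_B$ are matrices indexed by the infinite set $\sstar$, whereas the usual uniqueness of rank factorizations up to an invertible factor is a finite-dimensional statement. The resolution is that finiteness of $\rank(\H_f)$ makes all column spaces involved finite-dimensional subspaces of $\R^{\sstar}$, so the existence and invertibility of $\mQ$ reduce to ordinary linear algebra with no convergence or boundedness concerns; the remaining verifications are then routine telescoping identities.
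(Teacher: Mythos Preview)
Your proposal is correct and takes essentially the same approach as the paper: fix an arbitrary minimal realization $B$, compare the two rank factorizations to extract an invertible change of basis $\mQ$ with $\mP_B \mQ = \mP$ (hence $\mS^\top = \mQ^{-1}\mS_B^\top$), and set $A = B^{\mQ}$. The paper's version is terser --- it asserts the existence of $\mQ$ from the equality of column spans and leaves the verification that $B^{\mQ}$ induces $(\mP,\mS)$ implicit --- whereas you spell out both the column-space argument and the telescoping check, which is fine.
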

\begin{proof}
Let $B$ be any minimal WFA realizing $f$ and denote $n = \rank(f)$. Then we have
two rank factorizations $\mP \mS^\top = \mP_B \mS_B^\top$ for the Hankel matrix
$\H_f$.
Therefore, the columns of $\mP$ and $\mP_B$ both span the same $n$-dimensional
sub-space of $\R^{\sstar}$, and there exists a change of basis $\mQ \in \R^{n
\times n}$ such that $\mP_B \mQ = \mP$. This implies we must also have
$\mS^\top = \mQ^{-1} \mS_B^\top$.
It follows that $A = B^{\mQ}$ is a minimal WFA for $f$ inducing the desired
rank factorization.
\end{proof}

\subsection{Definition of Singular Value Automaton}

It is well-known that the compact singular value decomposition of a matrix
is a rank-revealing decomposition in the sense that the intermediate
dimensions of the decomposition correspond to the rank of the matrix. This
decomposition can be used to construct rank factorizations for said
matrix. The singular value automaton links this idea with the minimal WFA
identified in Proposition~\ref{prop:duality}.

Recall that if $\H_f$ is a Hankel matrix of rank $n$ admitting a singular
value decomposition, then there exists a square matrix
$\mD = \diag(\sval_1, \ldots, \sval_n) \in \R^{n \times n}$ and two
infinite matrices $\mU, \mV \in \R^{\sstar \times n}$ with orthonormal
columns (i.e.\ $\mU^\top \mU = \mV^\top \mV = \mI$) such that
$\H_f = \mU \mD \mV^\top$ with $\mU, \mV \in \R^{\sstar \times n}$. By
splitting this decomposition into two parts we obtain the rank
factorization $\H_f = (\mU \mD^{1/2}) (\mV \mD^{1/2})^\top$. Thus, whenever
$\H_f$ admits an SVD, we can invoke Proposition~\ref{prop:duality} to
conclude there exists a minimal WFA realizing $f$ whose induced FB rank
factorization coincides with the one we obtained above from SVD. Putting
this into a formal statement we get the following theorem.

\begin{theorem}\label{thm:sva}
  Let $f$ be a rational function and suppose $\H_f$ admits a compact SVD
  $\H_f = \mU \mD \mV^\top$. Then there exists a minimal WFA $A$ for $f$
  inducing the rank factorization
  $\H_f = (\mU \mD^{1/2}) (\mV \mD^{1/2})^\top$. That is, $A$ is a WFA for
  $f$ with FB rank factorization given by $\mP_A = \mU \mD^{1/2}$ and
  $\mS_A = \mV \mD^{1/2}$.
\end{theorem}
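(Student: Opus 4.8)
The plan is to reduce the statement to a direct application of Proposition~\ref{prop:duality}, using the fact that splitting a compact SVD through the square root of its diagonal yields a genuine rank factorization of $\H_f$.

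First I would observe that since $f$ is rational, Theorem~\ref{thm:fundamentalWFA} guarantees that $\H_f$ has finite rank equal to $n = \rank(f)$, so the compact SVD $\H_f = \mU \mD \mV^\top$ has inner dimension exactly $n$, with $\mU, \mV \in \R^{\sstar \times n}$ satisfying $\mU^\top \mU = \mV^\top \mV = \mI$ and $\mD = \diag(\sval_1, \ldots, \sval_n)$ where all $\sval_i > 0$. Because every singular value is strictly positive, the matrix $\mD^{1/2} = \diag(\sval_1^{1/2}, \ldots, \sval_n^{1/2})$ is well-defined and invertible, which lets me write
\[
\H_f = \mU \mD \mV^\top = (\mU \mD^{1/2})(\mV \mD^{1/2})^\top \enspace.
\]

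Next I would verify that setting $\mP = \mU \mD^{1/2}$ and $\mS = \mV \mD^{1/2}$ produces a rank factorization in the sense required by Proposition~\ref{prop:duality}; that is, both factors must have full column rank $n$. This follows because $\mU$ has $n$ orthonormal columns (hence rank $n$) and $\mD^{1/2}$ is invertible, so $\rank(\mP) = \rank(\mU \mD^{1/2}) = n$, and symmetrically $\rank(\mS) = n$. Together with $\rank(\H_f) = n$, this is exactly the definition of a rank factorization $\H_f = \mP \mS^\top$.

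Finally I would invoke Proposition~\ref{prop:duality} with this particular rank factorization: the proposition guarantees a minimal WFA $A$ realizing $f$ whose induced forward--backward factorization coincides with $\H_f = \mP \mS^\top$, i.e.\ with $\mP_A = \mP = \mU \mD^{1/2}$ and $\mS_A = \mS = \mV \mD^{1/2}$, which is precisely the claim. The only point requiring any care is the second step --- confirming that the SVD split is a \emph{legitimate} rank factorization rather than merely some factorization --- and this hinges entirely on the compactness of the SVD (inner dimension equal to the rank) together with the strict positivity $\sval_n > 0$ ensuring $\mD^{1/2}$ is invertible; once these are in place the result is immediate from the earlier machinery, so I do not anticipate any substantial obstacle.
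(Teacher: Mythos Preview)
Your proposal is correct and follows essentially the same approach as the paper: the paper also splits the compact SVD as $\H_f = (\mU \mD^{1/2})(\mV \mD^{1/2})^\top$, observes this is a rank factorization, and invokes Proposition~\ref{prop:duality} to obtain the desired minimal WFA. Your version is slightly more explicit in verifying that the split is a genuine rank factorization (checking full column rank via the invertibility of $\mD^{1/2}$), but the argument is the same.
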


The WFA given by the above theorem can be considered as a canonical form
for a rational function whose Hankel matrix admits an SVD. This is made
formal in the following definition. Next section will provide conditions
for the existence of this automaton. 

\begin{definition}
Let $f \in \ltworat$. A \emph{singular value automaton} (SVA) for $f$ is a
minimal WFA $A$ realizing $f$ such that the FB rank factorization of $\H_f$
induced by $A$ has the form given in Theorem~\ref{thm:sva}.
\end{definition}

Note the SVA provided by Theorem~\ref{thm:sva} is unique up to the same
conditions in which SVD is unique. In particular, it is easy to verify that
if the Hankel singular values of $f \in \ltworat$ satisfy the strict
inequalities $\sval_1 > \cdots > \sval_n$, then the transition weights of
the SVA $A$ of $f$ are uniquely defined, and the initial and final weights
are uniquely defined up to sign changes. 

\subsection{Rational Functions Admitting an SVA}\label{sec:ratfunswithsva}

By leveraging the fact that every compact operator on a Hilbert space
admits a singular value decomposition and our Theorem~\ref{thm:iffsvd}
characterizing rational functions with bounded Hankel operator, we
immediately get a characterization of rational functions admitting an SVA. 

\begin{theorem}
A rational function $f : \sstar \to \R$ admits an SVA if and only if $f \in \ltworat$.
\end{theorem}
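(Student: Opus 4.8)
The plan is to chain together the boundedness characterization of Theorem~\ref{thm:iffsvd} with the elementary functional-analytic fact that a bounded finite-rank operator is compact, and then to invoke Theorem~\ref{thm:sva} to pass from an SVD of the Hankel matrix to the existence of the canonical automaton. Since $f$ is rational, Theorem~\ref{thm:fundamentalWFA} guarantees that $\H_f$ has finite rank $n = \rank(f)$, so throughout we are working with a \emph{finite-rank} operator $H_f$ on the separable Hilbert space $\ltwo$. The organizing observation is that for such an operator the three properties---boundedness, compactness, and the existence of a compact SVD $\H_f = \mU \mD \mV^\top$---are all equivalent, and each is in turn equivalent to $f \in \ltworat$.

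For the sufficiency direction I would assume $f \in \ltworat$ and apply Theorem~\ref{thm:iffsvd} to conclude that $H_f$ is bounded. Being a bounded finite-rank operator, $H_f$ is compact (the sufficient condition for compactness recalled in the preliminaries), and therefore admits a Hilbert--Schmidt decomposition, i.e.\ a compact SVD $\H_f = \mU \mD \mV^\top$ with $\mU, \mV \in \R^{\sstar \times n}$ having orthonormal columns and $\mD = \diag(\sval_1, \ldots, \sval_n)$ with strictly positive diagonal. Theorem~\ref{thm:sva} then yields a minimal WFA $A$ realizing $f$ whose FB rank factorization is $\mP_A = \mU \mD^{1/2}$ and $\mS_A = \mV \mD^{1/2}$; by definition this $A$ is an SVA for $f$.

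For the necessity direction I would assume $f$ admits an SVA $A$. By the definition of SVA, the FB rank factorization induced by $A$ has the form $\mP_A = \mU \mD^{1/2}$, $\mS_A = \mV \mD^{1/2}$ with $\mU^\top \mU = \mV^\top \mV = \mI$ and $\mD$ diagonal positive, so that $\H_f = \mU \mD \mV^\top$ is a genuine compact SVD of the Hankel matrix. In particular the columns of $\mU$ are orthonormal vectors of $\ltwo$, whence the operator $H_f$ is compact and thus bounded, and Theorem~\ref{thm:iffsvd} gives $f \in \ltworat$. Alternatively, one can argue directly: $f$ is the column of $\H_f$ indexed by the empty suffix, so $f = \mU \mD\,(\mV(\varepsilon,:))^\top$ is a finite linear combination of the $\ltwo$-columns of $\mU$ and hence lies in $\ltwo$, which together with rationality yields $f \in \ltworat$.

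I do not expect a genuine obstacle here, since the substantive work has already been carried out in Theorems~\ref{thm:iffsvd} and~\ref{thm:sva}; the statement is essentially an assembly of those results. The only point requiring care is to confirm that the operator-theoretic statement ``$H_f$ admits a compact SVD'' and the automata-theoretic statement ``$f$ admits an SVA'' are genuinely interchangeable. This holds because the Hilbert--Schmidt decomposition is exactly the matrix expression, in the canonical basis $(\me_x)_{x \in \sstar}$, of the operator SVD, so the factorization furnished to Theorem~\ref{thm:sva} and the one read off from the SVA in the converse are one and the same object.
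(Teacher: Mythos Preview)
Your proposal is correct and follows essentially the same route as the paper: both directions chain Theorem~\ref{thm:iffsvd} (boundedness $\Leftrightarrow$ $f\in\ltworat$) with the fact that a bounded finite-rank operator is compact and hence admits a compact SVD, and then invoke Theorem~\ref{thm:sva} to pass between the SVD and the SVA. Your alternative direct argument for necessity (reading $f$ off as a finite $\ltwo$-combination of the columns of $\mU$) is a nice addendum, but the core logic is identical to the paper's.
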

\begin{proof}
Since a finite-rank bounded operator is compact and therefore admits a compact SVD, Theorem~\ref{thm:iffsvd} and Theorem~\ref{thm:sva} imply that every $f \in \ltworat$ admit an SVA. On the other hand, if a rational function admits an SVA, then its Hankel $\H_f$ matrix admits a compact SVD and therefore $H_f$ is bounded. Applying Theorem~\ref{thm:iffsvd} we see that this implies $f \in \ltworat$.
\end{proof}

In view of this result, when given a rational function as a WFA, one just has to check that the function has finite $\ltwo$ norm to ensure the existence of an SVA for that function. A direct way to test this based on Lemma~\ref{lem:denis} is given below.

\begin{theorem}\label{thm:l2minimal}
Let $A$ be a WFA and let $B$ be a minimisation of the automaton $A \otimes A$ computing $f_A^2$. Then we have $f_A \in \ltworat$ if and only if $\rho(\sum_{\sigma \in \Sigma} \B_\sigma) < 1$.
\end{theorem}
\begin{proof}
Let $\B = \sum_{\sigma \in \Sigma} \B_\sigma$. The if part follows from observing that $\rho(\sum_{\sigma \in \Sigma} \B_\sigma) < 1$ implies that $\sum_{x \in \sstar} \B_x = \sum_{t \geq 0} \B^t$ converges, and therefore $\norm{f_A}_2^2 = \sum_{x \in \sstar} \bzero^\top \B_x \binf$ is finite. The only if part is a direct application of Lemma~\ref{lem:denis}.
\end{proof}

The above theorem gives a direct way to check if for a given $A$ we have $f_A \in \ltworat$ by using a WFA minimisation algorithm and computing the spectral radius of a given matrix. If $A$ has $n$ states, then $B$ can be obtained by minimising an automaton with $n^2$ states, which takes time $O(n^6)$ \citep{berstel2011noncommutative} and yields a WFA $B$ with $n' \leq n^2$ states. Computing the spectral radius of $\B$ takes time $O(n'^3)$ \citep{trefethen1997numerical}, so the overall complexity of testing $f_A \in \ltworat$ based in the above theorem is $O(n^6)$. The following theorem gives sufficient conditions for $f_A \in \ltworat$, some of which can be checked without the need to run a WFA minimisation algorithm.

\begin{theorem}\label{thm:l2sufficient}
Let $A$ be a WFA computing a function $f_A$. Any of the following conditions implies $f_A \in \ltworat$:
\begin{enumerate}
\item \label{it:0} $f_A \in \lonerat$,
\item \label{it:1} $\rho(\sum_{\sigma} \A_\sigma \otimes \A_\sigma) < 1$,
\item \label{it:2} $\norm{\sum_{\sigma} \A_\sigma \otimes \A_\sigma}_p < 1$ for some $1 \leq p \leq \infty$,
\item \label{it:3} $\norm{\sum_{\sigma} \A_\sigma \A_\sigma^\top}_2 < 1$.
\end{enumerate}
\end{theorem}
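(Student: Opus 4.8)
The plan is to reduce each of the four conditions to the finiteness of the single series $\norm{f_A}_2^2 = \sum_{x \in \sstar} f_A(x)^2$. The organizing device is the tensor (Kronecker) construction $A \otimes A$, with initial vector $\azero \otimes \azero$, final vector $\ainf \otimes \ainf$, and transition matrices $\A_\sigma \otimes \A_\sigma$, which computes $f_A^2$. This follows from the mixed-product property $(\A_\sigma \otimes \A_\sigma)(\A_\tau \otimes \A_\tau) = (\A_\sigma \A_\tau) \otimes (\A_\sigma \A_\tau)$, which also gives the key identity $\sum_{|x|=t} \A_x \otimes \A_x = \mN^t$, where $\mN = \sum_\sigma \A_\sigma \otimes \A_\sigma$. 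Condition~(1) then requires no work: since $\lone \subset \ltwo$ for these function spaces, intersecting with $\cR$ yields $\lonerat \subseteq \ltworat$ at once.

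For condition~(2), I would write $\norm{f_A}_2^2 = \sum_{t \geq 0} \sum_{|x|=t} f_A(x)^2 = \sum_{t \geq 0} (\azero \otimes \azero)^\top \mN^t (\ainf \otimes \ainf)$ using the identity above. When $\rho(\mN) < 1$, the recalled fact that $\sum_{t \geq 0} \mN^t = (\mI - \mN)^{-1}$ converges makes the series finite, so $f_A \in \ltworat$. Condition~(3) reduces to~(2): since the spectral radius is dominated by every induced norm, $\rho(\mN) \leq \norm{\mN}_p < 1$.

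Condition~(4) is the one demanding genuine work, and I expect it to be the main obstacle. Rather than passing through $\mN$, I would estimate each length-$t$ slice directly. By Cauchy--Schwarz, $f_A(x)^2 = (\azero^\top \A_x \ainf)^2 \leq \norm{\ainf}_2^2 \, \azero^\top \A_x \A_x^\top \azero$, so that $\sum_{|x|=t} f_A(x)^2 \leq \norm{\ainf}_2^2 \, \azero^\top \Phi^t(\mI) \azero$, where $\Phi(\mX) = \sum_\sigma \A_\sigma \mX \A_\sigma^\top$ is the completely positive map whose iterate satisfies $\Phi^t(\mI) = \sum_{|x|=t} \A_x \A_x^\top$. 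The crux is that $\Phi$ is monotone for the Loewner order, and $\Phi(\mI) = \sum_\sigma \A_\sigma \A_\sigma^\top =: \mM$ with $\mM \leq \norm{\mM}_2 \, \mI$ since $\mM$ is symmetric positive semi-definite and $\norm{\cdot}_2 = \normop{\cdot}$. Writing $c = \norm{\mM}_2 < 1$, monotonicity gives by induction $\Phi^t(\mI) \leq c^t \mI$, whence $\sum_{|x|=t} f_A(x)^2 \leq \norm{\ainf}_2^2 \norm{\azero}_2^2 \, c^t$, and summing this geometric series over $t$ yields finiteness. The difficulty is precisely in recognizing $\sum_\sigma \A_\sigma \A_\sigma^\top$ as the image of the identity under a positive map and exploiting Loewner monotonicity to propagate the operator-norm bound through all powers; the alternative of bounding $\rho(\mN)$ by $\norm{\mM}_2$ directly (thereby reducing~(4) to~(2)) would require a Perron--Frobenius argument for positive maps and is less self-contained.
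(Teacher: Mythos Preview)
Your treatment of items~(1)--(3) matches the paper's proof exactly: the inclusion $\lonerat \subset \ltworat$, the direct summation $\sum_{t \geq 0} \mN^t$ converging when $\rho(\mN) < 1$ (the paper phrases this via Lemma~\ref{lem:denis}, but the content is identical), and the reduction $\rho(\mN) \leq \norm{\mN}_p$.

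For item~(4) you take a genuinely different route. The paper does precisely what you describe and dismiss in your final sentence: it invokes an external result of Lototsky (2015) asserting $\rho(\sum_\sigma \A_\sigma \otimes \A_\sigma) \leq \norm{\sum_\sigma \A_\sigma \A_\sigma^\top}_2$, thereby reducing~(4) to~(2) in one line. Your argument instead works directly with the completely positive map $\Phi(\mX) = \sum_\sigma \A_\sigma \mX \A_\sigma^\top$, uses Loewner monotonicity to propagate the bound $\Phi(\mI) \leq c\,\mI$ inductively to $\Phi^t(\mI) \leq c^t \mI$, and then sums a geometric series. Your approach is fully self-contained and elementary, requiring nothing beyond the basic Loewner-order facts already recalled in Section~\ref{sec:background}; the paper's approach is shorter on the page but defers the substance to a cited lemma whose proof (a Perron--Frobenius-type argument for positive maps) is not reproduced. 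Both are correct, and in fact your direct estimate recovers the spectral-radius inequality as a byproduct, since $\Phi^t(\mI) \leq c^t \mI$ forces every eigenvalue of $\mN$ to have modulus at most $c$.
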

\begin{proof}
The first item follows from the inclusion $\lonerat \subset \ltworat$. To get \eqref{it:1} note that by Lemma~\ref{lem:denis} the condition implies $f_A^2 \in \lonerat$ and therefore $f_A \in \ltworat$. Condition \eqref{it:2} follows from the property of the spectral radius $\rho(\mM) \leq \norm{\mM}_p$. The last condition follows from the main result in \citep{lototsky2015simple} showing that $\rho(\sum_{\sigma} \A_\sigma \otimes \A_{\sigma}) \leq \norm{\sum_{\sigma} \A_{\sigma} \A_{\sigma}^\top}_2$.
\end{proof}

We can use these conditions to identify classes of probabilistic automata that compute functions in $\ltworat$ and therefore have an SVA. We will need the following technical lemma.

\begin{lemma}\label{lem:norminfkron}
The following inequality holds for any set of square matrices $\{\A_1, \ldots, \A_m\}$:
\begin{align*}
\bnorm{\sum_{k \in [m]} \A_k \otimes \A_k}_\infty &\leq \bnorm{[\A_{1} \ldots \A_{m}]}_\infty \left\| \left[\begin{array}{c} \A_{1} \\ \vdots \\ \A_{m} \end{array} \right] \right\|_{\infty} \\
&= \bnorm{[\A_{1} \ldots \A_{m}]}_\infty \bnorm{[\A_{1}^\top \ldots \A_{m}^\top]}_1
\enspace.
\end{align*}
\end{lemma}
\begin{proof}
Recall that the induced matrix norm with $p = \infty$ is given by $\norm{\mM}_\infty = \max_i \sum_j |\mM(i,j)|$. Then the desired inequality can be obtained as follows:
\begin{align*}
\bnorm{\sum_{k \in [m]} \A_k \otimes \A_k}_\infty &=
\max_{i_1,i_2 \in [n]} \sum_{j_1,j_2 = 1}^n \left| \sum_{k} \A_k(i_1,j_1) \A_k(i_2,j_2) \right| \\
&\leq \max_{i_1,i_2 \in [n]} \sum_k \sum_{j_1,j_2 = 1}^n |\A_k(i_1,j_1)| |\A_k(i_2,j_2)| \\
&= \max_{i_1,i_2 \in [n]} \sum_k \left(\sum_{j_1= 1}^n |\A_k(i_1,j_1)|\right)\left(\sum_{j_2 = 1}^n |\A_k(i_2,j_2)|\right) \\
&\leq \max_{i_1} \sum_k \left(\sum_{j_1= 1}^n |\A_k(i_1,j_1)|\right)\left(\max_{i_2} \sum_{j_2 = 1}^n |\A_k(i_2,j_2)|\right) \\
&= \max_{i} \sum_k \norm{\A_k}_\infty \left(\sum_{j= 1}^n |\A_k(i,j)|\right) \\
&\leq \left(\max_k \norm{\A_k}_{\infty} \right) \left(\max_{i} \sum_k \sum_{j= 1}^n |\A_k(i,j)|\right) \\
&= \left\| \left[\begin{array}{c} \A_{1} \\ \vdots \\ \A_{m} \end{array} \right] \right\|_{\infty} \norm{[\A_{1} \ldots \A_{m}]}_\infty \enspace.
\end{align*}
The second equality follows from the duality between the norms $\norm{\cdot}_1$ and $\norm{\cdot}_\infty$.
\end{proof}

\begin{corollary}\label{cor:probinltwo}
If $A$ is a GPA or a det-free pDPA, then $f_A \in \ltworat$.
\end{corollary}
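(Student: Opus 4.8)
The plan is to treat the two cases separately, in each reducing the claim to one of the sufficient conditions collected in Theorem~\ref{thm:l2sufficient}.

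For the GPA case I would argue directly that $f_A \in \lonerat$. By definition of a GPA we have $f_A(x) \geq 0$ for every $x$ and $\sum_{x \in \sstar} f_A(x) = 1$, so $\norm{f_A}_1 = \sum_{x} |f_A(x)| = \sum_{x} f_A(x) = 1 < \infty$. Item~\eqref{it:0} of Theorem~\ref{thm:l2sufficient}, equivalently the inclusion $\lonerat \subset \ltworat$, then immediately gives $f_A \in \ltworat$.

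The det-free pDPA case is the substantive one, because here $f_A \notin \lonerat$: a DPA satisfies $f_A(\Sigma^t) = 1$ for every $t \geq 0$, so $\sum_{x} f_A(x) = \sum_{t \geq 0} f_A(\Sigma^t)$ diverges and item~\eqref{it:0} is unavailable. Instead I would verify item~\eqref{it:2} with $p = \infty$, namely $\bnorm{\sum_\sigma \A_\sigma \otimes \A_\sigma}_\infty < 1$, by applying Lemma~\ref{lem:norminfkron} to the transition matrices $\{\A_\sigma\}_{\sigma \in \Sigma}$. That lemma bounds the left-hand side by the product of the $\infty$-norm of the horizontal concatenation $[\A_1 \cdots \A_m]$ and the $\infty$-norm of the vertical stack, which equals $\max_\sigma \norm{\A_\sigma}_\infty$. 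For the first factor, the pDPA constraints $\A_\sigma \geq 0$ and $\sum_\sigma \A_\sigma \mat{1} = \mat{1}$ say exactly that each row of the concatenated matrix sums to $1$, so its $\infty$-norm (a maximum of row sums) is precisely $1$. For the second factor, the det-free hypothesis is by definition $\norm{\A_\sigma}_\infty < 1$ for each $\sigma$, and since $\Sigma$ is finite we get $\max_\sigma \norm{\A_\sigma}_\infty < 1$. Multiplying the two factors yields $\bnorm{\sum_\sigma \A_\sigma \otimes \A_\sigma}_\infty < 1$, and item~\eqref{it:2} of Theorem~\ref{thm:l2sufficient} then gives $f_A \in \ltworat$.

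I do not expect a genuine obstacle here, since both cases collapse to previously established results once the correct condition is selected. The only point requiring care is in the det-free case, where the two factors furnished by Lemma~\ref{lem:norminfkron} must be matched correctly to the two distinct pDPA hypotheses: the stochasticity constraint $\sum_\sigma \A_\sigma \mat{1} = \mat{1}$ pins the concatenation norm to $1$, while the det-free strict inequality $\norm{\A_\sigma}_\infty < 1$ controls the stacked norm, so that only their product drops strictly below $1$. Neither hypothesis alone suffices, which is why the argument genuinely uses both.
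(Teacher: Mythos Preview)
Your proposal is correct and follows essentially the same approach as the paper: the GPA case via $\norm{f_A}_1 = 1$ and item~\eqref{it:0} of Theorem~\ref{thm:l2sufficient}, and the det-free pDPA case via Lemma~\ref{lem:norminfkron} combined with item~\eqref{it:2} at $p=\infty$, matching the stochasticity constraint to the horizontal concatenation and the det-free hypothesis to the vertical stack. Your additional remark that item~\eqref{it:0} is unavailable for DPAs because $\sum_t f_A(\Sigma^t)$ diverges is a nice clarification the paper does not make explicit.
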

\begin{proof}
For $A$ GPA it follows directly from Theorem~\ref{thm:l2sufficient} by noting that we have $\norm{f_A}_1 = 1$. Now suppose $A = \wa$ be a det-free pDPA, so by construction we have $\sum_{a \in \Sigma} \A_a \mat{1} = \mat{1}$ and $\norm{\A_a}_{\infty} < 1$ for all $a \in \Sigma$. Note that the first property implies $\norm{[\A_{a_1} \ldots \A_{a_k}]}_\infty = 1$ and the second property implies
\begin{equation}
\left\| \left[\begin{array}{c} \A_{a_1} \\ \vdots \\ \A_{a_k} \end{array} \right] \right\|_{\infty} < 1 \enspace.
\end{equation}
Therefore, using Lemma~\ref{lem:norminfkron} we see that $\norm{\sum_{a \in \Sigma} \A_a \otimes \A_a}_\infty < 1$ and therefore by \eqref{it:2} in Theorem~\ref{thm:l2sufficient} we get $f_A \in \ltworat$.
\end{proof}

Note that the det-free condition on pDPA is necessary to ensure $f_A \in \ltworat$ as witnessed by the example in Figure~\ref{fig:notl2dpa}.

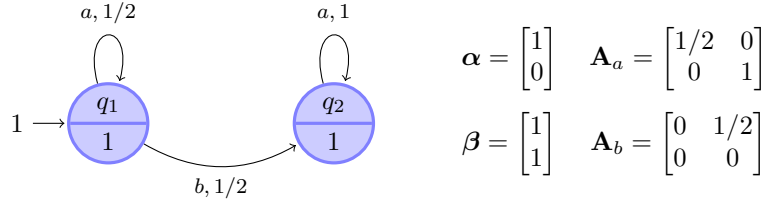
\begin{figure}[t]
\centering
\begin{subfigure}[c]{0.4\textwidth}
\centering
\begin{tikzpicture}[shorten >=1pt,node distance=3cm,auto]
\tikzstyle{every state}=[draw=blue!50,very thick,on grid,fill=blue!20,inner sep=3pt,minimum size=4mm]
\node[state with output,initial,initial text=$1$] (q_0) {$q_1$ \nodepart{lower} $1$};
\node[state with output] (q_1) [right of=q_0] {$q_2$ \nodepart{lower} $1$};
\path[->]
(q_0) edge [loop above] node
{\footnotesize $\begin{matrix}{a , 1/2}\end{matrix}$} ()
(q_0) edge [bend right] node [swap]
{\footnotesize $\begin{matrix}{b , 1/2}\end{matrix}$} (q_1)
(q_1) edge [loop above] node
{\footnotesize $\begin{matrix}{a , 1}\end{matrix}$} ();
\end{tikzpicture}
\end{subfigure}
\begin{subfigure}[c]{0.4\textwidth}
\centering
\begin{minipage}[t]{\textwidth}
\begin{tabular}{cc}
$\azero =
\left[
\begin{matrix}
1\\
0\\
\end{matrix}
\right]$
&
$\A_a =
\left[
\begin{matrix}
1/2 & 0 \\
0 & 1\\
\end{matrix}
\right]$\\[.5cm]
$\ainf =
\left[
\begin{matrix}
1\\
1\\
\end{matrix}
\right]$
&
$\A_b =
\left[
\begin{matrix}
0 & 1/2 \\
0 & 0\\
\end{matrix}
\right]$
\end{tabular}
\end{minipage}
\end{subfigure}
\caption{Example of pDPA $A$ with two states which is not det-free. Note that $f_A(ba^k) = 1/2$ for all $k \geq 0$ and therefore $f_A \notin \ltworat$.}
\label{fig:notl2dpa}
\end{figure}

\section{Fundamental Equations of SVA}\label{sec:fundeqns}

In this section we establish two fundamental facts about SVA that follow from a systematic study of the properties of its observability and reachability Gramian matrices (cf.\ definitions in Section~\ref{sec:gramians}). These matrices, which can be defined for any WFA realizing a function in $\ltworat$, bear a strong relation with the change of basis needed to transform an arbitrary minimal WFA into its SVA form. By studying this relation we will derive an efficient algorithm for the computation of SVA canonical forms provided that we know how to compute the Gramians associated with a WFA. Two algorithms for computing such Gramians are developed in Section~\ref{sec:computegramians}. The second of these algorithms is based on fixed-point equations for the Gramians that are derived in Section~\ref{sec:fpeqns}, which also play a key role on the analysis of an approximate minimisation approach given in Section~\ref{sec:approxmin}.

\subsection{Observability and Reachability Gramians}\label{sec:gramians}

Let $f$ be rational function and $\H_f = \mP \mS^\top$ be a FB factorization for the Hankel matrix of $f$ induced by a (non-necessarily minimal) WFA $A$ with $n$ states. Suppose that $\mP$ is such that the inner products of its columns $\left< \mP(:,i), \mP(:,j) \right> = \sum_{x \in \sstar} \mP(x,i) \mP(x,j)$ are finite for every $i, j \in [n]$. Then the positive semidefinite matrix $\mG_p = \mP^\top \mP \in \R^{n \times n}$ is well-defined. We call $\mG_p$ the \emph{reachability gramian} of $A$. Similarly, suppose the same condition on the inner products holds for the columns of $\mS$. Then the matrix $\mG_s = \mS^\top \mS \in \R^{n \times n}$ is well-defined and we will call it the \emph{observability gramian} of $A$. These definitions are motivated by the following result.

\begin{proposition}\label{prop:rankgramians}
Let $A$ be a WFA with $n$ states and suppose that its reachability and observability gramians are well-defined. Then the following hold:
\begin{enumerate}
\item $A$ is reachable if and only if $\rank(\mG_p) = n$;
\item $A$ is observable if and only if $\rank(\mG_s) = n$;
\item $A$ is minimal if and only if $\rank(\mG_p) = \rank(\mG_s) = n$.
\end{enumerate}
\end{proposition}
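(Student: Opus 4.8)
The plan is to reduce all three claims to a single linear-algebraic identity: for the forward matrix $\mP \in \R^{\sstar \times n}$ (whose columns lie in $\ltwo$ by the well-definedness hypothesis), the finite Gram matrix $\mG_p = \mP^\top \mP$ satisfies $\rank(\mG_p) = \rank(\mP)$, and symmetrically $\rank(\mG_s) = \rank(\mS)$. Once this is in hand, item~1 is just the definition of reachability ($\rank(\mP) = n$), item~2 the definition of observability ($\rank(\mS) = n$), and item~3 follows by combining them with the characterization (noted after Proposition~\ref{prop:duality}) that a WFA is minimal if and only if it is both reachable and observable.

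First I would make precise what $\rank(\mP)$ means in the infinite-dimensional setting and relate it to a kernel. View $\mP$ as the linear map $\R^n \to \ltwo$ sending $\v \mapsto \mP \v$; its image is spanned by the $n$ columns, which are genuine $\ltwo$ vectors, so the map is well-defined. The kernel $\ker(\mP) = \{\v \in \R^n : \mP \v = 0\}$ consists exactly of the vectors orthogonal in $\R^n$ to every row $\azero^\top \A_x$ of $\mP$, hence it is the orthogonal complement of $\linspan\{\azero^\top \A_x : x \in \sstar\}$. Therefore $\rank(\mP) = \dim \linspan\{\azero^\top \A_x : x \in \sstar\} = n - \dim\ker(\mP)$, which is precisely the quantity appearing in the definition of reachability used in the excerpt.

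The crux is the equality of kernels $\ker(\mP) = \ker(\mG_p)$. The inclusion $\ker(\mP) \subseteq \ker(\mG_p)$ is immediate. For the reverse, suppose $\mG_p \v = 0$; then
\[
0 = \v^\top \mG_p \v = \v^\top \mP^\top \mP \v = \norm{\mP \v}_2^2,
\]
where finiteness of $\norm{\mP \v}_2^2 = \v^\top \mG_p \v$ is exactly what the well-definedness of the gramian guarantees. Hence $\mP \v = 0$, i.e.\ $\v \in \ker(\mP)$. Since $\mG_p$ and $\mP$ are both linear maps out of the finite-dimensional space $\R^n$, the rank--nullity theorem gives $\rank(\mG_p) = n - \dim\ker(\mG_p) = n - \dim\ker(\mP) = \rank(\mP)$, as claimed. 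The same argument applied to $\mS$ yields $\rank(\mG_s) = \rank(\mS)$.

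I do not expect a serious obstacle here: the only point requiring care is that $\mP$ has infinitely many rows, so one cannot blindly invoke the finite-dimensional ``row rank equals column rank''. This is dispatched by the kernel description above together with the fact that $\mG_p$ is a finite positive-semidefinite $n \times n$ matrix, which lets the standard argument go through verbatim once the $\ltwo$ inner products defining the gramian are known to converge.
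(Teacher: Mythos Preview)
Your proof is correct and follows the same overall strategy as the paper: reduce each claim to the identity $\rank(\mG_p) = \rank(\mP)$ (resp.\ $\rank(\mG_s) = \rank(\mS)$), then invoke the definitions of reachability, observability, and the characterization of minimality. The only difference is in how that rank identity is justified: the paper argues via the rank-of-a-product inequality $\rank(\mP^\top \mP) \leq \rank(\mP)$ together with $\rank(\mP) \leq n$, whereas you argue via the kernel equality $\ker(\mP) = \ker(\mG_p)$ using the positive-semidefiniteness trick $\v^\top \mG_p \v = \norm{\mP \v}_2^2$. Your route is slightly more explicit about why the infinite number of rows in $\mP$ causes no trouble, which is exactly the subtlety you flagged; the paper's product-rank argument is terser but relies on the same underlying fact.
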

\begin{proof}
Recall that $A$ is reachable whenever $\rank(\mP) = n$, which implies that $\mG_p$ is the gramian of $n$ linearly independent vectors and therefore $\rank(\mG_p) = n$. On the other hand, if $\rank(\mG_p) = n$, then by the bound on the rank of a product of matrices we have
\begin{equation}
n = \rank(\mG_p) = \rank(\mP^\top \mP) \leq \max\{ \rank(\mP^\top), \rank(\mP) \} = \rank(\mP) \leq n \enspace,
\end{equation}
from where we conclude that $\rank(\mP) = n$ and therefore $A$ is reachable.

The observable case follows exactly the same reasoning, and the claim about minimality is just a consequence of recalling that $A$ is minimal if and only if it is both reachable and observable.
\end{proof}

Note the above result assumed the gramians are well-defined in the first place. Nonetheless, a similar result can be obtained without such assumptions if one is willing to work with finite versions of these matrices obtained by summing only strings up to some fixed (large enough) length. In particular, defining for any $t \geq 0$ the matrices
\begin{align}
\mG_p^{(t)} &= \sum_{x \in \Sigma^{\leq t}} \mP(x,:)^\top \mP(x,:) \label{eq:gpt} \enspace, \\
\mG_s^{(t)} &= \sum_{x \in \Sigma^{\leq t}} \mS(x,:)^\top \mS(x,:) \label{eq:gst} \enspace,
\end{align}
it is possible to see that when $t \geq n$ we have $\rank(\mG_p^{(t)}) = \rank(\mP)$ and $\rank(\mG_s^{(t)}) = \rank(\mS)$. However, we shall not pursue this direction here. Instead we look for necessary and sufficient conditions guaranteeing the finitness of the gramian matrices.

\begin{proposition}\label{prop:definedgramians}
Let $A$ be a minimal WFA realizing a rational function $f$. The reachability and observability gramians of $A$ are well-defined if and only if $f \in \ltworat$.
\end{proposition}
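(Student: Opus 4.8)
=== PROOF PROPOSAL ===

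\textbf{Proof proposal.} The plan is to establish the equivalence by relating the finiteness of the gramians directly to the finiteness of $\norm{f}_2$, exploiting that $A$ is minimal. Recall that the reachability gramian is $\mG_p = \mP_A^\top \mP_A$ with entries $\mG_p(i,j) = \sum_{x \in \sstar} \mP_A(x,i) \mP_A(x,j)$, and similarly for $\mG_s$. The gramians are well-defined precisely when all these entrywise sums converge, which (since each is an inner product of columns) is equivalent to each column of $\mP_A$ and $\mS_A$ lying in $\ltwo(\sstar)$; i.e.\ $\sum_x \mP_A(x,i)^2 < \infty$ and $\sum_x \mS_A(x,j)^2 < \infty$ for all $i,j$. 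So the real content is to show that this square-summability of the forward and backward vectors is equivalent to $f \in \ltworat$.

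First I would prove the easy direction ($f \in \ltworat \Rightarrow$ gramians well-defined). Since $\H_f = \mP_A \mS_A^\top$, every entry of $\H_f$ is $f(ps) = \mP_A(p,:)\mS_A(s,:)^\top$. The key observation is that $\norm{f}_2^2 = \sum_{z}f(z)^2$ controls a weighted sum closely related to the gramian entries: taking $s = \varepsilon$ gives $\H_f(p,\varepsilon) = \mP_A(p,:)\ainf = f(p)$, so $f$ itself is the $\mP_A$-image of $\ainf$. More usefully, I would argue via the SVD of $H_f$, which exists by Theorem~\ref{thm:iffsvd} since $f \in \ltworat$ means $H_f$ is bounded (hence compact, being finite-rank), giving $\H_f = \mU \mD \mV^\top$ with $\mU,\mV$ having $\ltwo$-orthonormal columns. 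By Theorem~\ref{thm:sva} there is a minimal WFA with forward/backward matrices $\mU\mD^{1/2}$ and $\mV\mD^{1/2}$ whose columns are manifestly in $\ltwo$, so \emph{some} minimal WFA has finite gramians. To transfer this to our arbitrary minimal $A$, I invoke Theorem~\ref{thm:conjugacy}: $A = B^{\mQ}$ for an invertible $\mQ$, and conjugation transforms $\mP_A = \mP_B \mQ^{-\top}$ (a finite linear combination of $\ltwo$ columns), which remains in $\ltwo$; likewise for $\mS_A$. Hence the gramians of $A$ are finite.

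For the converse (gramians well-defined $\Rightarrow f \in \ltworat$), I would use that the columns of $\mP_A$ lie in $\ltwo$ together with minimality. Since $A$ is minimal, it is reachable and observable, so there exist suffixes $s_1,\dots,s_n$ with $\{\A_{s_j}\ainf\}$ a basis of $\R^n$; equivalently, $\ainf$ is a finite linear combination of backward vectors, and more to the point, the function $f$ can be recovered from the forward matrix. Concretely, $f(x) = \mP_A(x,:)\ainf$, so $f = \mP_A \ainf$ as a vector in $\R^{\sstar}$, which is a finite linear combination $\sum_i \ainf(i)\,\mP_A(:,i)$ of the $\ltwo$ columns of $\mP_A$; therefore $f \in \ltwo$, and since $f$ is rational, $f \in \ltworat$. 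This direction is actually the cleaner one.

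\textbf{Main obstacle.} The delicate point is the well-definedness direction: I must be careful that ``gramians well-defined'' is genuinely equivalent to the columns being square-summable, and that the convergence of the off-diagonal inner products follows from the diagonal ones (which it does, by Cauchy--Schwarz: $|\langle \mP_A(:,i),\mP_A(:,j)\rangle| \le \norm{\mP_A(:,i)}_2\norm{\mP_A(:,j)}_2$). The genuinely nontrivial step is asserting that $f \in \ltworat$ forces some minimal realization to have $\ltwo$ forward/backward columns; this is where Theorem~\ref{thm:iffsvd} and Theorem~\ref{thm:sva} do the heavy lifting, and then Theorem~\ref{thm:conjugacy} propagates the property to \emph{every} minimal WFA. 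The short converse, by contrast, needs only the elementary identity $f = \mP_A \ainf$.
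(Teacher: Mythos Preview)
Your proposal is correct. The forward direction ($f \in \ltworat \Rightarrow$ gramians defined) matches the paper's argument exactly: invoke Theorem~\ref{thm:iffsvd} and Theorem~\ref{thm:sva} to obtain an SVA $B$ whose gramians are both $\mD$, then use Theorem~\ref{thm:conjugacy} to transport finiteness back to $A$ via the change of basis. (A minor slip: with the paper's convention $A = B^{\mQ}$ gives $\mP_A = \mP_B \mQ$, not $\mP_B \mQ^{-\top}$; this does not affect the argument.)

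For the converse you take a different and more elementary route than the paper. The paper bounds $\norm{f}_2^2 \leq \sum_z (1+|z|)f(z)^2 = \Tr(\H_f \H_f^\top) = \Tr(\mG_p \mG_s)$, which uses both gramians simultaneously. Your argument simply observes that $f = \mP_A \ainf$ is a finite linear combination of the columns of $\mP_A$, hence lies in $\ltwo$ as soon as $\mG_p$ alone is defined. This is cleaner and strictly stronger: it shows that either gramian being defined already forces $f \in \ltworat$, a fact the paper only obtains later (Theorem~\ref{thm:gramiannorm}) via the identity $\norm{f}_2^2 = \ainf^\top \mG_p \ainf$. Note also that neither your argument nor the paper's needs minimality for this direction, so the detour through observability and the basis of backward vectors in your sketch is unnecessary; you can drop it.
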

\begin{proof}
Suppose $A$ is a minimal WFA with $n$ states realizing a function $f \in \ltworat$. It follows from Theorems~\ref{thm:sva} and~\ref{thm:iffsvd} that there exists an invertible matrix $\mQ$ such that $B = A^{\mQ}$ is an SVA. Since $B$ induces the FB factorization given by $\mP_B = \mU \mD^{1/2}$ and $\mS_B = \mV \mD^{1/2}$, we see that the corresponding gramian matrices are well-defined and since $\mU^\top \mU = \mV^\top \mV = \mI$ we have $\mG_{B,p} = \mG_{B,s} = \mD$. Now recall that the FB factorization induced by $A$ has $\mP_A \mQ = \mP_B$ and $\mQ^{-1} \mS_A^\top = \mS_B^\top$. Therefore the gramian matrices associated with $A$ are also well-defined since they can be obtained as $\mG_{A,p} = \mQ^{-\top} \mG_{B,p} \mQ^{-1}$ and $\mG_{A,s} = \mQ \mG_{B,s} \mQ^\top$.

Now suppose $A$ has well-defined gramian matrices $\mG_p = \mP^\top \mP$ and $\mG_s = \mS^\top \mS$. This implies that the trace $\Tr(\mG_p \mG_s)$ is finite, which can be used to show that $f \in \ltworat$ as follows:
\begin{align}
\norm{f}_2^2 &= \sum_{x \in \sstar} f(x)^2 \leq \sum_{x \in \sstar} (|x| + 1) f(x)^2
= \Tr(\H_f \H_f^\top) \\
&= \Tr(\mP \mS^\top \mS \mP^\top) =
\Tr(\mP^\top \mP \mS^\top \mS) = \Tr(\mG_p \mG_s) < \infty \enspace.
\end{align}
\end{proof}

Note that the minimality assumption is not needed when showing that $A$ having well-defined gramians implies $f_A \in \ltworat$. On the other hand, the minimality of $A$ is essential to show that $f_A \in \ltworat$ implies that both gramians are well-defined, as witenessed by the example in Figure~\ref{fig:nogram}.

\begin{figure}[t]
\centering
\begin{subfigure}[c]{0.4\textwidth}
\centering
\begin{tikzpicture}[shorten >=1pt,node distance=3cm,auto]
\tikzstyle{every state}=[draw=blue!50,very thick,on grid,fill=blue!20,inner sep=3pt,minimum size=4mm]
\node[state with output,initial,initial text=$1$] (q_0) {$q_1$ \nodepart{lower} $1$};
\node[state with output] (q_1) [right of=q_0] {$q_2$ \nodepart{lower} $0$};
\path[->]
(q_0) edge [loop above] node
{\footnotesize $\begin{matrix}{a , 1/2}\end{matrix}$} ()
(q_0) edge [bend right] node [swap]
{\footnotesize $\begin{matrix}{b , 1/2}\end{matrix}$} (q_1)
(q_1) edge [loop above] node
{\footnotesize $\begin{matrix}{a , 1}\end{matrix}$} ();
\end{tikzpicture}
\end{subfigure}
\begin{subfigure}[c]{0.4\textwidth}
\centering
\begin{minipage}[t]{\textwidth}
\begin{tabular}{cc}
$\azero =
\left[
\begin{matrix}
1\\
0\\
\end{matrix}
\right]$
&
$\A_a =
\left[
\begin{matrix}
1/2 & 0 \\
0 & 1\\
\end{matrix}
\right]$\\[.5cm]
$\ainf =
\left[
\begin{matrix}
1\\
0\\
\end{matrix}
\right]$
&
$\A_b =
\left[
\begin{matrix}
0 & 1/2 \\
0 & 0\\
\end{matrix}
\right]$
\end{tabular}
\end{minipage}
\end{subfigure}
\caption{Example of non-minimal WFA $A$ computing a function in $\ltworat$ for which the forward Gramian is not defined. To see that $A$ is not minimal note that $f_A$ can be computed by the one state WFA obtained by removing $q_2$ from $A$. Note that $\mG_p (2,2)$ is not defined since $(\azero^\top \A_b \A_a^k \me_2)^2 = 1/4$ for all $k \geq 0$.}
\label{fig:nogram}
\end{figure}
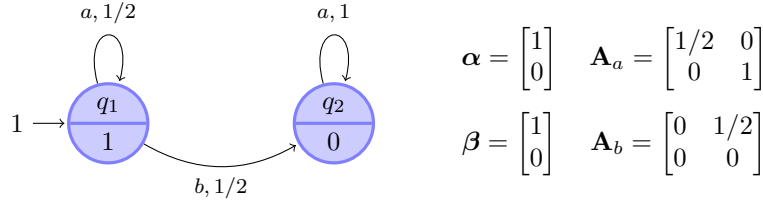

\subsection{Gramians and SVA}

The reason for introducing the reachability and observability gramians in the previous section is because these matrices can be used to reduce any given (minimal) WFA to its SVA form. The details of this construction are presented in this section, and they drawn upon some ideas already present in the proof of Proposition~\ref{prop:definedgramians}. Essentially, this section provides a reduction from the computation of the SVA to the computation of the gramians. The later problem is studied in detail in Section~\ref{sec:computegramians}.

Let $A$ be a minimal WFA with $n$ states realizing a function $f \in \ltworat$. By Proposition~\ref{prop:definedgramians} we know that the gramians $\mG_{A,p}$ and $\mG_{A,s}$ are defined. Furthermore, Theorems~\ref{thm:sva} and~\ref{thm:iffsvd} guarantee the existence of an invertible matrix $\mQ$ such that $B = A^{\mQ}$ is an SVA for $f$. Let $\mD$ be the diagonal matrix containing the singular values of the Hankel matrix of $f$. By inspecting the proof of Proposition~\ref{prop:definedgramians}, we see that these facts imply the following important equations:
\begin{align}
\mG_{B,p} &= \mD = \mQ^\top \mG_{A,p} \mQ \enspace, \label{eqn:gbp} \\
\mG_{B,s} &= \mD = \mQ^{-1} \mG_{A,s} \mQ^{-\top} \enspace. \label{eqn:gbs}
\end{align}
These equations say that given $A$ we can obtain its corresponding SVA by finding an invertible matrix $\mQ$ simultaneously transforming the Gramians of $A$ into two equal diagonal matrices. The following results provide a way to do this by taking the Cholesky decompositions of the Gramian matrices and computing an additional SVD.

\begin{lemma}\label{lem:diaggramsva}
Let $A$ be a minimal WFA with $n$ states realizing a function $f \in \ltworat$. Suppose the Gramians $\mG_p$ and $\mG_s$ satisfy $\mG_p = \mG_s = \mD = \diag(\sigma_1,\ldots, \sigma_n)$ with $\sigma_1 \geq \cdots \geq \sigma_n > 0$. Then $A$ is an SVA, and $\mD$ is the matrix of singular values of $\H_f$.
\end{lemma}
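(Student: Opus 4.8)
The plan is to construct the compact SVD of $\H_f$ directly out of the forward--backward factorization induced by $A$, using the hypothesis that both Gramians coincide with the diagonal matrix $\mD$. Everything reduces to two short linear-algebra computations plus an appeal to uniqueness of singular values.

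First, since $\sigma_n > 0$ the matrix $\mD$ is invertible, so $\mD^{1/2}$ and $\mD^{-1/2} = \diag(\sigma_1^{-1/2},\ldots,\sigma_n^{-1/2})$ are well-defined diagonal matrices. Recall that $A$ induces the FB rank factorization $\H_f = \mP_A \mS_A^\top$, and that by definition $\mG_p = \mP_A^\top \mP_A$ and $\mG_s = \mS_A^\top \mS_A$ (these are finite by hypothesis). I would then set $\mU = \mP_A \mD^{-1/2}$ and $\mV = \mS_A \mD^{-1/2}$, both infinite matrices in $\R^{\sstar \times n}$. A one-line computation using $\mG_p = \mD$ gives $\mU^\top \mU = \mD^{-1/2} \mP_A^\top \mP_A \mD^{-1/2} = \mD^{-1/2} \mD \mD^{-1/2} = \mI$, and symmetrically $\mV^\top \mV = \mI$ from $\mG_s = \mD$. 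Thus $\mU$ and $\mV$ have orthonormal columns.

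Next, substituting $\mP_A = \mU \mD^{1/2}$ and $\mS_A = \mV \mD^{1/2}$ into the FB factorization yields $\H_f = \mP_A \mS_A^\top = \mU \mD^{1/2} \mD^{1/2} \mV^\top = \mU \mD \mV^\top$. Since $\mU^\top \mU = \mV^\top \mV = \mI$ and $\mD = \diag(\sigma_1,\ldots,\sigma_n)$ with $\sigma_1 \geq \cdots \geq \sigma_n > 0$, this decomposition satisfies exactly the defining properties of a compact SVD (equivalently, the Hilbert--Schmidt decomposition of the finite-rank bounded operator $H_f$, which is bounded because $f \in \ltworat$ by Theorem~\ref{thm:iffsvd}). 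I would then invoke uniqueness of the singular values of an operator --- they are intrinsic, being the square roots of the eigenvalues of $\H_f^\top \H_f$, and any decomposition meeting the SVD definition recovers them --- to conclude that the diagonal entries of $\mD$ are precisely the Hankel singular values of $f$.

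Finally, the two observations combine to give the claim: the FB factorization induced by $A$ has exactly the form $\mP_A = \mU \mD^{1/2}$, $\mS_A = \mV \mD^{1/2}$ prescribed by Theorem~\ref{thm:sva}, so $A$ is an SVA by definition, and $\mD$ is the matrix of singular values of $\H_f$. The only point requiring care --- the closest thing to an obstacle --- is justifying that orthonormality of the columns of the \emph{infinite} matrices $\mU, \mV$ together with the ordering of the diagonal entries genuinely certifies a compact SVD in the operator setting, and hence pins down $\mD$ as the singular-value matrix. This is handled by the Hilbert--Schmidt decomposition theory recalled in Section~\ref{sec:background} and the intrinsic characterization of singular values, rather than by any new computation.
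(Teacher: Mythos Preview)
Your proof is correct and follows essentially the same approach as the paper: define $\mU = \mP_A \mD^{-1/2}$ and $\mV = \mS_A \mD^{-1/2}$, verify orthonormality of their columns from the Gramian hypotheses, and conclude that $\H_f = \mU \mD \mV^\top$ is a compact SVD. You are slightly more explicit than the paper in invoking uniqueness of singular values to justify the second clause of the lemma, but the argument is the same.
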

\begin{proof}
Let $\H_f = \mP \mS^\top$ be the FB factorization induced by $A$. Since $\mG_p = \mP^\top \mP$ and $\mG_s = \mS^\top \mS$ are diagonal and full rank, we see that the columns of $\mP$ (resp.\ $\mS$) are orthogonal. Now take $\mU = \mP \mD^{-1/2}$ and $\mV = \mS \mD^{-1/2}$ and note that these two matrices have orthonormal columns since $\mU^\top \mU = \mV^\top \mV = \mI$. Noting that $\H_f = \mP \mS^\top = \mU \mD \mV^\top$ is a decomposition satisfying the constraints of an SVD we conclude that $A$ is an SVA.
\end{proof}

\begin{theorem}
Let $A$ be a minimal WFA with $n$ states realizing a function $f \in \ltworat$ with Gramians $\mG_{s}$ and $\mG_{p}$. Let $\mG_s = \mL_s \mL_s^\top$ and $\mG_p = \mL_p \mL_p^\top$ be their Cholesky decompositions. Suppose $\mL_p^\top \mL_s$ has singular value decomposition $\mU \mD \mV^\top$. Then the WFA $B = A^{\mQ}$ with $\mQ = \mL_p^{-\top} \mU \mD^{1/2}$ is an SVA for $A$. Furthermore, we have $\mQ^{-1} = \mD^{1/2} \mV^\top \mL_s^{-1}$.
\end{theorem}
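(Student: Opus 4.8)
The plan is to reduce the statement to Lemma~\ref{lem:diaggramsva} by showing that the reachability and observability Gramians of the conjugate $B = A^{\mQ}$ are both equal to the diagonal matrix $\mD = \diag(\sigma_1,\ldots,\sigma_n)$ produced by the SVD of $\mL_p^\top \mL_s$. First I would record the well-definedness of all objects involved: since $A$ is minimal, Proposition~\ref{prop:rankgramians} gives that $\mG_p$ and $\mG_s$ are full rank, so the Cholesky factors $\mL_p$ and $\mL_s$ are invertible and the $n \times n$ matrix $\mL_p^\top \mL_s$ is invertible. Its compact SVD $\mU \mD \mV^\top$ therefore has square orthogonal factors $\mU, \mV$ and a diagonal $\mD$ with strictly positive, decreasingly ordered entries; consequently $\mQ = \mL_p^{-\top} \mU \mD^{1/2}$ is a product of invertible matrices and hence invertible, so $B = A^{\mQ}$ is well-defined.

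Next I would invoke the Gramian conjugation rules already used in the proof of Proposition~\ref{prop:definedgramians}: from $\mP_B = \mP_A \mQ$ and $\mS_B = \mS_A \mQ^{-\top}$ one obtains
\[
\mG_{B,p} = \mQ^\top \mG_p \mQ, \qquad \mG_{B,s} = \mQ^{-1} \mG_s \mQ^{-\top},
\]
exactly as in \eqref{eqn:gbp}--\eqref{eqn:gbs}. Substituting $\mQ = \mL_p^{-\top}\mU\mD^{1/2}$ and $\mG_p = \mL_p \mL_p^\top$ into the first identity, the Cholesky factors cancel via $\mL_p^{-1}\mL_p = \mI$ and $\mL_p^\top \mL_p^{-\top} = \mI$, and the orthonormality $\mU^\top \mU = \mI$ then collapses the expression to $\mD^{1/2}\mD^{1/2} = \mD$, so $\mG_{B,p} = \mD$.

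To handle the observability Gramian I would first establish the stated formula for $\mQ^{-1}$, which is the only slightly delicate point. Using the defining SVD equation $\mL_p^\top \mL_s = \mU \mD \mV^\top$, a direct check gives $\mQ \,(\mD^{1/2}\mV^\top \mL_s^{-1}) = \mL_p^{-\top}(\mU\mD\mV^\top)\mL_s^{-1} = \mL_p^{-\top}(\mL_p^\top \mL_s)\mL_s^{-1} = \mI$, so by invertibility of $\mQ$ we conclude $\mQ^{-1} = \mD^{1/2}\mV^\top \mL_s^{-1}$. Feeding this and $\mG_s = \mL_s \mL_s^\top$ into the second conjugation identity, the Cholesky factors again cancel and the orthonormality $\mV^\top \mV = \mI$ yields $\mG_{B,s} = \mD^{1/2}\mV^\top \mV \mD^{1/2} = \mD$. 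With both Gramians of $B$ equal to the decreasing positive diagonal $\mD$, Lemma~\ref{lem:diaggramsva} immediately gives that $B$ is an SVA for $f$ and that $\mD$ carries the Hankel singular values.

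The argument is essentially a guided computation, so I do not expect a genuine obstacle; the one thing requiring care is the bookkeeping of transpose/inverse combinations, and in particular recognizing that the specific pairing $\mQ = \mL_p^{-\top}\mU\mD^{1/2}$ is precisely what makes the Cholesky factors of \emph{both} Gramians cancel simultaneously while leaving behind the same diagonal $\mD$.
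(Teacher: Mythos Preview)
Your proposal is correct and follows essentially the same approach as the paper: verify invertibility of all factors via minimality, check the claimed formula for $\mQ^{-1}$ by multiplying it against $\mQ$ and using $\mU\mD\mV^\top = \mL_p^\top\mL_s$, compute $\mG_{B,p} = \mQ^\top\mG_p\mQ = \mD$ and $\mG_{B,s} = \mQ^{-1}\mG_s\mQ^{-\top} = \mD$, and then invoke Lemma~\ref{lem:diaggramsva}. The only difference is cosmetic ordering (the paper verifies $\mQ^{-1}$ before computing the Gramians, you interleave them), so there is nothing to add.
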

\begin{proof}
In the first place note that minimality of $A$ implies that $\mG_p$ and $\mG_s$ are full rank. Thus the factors $\mL_p$ and $\mL_s$ are invertible, $\mL_p^\top \mL_s$ has full rank, and both $\mQ$ and $\mQ^{-1}$ are well defined. To check the equality $\mQ^{-1} = \mD^{1/2} \mV^\top \mL_s^{-1}$ we just write
\begin{equation}
\left(\mL_p^{-\top} \mU \mD^{1/2}\right) \left(\mD^{1/2} \mV^\top \mL_s^{-1}\right) =
\mL_p^{-\top} \left(\mU \mD \mV^\top\right) \mL_s^{-1} = \mI \enspace.
\end{equation}
Next we check that $\mQ$ is such that $\mG_{B,p} = \mG_{B,s} = \mD$:
\begin{align*}
\mG_{B,p} &= \mQ^\top \mG_p \mQ = \left(\mD^{1/2} \mU^\top \mL_p^{-1}\right) \left(\mL_p \mL_p^\top\right) \left(\mL_p^{-\top} \mU \mD^{1/2}\right) = \mD \enspace, \\
\mG_{B,s} &= \mQ^{-1} \mG_s \mQ^{-\top} = \left(\mD^{1/2} \mV^\top \mL_s^{-1}\right) \left(\mL_s \mL_s^\top\right) \left(\mL_s^{-\top} \mV \mD^{1/2}\right) = \mD \enspace,
\end{align*}
where we used that $\mU^\top \mU = \mV^\top \mV = \mI$. Therefore, we can apply Lemma~\ref{lem:diaggramsva} to conclude that $B$ is an SVA.
\end{proof}

The previous theorem motivates the following simple algorithm for computing the SVA of a function $f \in \ltworat$ provided that a minimal WFA $A$ and its corresponding gramian matrices are given. We shall address the computation of the gramian matrices in the next section. For now we note that the constraint of $A$ being minimal is not essential, since its possible to minimize a WFA with $n$ states in time $O(n^3)$ \citep{berstel2011noncommutative}. Furthermore, given a minimal WFA $A$ it is possible to check the membership $f_A \in \ltworat$ using any of the tests discussed in Section~\ref{sec:ratfunswithsva}, which provides a way to verify the pre-condition necessary to ensure the existence of the gramian matrices.

\begin{algorithm}
\DontPrintSemicolon
\caption{\FuncSty{ComputeSVA}}\label{alg:sva}
\KwIn{A minimal WFA $A$ realizing $f \in \ltworat$, and the gramians $\mG_{A,p}$ and $\mG_{A,s}$}
\KwOut{An SVA $B$ for $f$}
Compute the Cholesky decompositions $\mG_s = \mL_s \mL_s^\top$ and $\mG_p = \mL_p \mL_p^\top$\;
Compute the SVD $\mU \mD \mV^\top$ of $\mL_p^\top \mL_s$\;
Let $B = A^{\mQ}$ with $\mQ = \mL_p^{-\top} \mU \mD^{1/2}$\;
\KwRet{$B$}\;
\end{algorithm}

The running time of $\FuncSty{ComputeSVA}(A)$ in terms of \emph{floating point operations} (flops) can be bounded using the following well-known facts about numerical linear algebra (see e.g.\ \citep{trefethen1997numerical}):
\begin{itemize}
\item Computing the product of two matrices $d \times d$ matrices takes time $O(d^3)$ if implemented naively, and can be done in time $O(d^{\omega})$ for some constant $\omega < 2.4$ using sophisticated algorithms that only yield practical improvements on very large matrices.
\item The singular value decomposition of a matrix $\mM \in \R^{d \times d}$ can be computed in time $O(d^3)$, and the Cholesy decomposition of a positive definite matrix $\mG \in \R^{d \times d}$ can also be computed in time $O(d^3)$.
\item The inverse of an invertible lower triangular matrix $\mL \in \R^{d \times d}$ can be computed in time $O(d^3)$ using Gaussian elimination.
\end{itemize}
Therefore, if the input $A$ to Algorithm~\ref{alg:sva} has $n$ states, its total running time is $O(n^3 + |\Sigma| n^\omega)$.

The following important observation about the product of two Gramians follows from the results showing how to compute an SVA from the Gramian matrices of a minimal WFA.

\begin{corollary}\label{cor:eigengpgs}
Let $A$ be a minimal WFA with $n$ states realizing a function $f \in \ltworat$. Then the product of the gramians $\mW = \mG_{A,s} \mG_{A,p}$ is a diagonalizable matrix with eigenvalues given by $\lambda_i(\mW) = \sigma_i(\H_f)^2$ for $i \in [n]$. Furthermore, if $\mQ$ is an invertible matrix such that $A^{\mQ}$ is an SVA, then $\mQ$ diagonalizes $\mW$; that is $\mW = \mQ \mD^2 \mQ^{-1}$.
\end{corollary}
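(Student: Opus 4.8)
The plan is to read off the claim directly from the two Gramian transformation identities \eqref{eqn:gbp} and \eqref{eqn:gbs} that were already established in passing. First I would fix any invertible $\mQ$ for which $B = A^{\mQ}$ is an SVA; such a $\mQ$ exists by Theorems~\ref{thm:sva} and~\ref{thm:iffsvd} because $f \in \ltworat$, and for any such $\mQ$ the SVA $B$ has Gramians $\mG_{B,p} = \mG_{B,s} = \mD$, where $\mD = \diag(\sigma_1,\ldots,\sigma_n)$ collects the Hankel singular values. Inverting the conjugation rules \eqref{eqn:gbp} and \eqref{eqn:gbs} then yields the two factorizations
\begin{equation*}
\mG_{A,p} = \mQ^{-\top} \mD \mQ^{-1}, \qquad \mG_{A,s} = \mQ \mD \mQ^\top \enspace.
\end{equation*}

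Next I would multiply these in the order defining $\mW$, namely
\begin{equation*}
\mW = \mG_{A,s}\,\mG_{A,p} = \left(\mQ \mD \mQ^\top\right)\left(\mQ^{-\top} \mD \mQ^{-1}\right) = \mQ \mD \left(\mQ^\top \mQ^{-\top}\right) \mD \mQ^{-1} = \mQ \mD^2 \mQ^{-1} \enspace,
\end{equation*}
where the inner factor collapses because $\mQ^\top \mQ^{-\top} = \mQ^\top (\mQ^\top)^{-1} = \mI$. This single identity delivers all three assertions simultaneously: it exhibits $\mW$ as similar to the diagonal matrix $\mD^2$, hence $\mW$ is diagonalizable; it identifies the eigenvalues of $\mW$ with the diagonal entries of $\mD^2$, i.e.\ $\lambda_i(\mW) = \sigma_i^2 = \sigma_i(\H_f)^2$; and it shows that the very matrix $\mQ$ conjugating $A$ into its SVA form is a diagonalizing matrix for $\mW$.

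Because nothing in the computation used a particular choice of $\mQ$ beyond the requirement that $A^{\mQ}$ be an SVA, the ``furthermore'' clause follows for every such $\mQ$ at no extra cost. I expect no genuine obstacle here: the argument is a one-line matrix multiplication, and the only points that warrant care are the transpose/inverse bookkeeping in the cancellation $\mQ^\top \mQ^{-\top} = \mI$ and the observation that \eqref{eqn:gbp}--\eqref{eqn:gbs} are exactly the rules derived while proving Proposition~\ref{prop:definedgramians}, so that the well-definedness of the Gramians (guaranteed by minimality and $f \in \ltworat$) is already in force and need not be revisited.
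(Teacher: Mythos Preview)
Your proposal is correct and follows essentially the same route as the paper: both arguments simply combine the two transformation identities \eqref{eqn:gbp} and \eqref{eqn:gbs} to obtain $\mQ^{-1}\mW\mQ = \mD^2$, from which diagonalizability, the eigenvalue identification, and the ``furthermore'' clause are immediate. The only cosmetic difference is that you first invert each identity and then multiply, whereas the paper multiplies the identities directly in the form $\mG_{B,s}\mG_{B,p} = \mD^2 = \mQ^{-1}\mG_{A,s}\mG_{A,p}\mQ$.
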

\begin{proof}
Let $B = A^{\mQ}$ be an SVA for $f$ as above. By multiplying \eqref{eqn:gbp} and \eqref{eqn:gbs} together we see that
\begin{equation}
\mG_{B,s} \mG_{B,p} = \mD^2 = \mQ^{-1} \mG_{A,s} \mG_{A,p} \mQ = \mQ^{-1} \mW \mQ \enspace.
\end{equation}
Therefore, $\mW$ is diagonalizable and its eigenvalues are the squares of the Hankel singular values of $f$. Additionally, the above expression shows that $\mQ$ necessarily is a matrix of eigenvectors for $\mW$.
\end{proof}

\subsection{Gramian Fixed-Point Equations}\label{sec:fpeqns}

In addition to their definitions in terms of a FB factorization, the gramian matrices of a WFA can be characterized in terms of fixed-point equations. This point of view will prove useful later both for theoretical arguments as well as for developing algorithms for computing them.

\begin{theorem}\label{thm:fpeqs}
Let $A = \wa$ be a WFA with $n$ states such that the corresponding gramians $\mG_p$ and $\mG_s$ are well-defined. Then $\mX = \mG_p$ and $\mY = \mG_s$ are solutions to the following fixed-point equations:
\begin{align}
\mX &= \azero \azero^\top + \sum_{\symbola \in \Sigma} \A_\symbola^\top \mX \A_\symbola \enspace, \label{eqn:fpalpha} \\
\mY &= \ainf \ainf^\top + \sum_{\symbola \in \Sigma} \A_\symbola \mY \A_\symbola^\top \enspace. \label{eqn:fpbeta}
\end{align}
\end{theorem}
\begin{proof}
Recall that $\mG_p = \mP^\top \mP$ with $\mP \in \R^{\sstar \times n}$, and the row of $\mP$ corresponding to $x \in \sstar$ is given by $\azero^\top \A_x$. Expanding this definitions we get
\begin{align*}
\mG_p &= \sum_{x \in \sstar} (\A_x^\top \azero) (\azero^\top \A_x) \\
&= \azero \azero^\top + \sum_{x \in \Sigma^+} (\A_x^\top \azero) (\azero^\top \A_x) \\
&= \azero \azero^\top + \sum_{\symbola \in \Sigma} \sum_{x \in \sstar} \A_\symbola^\top (\A_x^\top \azero) (\azero^\top \A_x) \A_\symbola \\
&= \azero \azero^\top + \sum_{\symbola \in \Sigma} \A_\symbola^\top \left(\sum_{x \in \sstar} (\A_x^\top \azero) (\azero^\top \A_x) \right) \A_\symbola \enspace,
\end{align*}
where we just used that $\A_x^\top = (\A_{x_1} \cdots \A_{x_t})^\top = \A_{x_t}^\top \cdots \A_{x_1}^\top$ and that any string $y \in \Sigma^+$ satisfies $y = x \symbola$ for some $x \in \sstar$ and $\symbola \in \Sigma$. The derivation for $\mG_s$ follows exactly the same pattern.
\end{proof}

We note here that in the simple case where $|\Sigma| = 1$ equations \eqref{eqn:fpalpha} and \eqref{eqn:fpbeta} are special cases of the well-known \emph{discrete Lyapunov equation}.

Another important remark about this result is that the same argument used in the proof can be used to show that the matrices $\mG_p^{(t)}$ and $\mG_s^{(t)}$ defined in equations \eqref{eq:gpt} and \eqref{eq:gst} satisfy the following recurrence relations for any $t \geq 0$:
\begin{align}
\mG_p^{(t+1)} &= \azero \azero^\top + \sum_{\symbola \in \Sigma} \A_\symbola^\top \mG_p^{(t)} \A_\symbola \enspace, \label{eqn:recgp} \\
\mG_s^{(t+1)} &= \ainf \ainf^\top + \sum_{\symbola \in \Sigma} \A_\symbola \mG_s^{(t)} \A_\symbola^\top \enspace. \label{eqn:recgs}
\end{align}
Thus, for any WFA $A$ with $n$ states it will be convenient to define the mappings $F_p, F_s : \R^{n \times n} \to \R^{n \times n}$ given by
\begin{align}
F_p(\mX) &= \azero \azero^\top + \sum_{\symbola \in \Sigma} \A_\symbola^\top \mX \A_\symbola \enspace, \label{eqn:fp} \\
F_s(\mY) &= \ainf \ainf^\top + \sum_{\symbola \in \Sigma} \A_\symbola \mY \A_\symbola^\top \enspace. \label{eqn:fs}
\end{align}
With this notation, the results from this section can be summarized by saying that for any $t \geq 0$ we have $\mG_p^{(t)} = F_p^{t+1}(\mat{0})$, and when the Gramian $\mG_p$ is defined then it is a fixed point of the form $F_p(\mX) = \mX$ which can be obtained as $\lim_{t \to \infty} F_p^t(\mat{0})$. The same results apply to $\mG_s$ by replacing $F_p$ with $F_s$.

These maps satisfy an important property when applied to positive semi-definite matrices.

\begin{lemma}\label{lem:FpFsmonotone}
The maps $F_p$ and $F_s$ defined in \eqref{eqn:fp} and \eqref{eqn:fs} are monotonically increasing with respect to the Loewner order.
\end{lemma}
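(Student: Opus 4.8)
The plan is to prove monotonicity directly from the definitions by exploiting the fact that both maps are affine in their matrix argument, so that the constant term cancels when one forms a difference. Concretely, suppose $\mX_1 \geq \mX_2$ in the Loewner order, meaning $\mZ = \mX_1 - \mX_2 \geq \mat{0}$. Because the term $\azero \azero^\top$ in \eqref{eqn:fp} does not depend on the argument, it cancels and one obtains
\begin{equation*}
F_p(\mX_1) - F_p(\mX_2) = \sum_{\symbola \in \Sigma} \A_\symbola^\top \mZ \A_\symbola \enspace.
\end{equation*}
The whole problem thus reduces to showing that this sum is positive semi-definite whenever $\mZ \geq \mat{0}$.

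For this I would invoke the standard fact that congruence preserves positive semi-definiteness: for any matrix $\mM$ and any $\mZ \geq \mat{0}$ one has $\mM^\top \mZ \mM \geq \mat{0}$, since for every vector $\v$ we can write $\v^\top \mM^\top \mZ \mM \v = (\mM \v)^\top \mZ (\mM \v) \geq 0$ using only that $\mZ$ is positive semi-definite. Applying this with $\mM = \A_\symbola$ shows each summand $\A_\symbola^\top \mZ \A_\symbola$ is positive semi-definite. Finally, since the positive semi-definite matrices form a cone (as already recalled in the preliminaries, where it is used to justify that the Loewner relation is a partial order), the cone is closed under addition, so the finite sum over $\symbola \in \Sigma$ is again positive semi-definite. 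Hence $F_p(\mX_1) \geq F_p(\mX_2)$.

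The argument for $F_s$ is identical after replacing the congruence $\A_\symbola^\top \mZ \A_\symbola$ by $\A_\symbola \mZ \A_\symbola^\top$, which is the same statement applied with $\mM = \A_\symbola^\top$; the constant term $\ainf \ainf^\top$ again cancels in the difference. There is no genuine obstacle here: the only substantive ingredient is the congruence-preserves-semidefiniteness property, and everything else is a one-line cancellation plus closure of the semidefinite cone under addition. If anything merits a word of care, it is simply making explicit that the affine (rather than merely linear) form of $F_p$ and $F_s$ is precisely what lets the inhomogeneous term drop out, so that monotonicity is governed entirely by the linear part $\mX \mapsto \sum_\symbola \A_\symbola^\top \mX \A_\symbola$.
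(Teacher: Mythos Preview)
Your proof is correct and follows essentially the same approach as the paper: form the difference $F_p(\mX_1)-F_p(\mX_2)$, cancel the constant term, and conclude via the fact that $\mM^\top \mZ \mM \geq \mat{0}$ for $\mZ \geq \mat{0}$ together with closure of the PSD cone under addition. The paper's version is slightly terser and phrases the hypothesis as ``$\mX$ and $\mY$ positive semi-definite with $\mX \geq \mY$'', but the argument is otherwise identical.
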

\begin{proof}
Let $\mX$ and $\mY$ be positive semi-definite matrices satisfying $\mX \geq \mY$. We need to show $F_p(\mX) \geq F_p(\mY)$. Recalling that for any matrices $\mM \geq \mat{0}$ and $\mQ$ one has $\mQ^\top \mM \mQ \geq \mat{0}$, we see that
\begin{equation}
F_p(\mX) - F_p(\mY) = \sum_a \A_a^\top (\mX - \mY) \A_a \geq 0 \enspace,
\end{equation}
since positive semi-definite matrices are closed under addition. The claim for $F_s$ follows from a similar argument.
\end{proof}

Finally, we conclude this section by stating a simple observation about the sequences $\mG_p^{(t)}$ and $\mG_s^{(t)}$ that will prove useful in the sequel.
 
\begin{lemma}\label{lem:gptpsd}
One has $\mG_p^{(t+1)} \geq \mG_p^{(t)}$ and $\mG_s^{(t+1)} \geq \mG_s^{(t)}$ for any $t$.
\end{lemma}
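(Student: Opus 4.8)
The plan is to prove both inequalities directly from the definitions in equations~\eqref{eq:gpt} and~\eqref{eq:gst}, without invoking the fixed-point formalism at all. The key observation is that $\Sigma^{\leq t+1}$ is the disjoint union of $\Sigma^{\leq t}$ and $\Sigma^{t+1}$, so the two partial sums defining $\mG_p^{(t)}$ and $\mG_p^{(t+1)}$ differ by exactly the block of terms indexed by strings of length $t+1$.

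Concretely, I would write
\[
\mG_p^{(t+1)} - \mG_p^{(t)} = \sum_{x \in \Sigma^{t+1}} \mP(x,:)^\top \mP(x,:) \enspace.
\]
Each summand $\mP(x,:)^\top \mP(x,:)$ is the outer product of the column vector $\mP(x,:)^\top$ with itself, hence a rank-one positive semi-definite matrix. Since the positive semi-definite matrices form a cone and are in particular closed under addition (the same fact used in the proof of Lemma~\ref{lem:FpFsmonotone}), the entire sum is positive semi-definite, which is by definition the statement $\mG_p^{(t+1)} \geq \mG_p^{(t)}$ in the Loewner order. The argument for $\mG_s^{(t+1)} \geq \mG_s^{(t)}$ is verbatim the same with $\mP$ replaced by $\mS$.

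As an alternative I would note that the same conclusion follows inductively from the recurrences~\eqref{eqn:recgp} and~\eqref{eqn:recgs} and the monotonicity of $F_p$ and $F_s$: setting $\mG_p^{(-1)} := \mat{0}$ as base case gives $\mG_p^{(0)} = F_p(\mat{0}) \geq \mat{0} = \mG_p^{(-1)}$, and applying the monotone map $F_p$ to the inductive hypothesis $\mG_p^{(t)} \geq \mG_p^{(t-1)}$ yields $\mG_p^{(t+1)} = F_p(\mG_p^{(t)}) \geq F_p(\mG_p^{(t-1)}) = \mG_p^{(t)}$.

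There is no genuine obstacle in this lemma; it is essentially immediate. The only point requiring a modicum of care is the bookkeeping, namely correctly identifying the difference of the two partial sums as the sum over $\Sigma^{t+1}$ and recalling that $\geq$ here means membership of the difference in the positive semi-definite cone. For that reason I would present the direct computation, as it is the shortest and most self-contained route.
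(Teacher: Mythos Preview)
Your primary argument is exactly the paper's proof: both compute the difference $\mG_p^{(t+1)} - \mG_p^{(t)} = \sum_{|x|=t+1} \mP(x,:)^\top \mP(x,:)$ (and analogously for $\mG_s$) and observe that this is a sum of rank-one positive semi-definite matrices. Your alternative inductive route via Lemma~\ref{lem:FpFsmonotone} is also valid, but the paper opts for the direct computation, just as you recommend.
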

\begin{proof}
These just follow from \eqref{eq:gpt} and \eqref{eq:gst} by observing that the differences
\begin{align*}
\mG_p^{(t+1)} - \mG_p^{(t)} &= \sum_{|x| = t+1} \mP(x,:)^\top \mP(x,:) \enspace, \\
\mG_s^{(t+1)} - \mG_s^{(t)} &= \sum_{|x| = t+1} \mS(x,:)^\top \mS(x,:) \enspace,
\end{align*}
are positive semi-definite matrices.
\end{proof}

\subsection{Applications of Gramians}

We have seen so far that having the Gramians of a minimal WFA $A$ computing a rational function $f_A \in \ltworat$ is enough to efficiently find the SVA of $A$. We now show how having the Gramians of $A$ is also useful to compute several other quantities associated with $f_A$, including its $\ltwo$ norm.

\begin{theorem}\label{thm:gramiannorm}
Let $A = \wa$ be a WFA computing a rational function $f$. Then the following hold:
\begin{enumerate}
\item If the Gramian $\mG_s$ is defined then $\norm{f}_2^2 = \azero^\top \mG_s \azero$.
\item If the Gramian $\mG_p$ is defined then $\norm{f}_2^2 = \ainf^\top \mG_p \ainf$.
\item If both Gramians are defined then $\normop{\H_f}^2 = \rho(\mG_p \mG_s)$ and $\normstwo{\H_f}^2 = \Tr(\mG_p \mG_s)$.
\end{enumerate}
\end{theorem}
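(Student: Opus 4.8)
The plan is to reduce all three claims to the forward--backward factorization $\H_f = \mP \mS^\top$ induced by $A$, where $\mP(x,:) = \azero^\top \A_x$ and $\mS(x,:) = (\A_x \ainf)^\top$, together with the identities $\mG_p = \mP^\top \mP$ and $\mG_s = \mS^\top \mS$. The key elementary observation is that $f(x)$ can be read off either factor: since $\mS(x,:)\azero = (\A_x \ainf)^\top \azero = \azero^\top \A_x \ainf = f(x)$ and $\mP(x,:)\ainf = \azero^\top \A_x \ainf = f(x)$, the value $f(x)$ equals both $\mS(x,:)\azero$ and $\mP(x,:)\ainf$.

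For the first claim I would write $f(x)^2 = \azero^\top \mS(x,:)^\top \mS(x,:) \azero$, sum over $x \in \sstar$, and pull the constant vectors outside the sum to obtain $\norm{f}_2^2 = \azero^\top \big(\sum_x \mS(x,:)^\top \mS(x,:)\big) \azero = \azero^\top \mG_s \azero$. The interchange of summation is legitimate precisely because $\mG_s$ is assumed well-defined, so the entrywise series converge absolutely. The second claim is entirely symmetric: expand $f(x)^2 = \ainf^\top \mP(x,:)^\top \mP(x,:) \ainf$ and sum to get $\norm{f}_2^2 = \ainf^\top \mG_p \ainf$, using well-definedness of $\mG_p$.

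For the third claim I would first compute $\H_f^\top \H_f = \mS \mP^\top \mP \mS^\top = \mS \mG_p \mS^\top$. Setting $\mM = \mS$ and $\mN = \mG_p \mS^\top$, the standard fact that $\mM\mN$ and $\mN\mM$ share the same nonzero eigenvalues shows that the nonzero eigenvalues of $\mS \mG_p \mS^\top$ coincide with those of $\mN\mM = \mG_p \mS^\top \mS = \mG_p \mG_s$. Since $\mG_p, \mG_s \geq 0$, the product $\mG_p \mG_s$ has the same nonzero eigenvalues as the symmetric positive semidefinite matrix $\mG_p^{1/2} \mG_s \mG_p^{1/2}$, so all its eigenvalues are real and nonnegative and $\rho(\mG_p \mG_s)$ is its largest eigenvalue. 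As $\normop{\H_f}^2$ is the largest eigenvalue of $\H_f^\top \H_f$, this yields $\normop{\H_f}^2 = \rho(\mG_p \mG_s)$. For the Frobenius norm I would use $\normstwo{\H_f}^2 = \normf{\H_f}^2 = \Tr(\H_f \H_f^\top)$ together with the cyclic trace identity already exploited in the proof of Proposition~\ref{prop:definedgramians}, namely $\Tr(\mP \mS^\top \mS \mP^\top) = \Tr(\mP^\top \mP \mS^\top \mS) = \Tr(\mG_p \mG_s)$.

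The main obstacle is the operator-norm identity, since $\H_f$, $\mP$, and $\mS$ are infinite matrices and neither the ``$\mM\mN$ versus $\mN\mM$'' eigenvalue argument nor the cyclic trace manipulation is automatically valid in infinite dimensions. The clean way around this is to note that well-definedness of the gramians forces $f \in \ltworat$ by Proposition~\ref{prop:definedgramians}, so that $H_f$ is a bounded finite-rank (hence compact) operator admitting a genuine SVD $\H_f = \mU \mD \mV^\top$ with $\mD = \diag(\sigma_1,\ldots,\sigma_n)$. The squared singular values $\sigma_i^2$ are then genuinely the (finitely many) eigenvalues of the finite matrix $\mG_p \mG_s$, all the relevant series converge, and the finite-dimensional spectral and trace computations above are thereby justified.
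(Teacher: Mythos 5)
Your proofs of parts 1 and 2 coincide with the paper's: both expand $f(x)^2$ through the backward (resp.\ forward) vectors and pull $\azero$ (resp.\ $\ainf$) out of the sum over $\sstar$, with well-definedness of the Gramian (columns of $\mS$, $\mP$ in $\ltwo$, hence absolute convergence by Cauchy--Schwarz) licensing the interchange. For part 3 you take a genuinely different route. The paper's proof simply invokes Corollary~\ref{cor:eigengpgs}: for a \emph{minimal} WFA, the change of basis $\mQ$ carrying $A$ to its SVA diagonalizes $\mG_{s}\mG_{p}$ with eigenvalues $\sigma_i(\H_f)^2$, a fact that rests on the existence of the SVA (Theorems~\ref{thm:sva} and~\ref{thm:iffsvd}) and on equations \eqref{eqn:gbp}--\eqref{eqn:gbs}. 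You instead argue directly from the induced factorization: $\H_f^\top\H_f = \mS\mG_p\mS^\top$, the nonzero eigenvalues of $\mM\mN$ and $\mN\mM$ agree, so the nonzero spectrum of $\H_f^\top\H_f$ is that of $\mG_p\mG_s$, which in turn matches the positive semi-definite matrix $\mG_p^{1/2}\mG_s\mG_p^{1/2}$; compactness of $H_f$ (via Proposition~\ref{prop:definedgramians} and Theorem~\ref{thm:iffsvd}) then identifies $\normop{\H_f}^2$ with the top eigenvalue and $\normstwo{\H_f}^2$ with the trace. Both arguments are sound, but yours buys something concrete: the theorem as stated does \emph{not} assume $A$ minimal, and the direction of Proposition~\ref{prop:definedgramians} you use (Gramians defined implies $f \in \ltworat$) holds without minimality, as the paper itself remarks; so your factorization argument proves part 3 exactly under the stated hypotheses, whereas the paper's citation of Corollary~\ref{cor:eigengpgs} strictly covers only the minimal case and would need an implicit reduction otherwise (in the non-minimal case one should say, as you do earlier, ``nonzero eigenvalues'' of $\mG_p\mG_s$, since zero eigenvalues may appear). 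What the paper's route buys in exchange is brevity and reuse of machinery already established for the SVA.
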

\begin{proof}
Suppose $\mG_s$ is defined. Letting $\bar{x}$ denote the reverse of a string $x$, the first equation follows from
\begin{align*}
\norm{f}_2^2 &= \sum_{x \in \sstar} f(x)^2 = \sum_{x \in \sstar} \left(\azero^\top \A_x \ainf\right) \left( \ainf^\top \A_{\bar{x}}^\top \azero\right) = \azero^\top \left(\sum_{x \in \sstar} \A_x \ainf \ainf^\top \A_{\bar{x}}^\top\right) \azero \\
&= \azero^\top \left(\sum_{x \in \sstar} \mS(x,:)^\top \mS(x,:) \right) \azero = \azero^\top \mG_s \azero \enspace.
\end{align*}
By writing $f(x)^2 = (\ainf^\top \A_{\bar{x}}^\top \azero) (\azero^\top \A_x \ainf)$, the proof of $\norm{f}_2^2 = \ainf^\top \mG_p \ainf$ follows from the same argument.

Now suppose both Gramians are defined and recall from Proposition~\ref{prop:definedgramians} that this implies $f \in \ltworat$. Therefore $\normop{H_f}$ and $\normstwo{H_f}$ are both finite. The desired equations follow directly from Corollary~\ref{cor:eigengpgs} by noting that $\rho(\mG_p \mG_s) = \lambda_1(\mG_p \mG_s) = \sigma_1(\H_f)^2 = \normop{\H_f}^2$ and $\Tr(\mG_p \mG_s) = \sum_{i=1}^n \lambda_i(\mG_p \mG_s) = \sum_{i=1}^n \sigma_i(\H_f)^2 = \normstwo{\H_f}^2$.

\end{proof}

Note that this last result shows that if either the reachability or observability Gramian of a possibly non-minimal WFA $A$ are defined, then we have $f_A \in \ltworat$. This gives a criterion for testing a WFA for finite $\ltwo$ norm in addition to those provided by Theorem~\ref{thm:l2sufficient}. It is also interesting to contrast this results with Proposition~\ref{prop:definedgramians}, in which we showed that if $A$ is minimal and $f_A \in \ltworat$, then both Gramians are necessarily defined.

\section{Computing the Gramians}\label{sec:computegramians}

In this section we present several algorithmic approaches for computing the SVA of a rational function in $\ltworat$ given in the form of an arbitrary minimal WFA. By Algorithm~\ref{alg:sva} this problem reduces to that of computing the Gramian matrices associated with the given WFA. The first approach works in the particular case where the fixed-point gramian equations have a unique solution, in which case the gramians can be efficiently computed by solving a system of linear equations. The second, more general algorithm is based on the computation of the least solution to a semi-definite system of matrix inequalities.

\subsection{The Unique Solution Case}

The main idea behind our first algorithm for computing the gramian matrices of a WFA is based on directly exploiting the definitions of these matrices. In particular, since $\mG_p = \mP^\top \mP$, we have that $\mG_p(i,j)$ is the inner product between the $i$th and the $j$th columns of $\mP$. By noting that each of these columns is in fact a rational function, we see that computing $\mG_p$ can be reduced to the problem of computing the inner product of two rational functions. Since it is possible to find a closed-form solution to this inner product computation, this observation can be exploited to compute $\mG_p$ directly by obtaining these inner products one at at time. However, we will observe that a significant amount of these calculations can actually be reused from entry to entry. This motivates the development of an improved procedure that efficiently exploits this structure by amortizing the shared computations among all entries in $\mG_p$. Of course, by symmetry the very same arguments can be applied to the gramian $\mG_s$.

We start with the following simple observation about solutions to the gramian fixed-point equations.

\begin{lemma}\label{lem:equiveqs}
Let $A = \wa$ be a WFA with $n$ states and $\mX \in \R^{n \times n}$ an arbitrary matrix. Recall that $\vx = \vecm(\mX) \in \R^{n^2}$ is the vector obtained by concatenating the columns of $\mX$. Then the following hold:
\begin{enumerate}
\item $\mX$ is a solution of $\mX = \azero \azero^\top + \sum_{\symbola} \A_\symbola^\top \mX \A_\symbola$ if and only if $\vx$ is a solution of $(\azero \otimes \azero)^\top = \vx^\top (\mI - \sum_\symbola \A_\symbola \otimes \A_\symbola)$,
\item $\mX$ is a solution of $\mX = \ainf \ainf^\top + \sum_{\symbola} \A_\symbola \mX \A_\symbola^\top$ if and only if $\vx$ is a solution of $(\ainf \otimes \ainf) = (\mI - \sum_\symbola \A_\symbola \otimes \A_\symbola) \vx$.
\end{enumerate}
\end{lemma}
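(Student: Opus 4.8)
The plan is to reduce each matrix fixed-point equation to its stated vector form by applying the vectorization operator to both sides and invoking the standard Kronecker identity for vectorizing a triple matrix product. The key fact I would establish first is that, under the paper's convention $\vecm(\mM)((i-1)d_2+j)=\mM(i,j)$, one has
\[
\vecm(\mathbf{A}\mX\mathbf{B}) = (\mathbf{A}\otimes\mathbf{B}^\top)\,\vecm(\mX)
\]
for conformable matrices $\mathbf{A},\mathbf{B}$. This is a one-line entrywise check: the $(i,l)$ entry of $\mathbf{A}\mX\mathbf{B}$ is $\sum_{k,m}\mathbf{A}(i,k)\mathbf{X}(k,m)\mathbf{B}(m,l)$, and expanding the corresponding entry of the right-hand side using the paper's explicit formulas for $\vecm$ and for the Kronecker product yields exactly the same sum.

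With this identity in hand, both parts follow by specialization. For the first equation I would take $\mathbf{A}=\A_\symbola^\top$ and $\mathbf{B}=\A_\symbola$, obtaining $\vecm(\A_\symbola^\top\mX\A_\symbola)=(\A_\symbola^\top\otimes\A_\symbola^\top)\vx$, and I would record the easy direct check $\vecm(\azero\azero^\top)=\azero\otimes\azero$. Applying $\vecm$ to the whole equation and using linearity then gives $\vx=\azero\otimes\azero+\left(\sum_\symbola\A_\symbola\otimes\A_\symbola\right)^\top\vx$, where I also use $(\A_\symbola\otimes\A_\symbola)^\top=\A_\symbola^\top\otimes\A_\symbola^\top$. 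Rearranging and transposing the resulting identity produces precisely $\vx^\top\left(\mI-\sum_\symbola\A_\symbola\otimes\A_\symbola\right)=(\azero\otimes\azero)^\top$. The second part is entirely parallel, taking $\mathbf{A}=\A_\symbola$ and $\mathbf{B}=\A_\symbola^\top$ so that $\vecm(\A_\symbola\mX\A_\symbola^\top)=(\A_\symbola\otimes\A_\symbola)\vx$ and $\vecm(\ainf\ainf^\top)=\ainf\otimes\ainf$; here no transpose of the Kronecker sum arises, so one lands directly on the column form $(\ainf\otimes\ainf)=\left(\mI-\sum_\symbola\A_\symbola\otimes\A_\symbola\right)\vx$.

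The \emph{if and only if} is then immediate in both cases, because $\vecm$ is a linear bijection between $\R^{n\times n}$ and $\R^{n^2}$: two matrices agree exactly when their images under $\vecm$ agree, so $\mX$ solves the matrix equation precisely when $\vx$ solves the associated vector equation. No further analytic content is needed, since this lemma is a purely algebraic reformulation.

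I expect the only real friction to be the Kronecker-transpose bookkeeping together with the paper's $\vecm$ convention, whose index formula corresponds to row stacking and therefore replaces the textbook identity $\vecm(\mathbf{A}\mX\mathbf{B})=(\mathbf{B}^\top\otimes\mathbf{A})\vecm(\mX)$ by the transposed form $(\mathbf{A}\otimes\mathbf{B}^\top)$. This asymmetry is exactly what makes the first part land in row-vector form, requiring a transpose of the linear system, while the second lands directly in column-vector form; keeping the two cases aligned is the one place where an order- or transpose-error could slip in, so I would verify each specialization against the explicit entrywise definitions rather than quoting the identity from memory.
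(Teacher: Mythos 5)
Your proof is correct and follows essentially the same route as the paper's, which is a one-line appeal to the identities $\vecm(\v \v^\top)=\v\otimes\v$ and $\vecm(\A\mX\B^\top)=(\B\otimes\A)\vecm(\mX)$ together with linearity (and, implicitly, bijectivity) of $\vecm$. Your extra care with the paper's row-major indexing convention is reasonable but turns out to be immaterial here: since the same matrix $\A_\symbola$ occupies both Kronecker slots and $\azero\azero^\top$, $\ainf\ainf^\top$ are symmetric rank-one terms, the row-stacked and column-stacked conventions yield the identical linear systems, which is why the paper can quote the textbook column-vec identity without harm.
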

\begin{proof}
The result follows immediately from the well-known relations $\vecm(\v \v^\top) = \v \otimes \v$ and $\vecm(\A \mX \B^\top) = (\B \otimes \A) \vecm(\mX)$, and the linearity of the $\vecm(\bullet)$ operation.
\end{proof}

Now we can show that the fixed-point equations have a unique solution when a simple condition is satisfied. This yields an efficient algorithm for computing $\mG_p$ and $\mG_s$ when an easily testable condition holds.

\begin{theorem}\label{thm:uniquefp}
Let $A = \wa$ be a WFA with $n$ states and denote by $\rho$ the spectral radius of the matrix $\sum_\symbola \A_\symbola \otimes \A_\symbola$. If $\rho < 1$ then the following are satisfied:
\begin{enumerate}
\item $\vx = \vecm(\mG_p)$ is the unique solution to $(\azero \otimes \azero)^\top = \vx^\top (\mI - \sum_\symbola \A_\symbola \otimes \A_\symbola)$
\item $\vy = \vecm(\mG_s)$ is the unique solution to $(\ainf \otimes \ainf) = (\mI - \sum_\symbola \A_\symbola \otimes \A_\symbola) \vy$
\end{enumerate}
\end{theorem}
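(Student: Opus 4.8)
The plan is to exploit the vectorized reformulation from Lemma~\ref{lem:equiveqs} together with a Neumann-series argument. Write $\mM = \sum_{\symbola \in \Sigma} \A_\symbola \otimes \A_\symbola$, so that the two linear systems in the statement read $(\mI - \mM^\top)\vx = \azero \otimes \azero$ and $(\mI - \mM)\vy = \ainf \otimes \ainf$. The single hypothesis $\rho = \rho(\mM) < 1$ is going to be used twice: since $\mM$ and $\mM^\top$ share the same spectrum, $1$ is an eigenvalue of neither, so both $\mI - \mM$ and $\mI - \mM^\top$ are invertible; moreover the geometric series $\sum_{k \geq 0}\mM^k$ and $\sum_{k \geq 0}(\mM^\top)^k$ converge to $(\mI - \mM)^{-1}$ and $(\mI - \mM^\top)^{-1}$ respectively.

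First I would establish that the Gramians are well-defined and compute them explicitly. Starting from $\mG_p = \sum_{x \in \sstar}(\A_x^\top \azero)(\azero^\top \A_x)$ and applying $\vecm$ termwise using the identities $\vecm(\v \v^\top) = \v \otimes \v$ and $\vecm(\A \mX \B^\top) = (\B \otimes \A)\vecm(\mX)$ from Lemma~\ref{lem:equiveqs}, each summand becomes $(\A_x \otimes \A_x)^\top(\azero \otimes \azero)$. The key algebraic step is the mixed-product property of the Kronecker product, which gives $\A_x \otimes \A_x = \prod_{i}(\A_{x_i} \otimes \A_{x_i})$ and hence $\sum_{|x| = t}\A_x \otimes \A_x = \mM^t$. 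Summing over all lengths then yields $\vecm(\mG_p) = \big(\sum_{t \geq 0}(\mM^\top)^t\big)(\azero \otimes \azero) = (\mI - \mM^\top)^{-1}(\azero \otimes \azero)$, where convergence of the partial sums (equivalently, of the fixed-point iterates $\vecm(F_p^{t+1}(\mat{0}))$ from Section~\ref{sec:fpeqns}) both certifies that $\mG_p$ exists as a finite matrix and pins down its value.

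Having the explicit value, item 1 follows at once: $\vecm(\mG_p)$ solves $(\mI - \mM^\top)\vx = \azero \otimes \azero$ by construction, and invertibility of $\mI - \mM^\top$ makes this solution unique; transposing recovers the form $(\azero \otimes \azero)^\top = \vx^\top(\mI - \mM)$ stated in the theorem. Item 2 is entirely symmetric: the same computation applied to $\mG_s = \sum_{x \in \sstar}(\A_x \ainf)(\A_x \ainf)^\top$ gives $\vecm(\mG_s) = (\mI - \mM)^{-1}(\ainf \otimes \ainf)$, the unique solution of $(\mI - \mM)\vy = \ainf \otimes \ainf$. The only genuinely delicate points are the bookkeeping of transposes—item 1 produces $\mM^\top$ while item 2 produces $\mM$, matching the asymmetric placement of $\vx^\top$ versus $\vy$ in the statement—and the justification that the termwise $\vecm$ and the interchange of the sum over $x$ with the matrix powers are legitimate, which is exactly where the absolute convergence guaranteed by $\rho < 1$ is needed.
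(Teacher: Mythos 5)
Your proposal is correct, but it reaches the conclusion by a genuinely different route than the paper, and the difference is worth spelling out. The paper's existence step is structural: from $\rho < 1$ and Lemma~\ref{lem:denis} it deduces $f_A^2 \in \lonerat$, hence $f_A \in \ltworat$; Proposition~\ref{prop:definedgramians} then guarantees that $\mG_p$ and $\mG_s$ are well-defined, and Theorem~\ref{thm:fpeqs} combined with Lemma~\ref{lem:equiveqs} shows that $\vecm(\mG_p)$ and $\vecm(\mG_s)$ solve the two linear systems. Its uniqueness step observes that the difference of two solutions would be an eigenvector of $\mM = \sum_{\symbola} \A_\symbola \otimes \A_\symbola$ with eigenvalue $1$, impossible when $\rho < 1$; this is exactly your invertibility of $\mI - \mM$ stated contrapositively, so the uniqueness halves coincide. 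Where you genuinely diverge is existence: instead of routing through the $\lonerat$/$\ltworat$ machinery, you compute the Gramians outright, using the mixed-product property to get $\sum_{|x| = t} \A_x \otimes \A_x = \mM^t$ and the Neumann series to obtain $\vecm(\mG_p) = (\mI - \mM^\top)^{-1}(\azero \otimes \azero)$ and $\vecm(\mG_s) = (\mI - \mM)^{-1}(\ainf \otimes \ainf)$, with $\rho < 1$ supplying the absolute convergence that legitimizes the rearrangements. This buys you two things: a closed form for the Gramians rather than bare existence, and---more importantly---an argument that nowhere uses minimality of $A$. The direction of Proposition~\ref{prop:definedgramians} that the paper invokes (``$f_A \in \ltworat$ implies the Gramians are defined'') does require $A$ minimal, as the paper itself emphasizes with the counterexample of Figure~\ref{fig:nogram}; since the theorem is stated for an arbitrary WFA, your self-contained computation actually repairs this small mismatch between the theorem's hypotheses and the tools cited in the paper's proof. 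What the paper's route buys in exchange is economy and continuity: it reuses results needed elsewhere, and its fixed-point formulation is the one that extends to the degenerate case where $1$ is an eigenvalue of $\mM$, which is handled by the semidefinite program of the following subsection.
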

\begin{proof}
Recall that the WFA $B = \langle \azero \otimes \azero, \ainf \otimes \ainf, \{\A_\symbola \otimes \A_\symbola\} \rangle$ satisfies $f_B = f_A^2$. Therefore we have $f_B(x) \geq 0$ for all $x \in \sstar$. Using the assumption on $\rho$ and Lemma~\ref{lem:denis} we see that $f_B \in \lonerat$ and therefore $f_A \in \ltworat$, which by Proposition~\ref{prop:definedgramians} implies that $\mG_{A,p}$ and $\mG_{A,s}$ are well-defined. Therefore Theorem~\ref{thm:fpeqs} and  Lemma~\ref{lem:equiveqs} tell us that both $(\azero \otimes \azero)^\top = \vx^\top (\mI - \sum_\symbola \A_\symbola \otimes \A_\symbola)$ and $(\ainf \otimes \ainf) = (\mI - \sum_\symbola \A_\symbola \otimes \A_\symbola) \vy$ have at least one solution.

Suppose $\vy, \vy' \in \R^{n^2}$ are two solutions to equation $(\ainf \otimes \ainf) = (\mI - \sum_\symbola \A_\symbola \otimes \A_\symbola) \vy$. This implies that $(\mI - \sum_\symbola \A_\symbola \otimes \A_\symbola) \vy = (\mI - \sum_\symbola \A_\symbola \otimes \A_\symbola) \vy'$, from where we deduce that $\vy - \vy' = (\sum_\symbola \A_\symbola \otimes \A_\symbola) (\vy - \vy')$. Thus, either $\vy = \vy'$ or $\vy - \vy'$ is an eigenvector of $\sum_\symbola \A_\symbola \otimes \A_\symbola$ with eigenvalue $1$. Since the latter is not possible because we assumed $\rho < 1$, we conclude that the solution is unique. The same argument applies to $(\azero \otimes \azero)^\top = \vx^\top (\mI - \sum_\symbola \A_\symbola \otimes \A_\symbola)$.
\end{proof}

\subsection{The General Case}

In the case where the automaton $A = \wa$ is such that $\lambda = 1$ is an eigenvalue of $\sum_a \A_a \otimes \A_a$, then the linear system considered in the previous section will not have a unique solution. For example, this might occur when $A$ is minimal but $A \otimes A$ is not. Therefore, in general we will need some extra information about the gramian matrices in order to find them among the subset of possible solutions of the linear systems given by Lemma~\ref{lem:equiveqs}. This information is provided by our next lemma, which states that the gramian matrices are the least positive-semidefinite solutions of some linear matrix inequalities. Throughout this section we assume that $A = \wa$ is a WFA with $n$ states such that the corresponding gramians $\mG_p$ and $\mG_s$ are well-defined, and therefore the linear systems in Lemma~\ref{lem:equiveqs} admit at least one solution.

\begin{lemma}\label{lem:psdlmi}
The following hold:
\begin{enumerate}
\item The Gramian $\mG_p$ is the least positive semi-definite solution to the linear matrix inequality
\begin{equation}
\mX \geq \azero \azero^\top + \sum_{\symbola \in \Sigma} \A_\symbola^\top \mX \A_\symbola \enspace. \label{eqn:psdalpha}
\end{equation}
\item The Gramian $\mG_s$ is the least positive semi-definite solution to the linear matrix inequality
\begin{equation}
\mY \geq \ainf \ainf^\top + \sum_{\symbola \in \Sigma} \A_\symbola \mY \A_\symbola^\top \enspace. \label{eqn:psdbeta}
\end{equation}
\end{enumerate}
\end{lemma}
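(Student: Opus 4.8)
The plan is to prove the statement for $\mG_p$ via a monotone (Kleene-style) fixed-point argument, the case of $\mG_s$ being then identical with $F_s$ in place of $F_p$. Recall from Section~\ref{sec:fpeqns} the map $F_p(\mX) = \azero \azero^\top + \sum_{\symbola \in \Sigma} \A_\symbola^\top \mX \A_\symbola$, so that the inequality \eqref{eqn:psdalpha} reads exactly $\mX \geq F_p(\mX)$, and recall that $\mG_p^{(t)} = F_p^{t+1}(\mat{0})$ with $\mG_p = \lim_{t \to \infty} F_p^t(\mat{0})$ whenever $\mG_p$ is well-defined (which we assume throughout this section).

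First I would check feasibility. Since $\mG_p = \mP^\top \mP$ it is positive semi-definite, and by Theorem~\ref{thm:fpeqs} it is a fixed point $\mG_p = F_p(\mG_p)$, so in particular $\mG_p \geq F_p(\mG_p)$ and $\mG_p$ is indeed a positive semi-definite solution of \eqref{eqn:psdalpha}. The substance of the lemma is therefore the minimality claim.

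For minimality, let $\mX$ be an arbitrary positive semi-definite matrix with $\mX \geq F_p(\mX)$. I claim $\mX \geq F_p^t(\mat{0})$ for every $t \geq 0$, which I would establish by induction on $t$. The base case $t = 0$ is $\mX \geq \mat{0}$, which holds because $\mX$ is positive semi-definite. For the inductive step, assuming $\mX \geq F_p^t(\mat{0})$, I apply $F_p$ and invoke its monotonicity with respect to the Loewner order (Lemma~\ref{lem:FpFsmonotone}) to get $F_p(\mX) \geq F_p^{t+1}(\mat{0})$; chaining this with the hypothesis $\mX \geq F_p(\mX)$ yields $\mX \geq F_p^{t+1}(\mat{0})$, completing the induction. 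Letting $t \to \infty$ and using that the iterates converge to $\mG_p$ then gives $\mX \geq \mG_p$, which is the desired minimality.

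The step I expect to require the most care is the passage to the limit: from $\mX - F_p^t(\mat{0}) \geq \mat{0}$ for all $t$ I want to conclude $\mX - \mG_p \geq \mat{0}$. This follows because the cone of positive semi-definite matrices is closed, and $F_p^t(\mat{0}) \to \mG_p$, the iterates being monotonically increasing (Lemma~\ref{lem:gptpsd}) with finite limit precisely because $\mG_p$ is assumed well-defined. Hence $\mX - \mG_p$ is a limit of elements of the closed positive semi-definite cone and so lies in it. Once this is in place minimality is immediate, and the symmetric argument with $F_s$ and Lemma~\ref{lem:gptpsd} establishes the corresponding claim for $\mG_s$.
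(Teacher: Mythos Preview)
Your proof is correct and follows essentially the same approach as the paper: both show $\mG_p$ is feasible via Theorem~\ref{thm:fpeqs}, then prove by induction that any positive semi-definite solution $\mX$ of \eqref{eqn:psdalpha} dominates all the partial Gramians $\mG_p^{(t)}$, and pass to the limit. The only cosmetic difference is that you package the induction through the monotone map $F_p$ and Lemma~\ref{lem:FpFsmonotone}, whereas the paper unrolls the iteration explicitly and carries along the remainder term $\sum_{x \in \Sigma^{t+1}} \A_x^\top \mX \A_x$ before discarding it.
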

\begin{proof}
Since the proofs of both statements follow exactly the same structure, we give only the proof for $\mG_p$. From Theorem~\ref{thm:fpeqs} it follows that $\mG_p$ satisfies \eqref{eqn:psdalpha}. Now let $\mX$ be another positive semi-definite matrix satisfying \eqref{eqn:psdalpha}. We will show by induction that for every $t \geq 0$ we have
\begin{equation}
\mX \geq \sum_{x \in \Sigma^{\leq t}} \A_x^\top \azero \azero^\top \A_x + \sum_{x \in \Sigma^{t+1}} \A_x^\top \mX \A_x \enspace.
\label{eqn:recursionX}
\end{equation}
First note that the case $t = 0$ follows immediately from \eqref{eqn:psdalpha}. Now assume the inequality is true for some $t$ and consider the case $t+1$. We have
\begin{align*}
\mX
&\geq \sum_{x \in \Sigma^{\leq t}} \A_x^\top \azero \azero^\top \A_x + \sum_{x \in \Sigma^{t+1}} \A_x^\top \mX \A_x \\
&\geq \sum_{x \in \Sigma^{\leq t}} \A_x^\top \azero \azero^\top \A_x + \sum_{x \in \Sigma^{t+1}} \A_x^\top \azero \azero^\top \A_x + \sum_{x \in \Sigma^{t+2}} \A_x^\top \mX \A_x \\
&= \sum_{x \in \Sigma^{\leq t+1}} \A_x^\top \azero \azero^\top \A_x + \sum_{x \in \Sigma^{t+2}} \A_x^\top \mX \A_x \enspace,
\end{align*}
where the second inequality uses \eqref{eqn:psdalpha} and the fact that $\mY \geq \mZ$ implies $\mM^\top \mY \mM \geq \mM^\top \mZ \mM$ for any matrix $\mM$. By rewriting \eqref{eqn:recursionX} and noting that $\sum_{x \in \Sigma^{t+1}} \A_x^\top \mX \A_x \geq \mat{0}$ for any $t \geq 0$, we see that
\begin{equation}
\sum_{x \in \Sigma^{\leq t}} \A_x^\top \azero \azero^\top \A_x \leq \mX - \sum_{x \in \Sigma^{t+1}} \A_x^\top \mX \A_x \leq \mX \enspace.
\end{equation}
Since $\mG_p$ is defined we must have $\mG_p = \lim_{t \to \infty} \sum_{x \in \Sigma^{\leq t}} \A_x^\top \azero \azero^\top \A_x$, and therefore $\mG_p \leq \mX$.
\end{proof}

As a direct consequence of the above lemma we get the following characterization for the Gramian matrices of any WFA $A$ with $f_A \in \ltworat$.

\begin{theorem}\label{thm:gramlfp}
The Gramian $\mG_p$ (resp.\ $\mG_s$) is the least positive semi-definite fixed point of \eqref{eqn:fpalpha} (resp.\ \eqref{eqn:fpbeta}).
\end{theorem}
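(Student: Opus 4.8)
The plan is to obtain this statement as an immediate corollary of Lemma~\ref{lem:psdlmi} together with Theorem~\ref{thm:fpeqs}. Indeed, the theorem only upgrades the matrix \emph{inequality} of Lemma~\ref{lem:psdlmi} to an \emph{equality}, so the entire content lies in comparing the solution set of the fixed-point equation with that of the associated linear matrix inequality. I will argue for $\mG_p$ and \eqref{eqn:fpalpha}; the case of $\mG_s$ and \eqref{eqn:fpbeta} is identical after exchanging the roles of $\azero$ and $\ainf$ and transposing the transition matrices.

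First I would check that $\mG_p$ is itself a positive semi-definite fixed point of \eqref{eqn:fpalpha}: it solves the fixed-point equation by Theorem~\ref{thm:fpeqs}, and it is positive semi-definite because it is a Gramian, $\mG_p = \mP^\top \mP$. Hence the collection of positive semi-definite fixed points of \eqref{eqn:fpalpha} is nonempty and contains $\mG_p$. Next I would observe that every solution of the equation \eqref{eqn:fpalpha} is in particular a solution of the inequality \eqref{eqn:psdalpha}, since $\mX = \azero\azero^\top + \sum_\symbola \A_\symbola^\top \mX \A_\symbola$ trivially implies the corresponding $\geq$. Therefore the set of positive semi-definite fixed points of \eqref{eqn:fpalpha} is a subset of the positive semi-definite solutions of \eqref{eqn:psdalpha}.

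The conclusion then follows from a set-theoretic minimality argument: by Lemma~\ref{lem:psdlmi}, $\mG_p$ is the least element (in the Loewner order) of the larger set of solutions to the inequality, and since $\mG_p$ also lies in the smaller subset of exact fixed points, it is a fortiori the least element of that subset as well. There is no genuine obstacle here; the only point to keep in mind is that restricting the quantifier from ``all positive semi-definite solutions of the inequality'' to ``those satisfying equality'' can only preserve the minimality already established for $\mG_p$, so the result is a clean corollary rather than a fresh argument.
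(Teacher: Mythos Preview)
Your proposal is correct and follows essentially the same approach as the paper's own proof, which simply notes that the result follows from Lemma~\ref{lem:psdlmi} because any fixed point of \eqref{eqn:fpalpha} satisfies \eqref{eqn:psdalpha}. Your write-up just makes explicit the trivial inclusion of fixed points into inequality solutions and the minimality-under-subsets observation.
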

\begin{proof}
For $\mG_p$ the result follows from Lemma~\ref{lem:psdlmi} since any fixed-point of \eqref{eqn:fpalpha} satisfies \eqref{eqn:psdalpha}; the same holds for $\mG_s$.
\end{proof}

Using this characterization we can derive an efficient algorithm for finding the Gramian matrices even when the linear systems given by Lemma~\ref{lem:equiveqs} have more than one solution. The solution is based on solving a semi-definite optimization program. For simplicity we only present the optimization problem for finding the Gramian $\mG_s$ and note that a completely symmetric argument also works for $\mG_p$. We start by introducing some notation. Let $\mM = \mI + \sum_{a} \A_a \otimes \A_a \in \R^{n^2 \times n^2}$ and $\vy_0 = \mM^{\dagger} (\ainf \otimes \ainf)$. Also, let $\vy_1, \ldots, \vy_d \in \R^{n^2}$ be a basis of linearly independent vectors for the column-space of the matrix $\mI - \mM^{\dagger} \mM$. For $0 \leq i \leq d$ we write $\mY_i \in \R^{n \times n}$ to denote the matrix such that $\vy_i = \vecm(\mY_i)$. Finally, we let $\pi$ denote the linear map representing the orthogonal projection onto the space of $n \times n$ symmetric matrices, which is given by $\pi(\mY) = (\mY + \mY^\top)/2$. With this notation we define the following semi-definite optimization problem:
\begin{alignat}{3}
& \underset{t_1, \ldots, t_d \in \R}{\text{minimize}}
&\enspace \enspace & \sum_{i=1}^d t_i \Tr(\mY_i) \label{eqn:sdp} \\
& \text{subject to}
& & \pi(\mY_0) + \sum_{i=1}^d t_i \pi(\mY_i) \geq \mat{0} \enspace , \label{eqn:sdpc1} \\
&&& \mY_0 - \mY_0^\top + \sum_{i=1}^d t_i (\mY_i - \mY_i^\top) = \mat{0} \enspace. \label{eqn:sdpc2}
\end{alignat}

\begin{theorem}
Let $t_1^*, \ldots, t_d^*$ be the optimal solution to \eqref{eqn:sdp}. Then the matrix $\mY^* = \mY_0 + \sum_{i=1}^d t_i^* \mY_i$ is the least positive semi-definite solution of $\mY = \ainf \ainf^\top + \sum_{a} \A_a \mY \A_a^\top$.
\end{theorem}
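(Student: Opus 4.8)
The plan is to show that the program \eqref{eqn:sdp}--\eqref{eqn:sdpc2} is nothing more than the minimization of $\Tr(\mY)$ over the set of symmetric positive semi-definite solutions of the fixed-point equation \eqref{eqn:fpbeta}, and then to invoke the least-element characterization of Theorem~\ref{thm:gramlfp} to identify its unique optimum with $\mG_s$. First I would establish that the family $\mY^* = \mY_0 + \sum_{i=1}^d t_i \mY_i$, as $(t_1,\ldots,t_d)$ ranges over $\R^d$, enumerates exactly those matrices $\mY$ whose vectorization solves the linear system $(\ainf \otimes \ainf) = (\mI - \sum_a \A_a \otimes \A_a)\vy$ of Lemma~\ref{lem:equiveqs}. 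Indeed, $\vy_0$ is a particular solution, which exists because the standing assumption that the gramians are well-defined makes the system consistent, while $\vy_1,\ldots,\vy_d$, being a basis of the column space of $\mI - \mM^\dagger \mM$, span the kernel, i.e.\ the homogeneous solution space. By Lemma~\ref{lem:equiveqs} this affine family is therefore precisely the set of all solutions $\mY$ of \eqref{eqn:fpbeta}.

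Next I would decode the two constraints. Constraint \eqref{eqn:sdpc2} asserts that the antisymmetric part of $\mY^*$ vanishes, i.e.\ that $\mY^*$ is symmetric; in that case $\pi(\mY^*) = \mY^*$, so constraint \eqref{eqn:sdpc1} reduces to $\mY^* \geq \mat{0}$. Hence the feasible region is exactly the set of symmetric positive semi-definite solutions of \eqref{eqn:fpbeta}. Moreover, since $\Tr(\mY^*) = \Tr(\mY_0) + \sum_{i=1}^d t_i \Tr(\mY_i)$ and $\Tr(\mY_0)$ is a constant, the objective $\sum_{i=1}^d t_i \Tr(\mY_i)$ differs from $\Tr(\mY^*)$ only by that constant, so the program minimizes $\Tr(\mY^*)$ over this feasible set.

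With the program thus re-read, the conclusion follows from Theorem~\ref{thm:gramlfp}: it guarantees that $\mG_s$ is itself a positive semi-definite solution of \eqref{eqn:fpbeta}, hence feasible, and that it is the \emph{least} such solution in the Loewner order. Consequently every feasible $\mY^*$ satisfies $\mY^* \geq \mG_s$, and the trace-monotonicity of the Loewner order recorded in the preliminaries yields $\Tr(\mY^*) \geq \Tr(\mG_s)$, a bound attained by $\mG_s$ itself. Therefore $\mG_s$ is optimal and the optimal value equals $\Tr(\mG_s)$.

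It remains to upgrade this from an equality of objective values to the identity $\mY^* = \mG_s$. For any optimal $\mY^*$ we have both $\mY^* - \mG_s \geq \mat{0}$ and $\Tr(\mY^* - \mG_s) = 0$; since a positive semi-definite matrix with vanishing trace has all eigenvalues equal to zero and is hence the zero matrix, this forces $\mY^* = \mG_s$. This uniqueness step, together with the least-element property of Theorem~\ref{thm:gramlfp}, is the only point carrying real content; the remaining work (the $\vecm$ identities behind the parametrization and the reading of the two constraints as ``symmetric and positive semi-definite'') is routine. I do not expect a genuine obstacle, but the one place to be careful is verifying that the affine family indeed captures \emph{all} solutions of the linear system, which hinges precisely on the consistency supplied by the assumption that $\mG_s$ is well-defined.
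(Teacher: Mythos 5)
Your proposal is correct and follows essentially the same route as the paper's proof: parametrize all solutions of \eqref{eqn:fpbeta} by the affine family $\mY_0 + \sum_i t_i \mY_i$ via the pseudo-inverse, read constraints \eqref{eqn:sdpc1}--\eqref{eqn:sdpc2} as ``symmetric and positive semi-definite,'' and combine Theorem~\ref{thm:gramlfp} with trace monotonicity of the Loewner order to identify the optimum with $\mG_s$. In fact your final step---that an optimal $\mY^*$ satisfies $\mY^* - \mG_s \geq \mat{0}$ with $\Tr(\mY^* - \mG_s) = 0$, forcing $\mY^* = \mG_s$---makes explicit a uniqueness argument that the paper's proof leaves implicit, so your write-up is, if anything, slightly more complete.
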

\begin{proof}
We start by observing that all solutions to $\mY = \ainf \ainf^\top + \sum_{a} \A_a \mY \A_a^\top$ are of the form $\mY = \mY_0 + \sum_{i=1}^d t_i \mY_i$ for some $t_1, \ldots, t_d$. This follows from the fact that the Moore-Penrose pseudo-inverse can be used to show that every solution of the linear system $\ainf \otimes \ainf = (\mI - \sum_{a} \A_a \otimes \A_a) \vy$ can be written in the form $\mM^{\dagger} (\ainf \otimes \ainf) + (\mI - \mM^{\dagger} \mM) \vz$ for some $\vz \in \R^{n^2}$. Since any solution of this form can be rewritten as $\vy_0 + \sum_{i=1}^d t_i \vy_i$, the claim follows directly by the linearity of the $\vecm(\cdot)$ operation.

Next we show that any matrix of the form $\mY = \mY_0 + \sum_{i=1}^d t_i \mY_i$ satisfying \eqref{eqn:sdpc1} and \eqref{eqn:sdpc2} is symmetric and positive semi-definite. Indeed, if \eqref{eqn:sdpc2} is satisfied then $\pi(\mY) = \mY$ since
\begin{align*}
\mY - \pi(\mY) &= \left(\mY_0 - \frac{\mY_0 + \mY_0^\top}{2}\right) + \sum_{i=1}^d t_i \left(\mY_i - \frac{\mY_i + \mY_i^\top}{2}\right) \\
&= \frac{\mY_0 - \mY_0^\top}{2} + \sum_{i=1}^d t_i \frac{\mY_i - \mY_i^\top}{2} = \mat{0} \enspace.
\end{align*}
Therefore $\mY$ is symmetric and \eqref{eqn:sdpc1} implies $\mY = \pi(\mY) \geq \mat{0}$, so $\mY$ is positive semi-definite.

Finally suppose $\mY$ and $\mY'$ are two positive semi-definite solutions of \eqref{eqn:fpbeta} with $\mY \leq \mY'$. Then by linearity of the trace we have $\Tr(\mY) \leq \Tr(\mY')$. Therefore, the least positive semi-definite solution to \eqref{eqn:fpbeta} is also the positive semi-definite solution with minimum trace $\mY^*$ obtained by solving \eqref{eqn:sdp}.
\end{proof}

\section{Application: Approximate Minimization of WFA}\label{sec:approxmin}

The fact that given a (minimal) WFA realizing a function in $\ltworat$ we can efficiently compute its corresponding SVA opens the door to multiple applications. In this section we focus on the application of SVA to the design and analysis of algorithms for model reduction. To motivate the need for such algorithms, consider the situation where one has a WFA modelling a system of interest and the need arises for testing whether the system satisfies a given property. If testing this property requires multiple evaluations of the function computed by the WFA, the cost of this computation will grow with the number of states, and if the system is large the repeated evaluation of millions of queries might take a very long time. But if the decision about the property being satisfied does not depend too much on the individual answers of each query, it might be acceptable to provide \emph{approximate} answers for each of these queries. If in addition these approximate queries can be performed much faster than exact queries, then the whole testing process can be sped up by trading-off accuracy and query processing time. In the rest of this section we formalise the problem of approximate evaluations of WFA, and provide a solution based on the truncation of SVA canonical forms.

\subsection{Problem Formulation}

We now proceed to give a formal definition of the approximate minimization problem for WFA. Roughly speaking, this corresponds to finding a small WFA computing a good approximation to the function realized by a large minimal WFA. A solution to this problem will yield a way to speed up approximate evaluation of WFA.

Let $A$ be a minimal WFA with $n$ states computing a rational function $f \in \ltworat$. Given a target number of states $\hat{n} < n$, we want to find a WFA $\hat{A}$ with $\hat{n}$ states computing a function $\hat{f}$ which minimizes $\norm{f - \hat{f}}_2$ among all rational function of rank at most $\hat{n}$. This problem can be formulated as an optimization problem as follows:
\begin{equation}\label{eqn:optapprmin}
\inf_{\rank(\hat{f}) \leq \hat{n}} \norm{f - \hat{f}}_2 \enspace.
\end{equation}
The first observation we make about this problem is that, although it is not explicitly encoded in \eqref{eqn:optapprmin}, any solution $\hat{f}$ will have finite $\ltwo$ norm. Indeed, it is easy to see that
\begin{equation}
\norm{\hat{f}}_2 \leq \norm{f}_2 + \norm{f - \hat{f}}_2 =
\norm{f}_2 + \inf_{\rank(\hat{f}) \leq \hat{n}} \norm{f - \hat{f}}_2 \leq 2 \norm{f}_2 \enspace,
\end{equation}
where the last inequality uses that the rational function $0$ has rank $1$ and therefore it is a feasible point of the optimization \eqref{eqn:optapprmin}.

The second important observation is that, like rank constrained optimizations over finite matrices, the optimization in \eqref{eqn:optapprmin} is not convex. To see this, let us write $A = \wa$ for the original automaton and $\hat{A} = \hwa$ for the automaton we are looking for, noting that any automaton with at most $\hat{n}$ states can be written as a (non-minimal) WFA with $\hat{n}$. Then, by the motonicity of $z \mapsto z^2$ we can replace the objective $\norm{f - \hat{f}}_2$ with $\norm{f - \hat{f}}_2^2$, and see that, using the WFA representation for $f - \hat{f}$ and the closed-form expression for $\norm{f - \hat{f}}_2^2$ in terms of this WFA representation, \eqref{eqn:optapprmin} can be rewritten as the minimization over $\hat{A}$ of the quantity
\begin{equation}
[\azero^\top \; \hazero^\top] \otimes [\azero^\top \; \hazero^\top]
\left( \mI - \sum_{a \in \Sigma}
\left[\begin{array}{cc}
\A_a & \mat{0} \\
\mat{0} & \hat{\A}_a
\end{array}
\right]
\otimes
\left[\begin{array}{cc}
\A_a & \mat{0} \\
\mat{0} & \hat{\A}_a
\end{array}
\right]
\right)^{-1}
\left[\begin{array}{c}
\ainf \\
-\hainf
\end{array}\right]
\otimes
\left[\begin{array}{c}
\ainf \\
-\hainf
\end{array}\right] \enspace.
\end{equation}
Since this equivalent objective function is not convex, we have little hope of being able to efficiently solve \eqref{eqn:optapprmin} exactly. Instead, we will take a different approach and see how truncating the SVA for $A$ to have $\hat{n}$ states yields an approximate solution which can be efficiently computed.

\subsection{SVA Truncation}

In this section we describe an approximate minimization algorithm for WFA realizing a function in $\ltworat$. The algorithm takes as input a minimal WFA $A$ with $n$ states and a target number of states $\hat{n}$, and outputs a new WFA $\hat{A}$ with $\hat{n}$ states approximating the original WFA $A$.
To obtain $\hat{A}$ we first compute the SVA $A'$ associated to $A$, and then remove the $n - \hat{n}$ states associated with the smallest singular values of $\H_{f_A}$. More formally, by writing the block decomposition of the operators associated with the SVA $A'$ shown below, we get the operators for $\hat{A}$ by taking the sub-block in the top left containing the first $\hat{n}$ rows and $\hat{n}$ columns:
\begin{equation}\label{eqn:svablock}
\A'_a =
\left[
\begin{array}{cc}
\A_a^{(11)} & \A_a^{(12)} \\
\A_a^{(21)} & \A_a^{(22)}
\end{array}
\right] \enspace,
\qquad
\hat{\A}_a = \left[ \A_a^{(11)} \right] \enspace.
\end{equation}
Note that if we define the matrix $\mGamma = [\mat{I}_{\hat{n}} \; \mat{0}] \in \R^{\hat{n} \times n}$, then we have $\hat{\A}_a = \mGamma \A'_\symbola \mGamma^\top$. To reflect this fact we shall sometimes write $\hat{A} = \mGamma A' \mGamma^\top$.
Algorithm~\ref{alg:minsva} provides a description of the full procedure, which we call \FuncSty{SVATruncation}.
Since the algorithm only involves a call to \FuncSty{ComputeSVA} and a simple algebraic manipulation of the resulting WFA, the running time of \FuncSty{SVATruncation} is dominated by the complexity of \FuncSty{ComputeSVA}, and hence is polynomial in $|\Sigma|$, $\dim(A)$ and $\hat{n}$.

\begin{algorithm}
\DontPrintSemicolon
\caption{\FuncSty{SVATruncation}}\label{alg:minsva}
\KwIn{A minimal WFA $A$ with $n$ states, a target number of states $\hat{n} < n$}
\KwOut{A WFA $\hat{A}$ with $\hat{n}$ states}
Let $A' \leftarrow \FuncSty{ComputeSVA}(A)$\;
Let $\mat{\Gamma} = [\mat{I}_{\hat{n}} \; \mat{0}] \in \R^{\hat{n} \times n}$\;
Let $\hA_\symbola = \mat{\Gamma} \A'_\symbola \mat{\Gamma}^\top$ for all $\symbola \in
\Sigma$\;
Let $\hazero = \mat{\Gamma} \azero'$\;
Let $\hainf = \mat{\Gamma} \ainf'$\;
Let $\hat{A} = \hwa$\;
\KwRet{$\hat{A}$}\;
\end{algorithm}

Roughly speaking, the rationale behind \FuncSty{SVATruncation} is that given an SVA, the states corresponding to the smallest singular values are the ones with less influence on the Hankel matrix, and therefore should also be the ones with less influence on the associated rational function.
However, the details are more tricky than this simple intuition. The reason being that a low rank approximation to $\H_f$ obtained by truncating its SVD is not in general a Hankel matrix, and therefore does not correspond to any rational function. In particular, the Hankel matrix of the function $\hat{f}$ computed by $\hat{A}$ is not obtained by truncating the SVD of $\H_f$. This makes our analysis more involved than just applying the well-known bounds for low-rank approximation based on SVD. Nonetheless, we are able to obtain a bound of the same form that one would expect by measuring the error of a low-rank approximation using the Frobenius norm. Along these lines, our main result is the following theorem, which bounds the $\ltwo$-distance between the rational function $f$ realized by the original WFA $A$, and the rational function $\hat{f}$ realized by the output WFA $\hat{A}$.

\begin{theorem}\label{thm:nicebound}
Let $A$ be a minimal WFA with $n$ states computing a function $f \in \ltworat$ with Hankel singular values $\sigma_1 \geq \cdots \geq \sigma_n > 0$. Let $\hat{f}$ denote the function computed by the truncated SVA $\hat{A}$ with $1 \leq \hat{n} < n$ states. Then the following holds:
\begin{equation}
\label{eq:svatrunc}
\norm{f - \hat{f}}_2^2 \leq \sum_{i = \hat{n}+1}^{n} \sval_i^2 \enspace.
\end{equation}
\end{theorem}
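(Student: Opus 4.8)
The plan is to run the whole argument inside the SVA $A' = \langle \azero', \ainf', \{\A'_\symbola\}\rangle$ guaranteed by Theorem~\ref{thm:sva}, whose reachability and observability gramians both equal the diagonal matrix $\mD = \diag(\sval_1,\dots,\sval_n)$. Writing $\mGamma = [\mI_{\hat n}\ \mat{0}]$ and $\mPi = \mGamma^\top \mGamma$ for the coordinate projection onto the retained states, and using $\hat{\A}_\symbola = \mGamma \A'_\symbola \mGamma^\top$, $\hazero = \mGamma \azero'$, $\hainf = \mGamma \ainf'$ together with $\mGamma \mGamma^\top = \mI_{\hat n}$, I first rewrite the truncated function as $\hat f(x) = \azero'^\top \mPi \A'_{x_1} \mPi \cdots \mPi \A'_{x_t} \mPi \ainf'$. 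This displays $\hat f$ as the computation defining $f$ but with a projection $\mPi$ inserted in every gap between consecutive factors, whereas $f$ carries the identity in each gap. To access $\norm{f-\hat f}_2$ I would use the difference WFA $E$ with initial vector $[\azero'^\top\ \hazero^\top]^\top$, final vector $[\ainf'^\top\ -\hainf^\top]^\top$, and transitions $\diag(\A'_\symbola, \hat{\A}_\symbola)$, so that $f_E = f - \hat f$ and Theorem~\ref{thm:gramiannorm} turns $\norm{f-\hat f}_2^2$ into a quadratic form in the observability gramian of $E$, whose diagonal blocks are $\mD$ and the truncated gramian $\hat{\mG}_s$.

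The key structural input is that the SVA is balanced. Since $\sum_p \balpha_p \balpha_p^\top = \sum_s \bbeta_s \bbeta_s^\top = \mD$ (with $\balpha_p = \A_p'^\top \azero'$ and $\bbeta_s = \A'_s \ainf'$) is block diagonal, the retained and discarded halves of the forward and backward vectors are uncorrelated, and the discarded blocks reproduce $\mD_2 = \diag(\sval_{\hat n+1},\dots,\sval_n)$. This yields the clean target identity $\sum_{i=\hat n+1}^n \sval_i^2 = \Tr(\mD_2^2) = \sum_{p,s}\big(\balpha_p^\top(\mI - \mPi)\bbeta_s\big)^2$, obtained by resumming with $\sum_s \bbeta_s \bbeta_s^\top = \mD$ and $\sum_p \balpha_p \balpha_p^\top = \mD$ and using $(\mI-\mPi)\mD(\mI-\mPi) = \diag(\mat{0}, \mD_2)$. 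I would also record that the truncated automaton has well-defined gramians dominated by the retained block: taking the $(1,1)$-block of the SVA fixed-point equations~\eqref{eqn:fpalpha}--\eqref{eqn:fpbeta} and subtracting the fixed-point equation for $\hat A$ shows that $\mD_1 = \diag(\sval_1,\dots,\sval_{\hat n})$ satisfies the matrix inequality of Lemma~\ref{lem:psdlmi} for $\hat A$ up to a positive semi-definite term built from $\mD_2$; monotonicity (Lemma~\ref{lem:FpFsmonotone}) and the least-solution characterization then give $\hat{\mG}_p, \hat{\mG}_s \leq \mD_1$.

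With these in hand I would telescope $f(x)-\hat f(x)$ by turning the identity in each gap into $\mPi$ one gap at a time; each increment carries exactly one factor $\mI - \mPi$, so the error decomposes into terms in which a full forward computation feeds, through the discarded block, either a truncated backward computation or the discarded final weights. Summing these terms over all strings, the exposed outer factors are resummed by the gramian identities $\sum_p \balpha_p \balpha_p^\top = \mD$ and $\sum_s \bbeta_s \bbeta_s^\top = \mD$, while the $\mI - \mPi$ factors force every surviving contribution into the $\mD_2$-block. Combining this with the domination $\hat{\mG}_\bullet \leq \mD_1$ and Cauchy--Schwarz in $\ltwo$, the aim is to bound each group of terms by $\Tr(\mD_2^2)$ and recover $\norm{f-\hat f}_2^2 \leq \sum_{i=\hat n+1}^n \sval_i^2$.

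The main obstacle is that truncation does not commute with products: the retained states of $\hat A$ evolve under $\A_\symbola^{(11)}$, while $f$ propagates through the full $\A'_\symbola$, so the retained-block contributions of $f$ and $\hat f$ do not cancel gap-by-gap. Consequently a naive triangle-inequality estimate over the telescoping increments only delivers $\big(\sum_{i>\hat n}\sval_i\big)^2$, which is far weaker than the desired $\sum_{i>\hat n}\sval_i^2$; note too that the Frobenius norm of the Hankel difference runs the wrong way, since $\hat f$ is not the SVD truncation of $\H_f$. The sum of squares can only be recovered by exploiting the orthogonality encoded in the block-diagonal gramian $\mD$ — the vanishing cross blocks $\sum_s (\mI-\mPi)\bbeta_s \bbeta_s^\top \mPi = \mat{0}$ — to cancel the cross terms that appear when the telescoped error is squared and summed, rather than bounding them individually. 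Making this cancellation land exactly on $\Tr(\mD_2^2)$, while absorbing the discrepancy between the full and truncated retained-block dynamics into the $\mD_2$-controlled terms, is the delicate heart of the proof.
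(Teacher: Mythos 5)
Your scaffolding is sound and overlaps substantially with the paper's: working in the SVA so that both Gramians equal $\mD$, forming a difference automaton to invoke Theorem~\ref{thm:gramiannorm}, identifying $\Tr(\mD_2^2)=\sum_{i>\hat n}\sval_i^2$ as the target, and dominating the truncated Gramians by $\mD_1$ via the fixed-point inequalities and monotonicity (this last step is essentially the paper's Lemma~\ref{lem:gramstildeA}). But the proposal is not a proof: the one step that carries all the difficulty --- showing that the telescoped, squared, string-summed error actually collapses onto $\Tr(\mD_2^2)$ --- is exactly the step you defer, by your own words, as ``the delicate heart of the proof.'' And the route you sketch for it runs into a concrete obstruction. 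After telescoping gap-by-gap, the increment at gap $k$ is (truncated forward vector)${}\times(\mI-\mPi)\times{}$(full backward vector). When you square and sum over strings, the cross terms between different gap positions involve mixed prefix sums of the form $\sum_p \hat{\balpha}_p\balpha_p^\top$, pairing \emph{truncated} forward vectors with \emph{SVA} forward vectors. Neither $\sum_p\balpha_p\balpha_p^\top=\mD$ nor $\hat{\mG}_p\le\mD_1$ controls these mixed sums, and the block-orthogonality of $\mD$ does not annihilate them, precisely because $\hat{\balpha}_p\neq\mPi\balpha_p$ in general (truncation does not commute with products, as you note). So the cancellation you need is not a consequence of the identities you have assembled.

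The paper's proof sidesteps this entirely by never telescoping string-by-string and never squaring a sum. It replaces $\hat A$ by the padded automaton $\tilde A=A\mPi$ with $n$ states ($\tazero=\mPi\azero$, $\tA_\symbola=\A_\symbola\mPi$, $\tainf=\ainf$), so that the $2n$-state difference automaton $\bar A$ enjoys the structural identities $\bazero^\top=\azero^\top[\mI\;\;\mPi]$ and $\A_\symbola[\mI\;\;\mPi]=[\mI\;\;\mI]\bA_\symbola$. Plugging the SVA reachability fixed-point equation $\azero\azero^\top=\mD-\sum_\symbola\A_\symbola^\top\mD\A_\symbola$ into $\norm{f-\hat f}_2^2=\Tr(\bar{\mS}\bazero\bazero^\top\bar{\mS}^\top)$, and then using the observability fixed-point equation of $\bar A$ together with $[\mI\;\;\mI]\bainf=\mat{0}$, yields an \emph{exact} closed form for the error (Theorem~\ref{thm:exactbound}):
\begin{equation*}
\norm{f-\hat f}_2^2=\TrD\bigl((\mI-\mPi)\mS^\top\mS(\mI-\mPi)\bigr)-\TrD\Bigl((\mI-\mPi)[\mI\;\;\mI]\,\bar{\mS}^\top\bar{\mS}\,[\mI\;\;\mI]^\top(\mI-\mPi)\Bigr)\enspace,
\end{equation*}
in which the first term equals $\sum_{i>\hat n}\sval_i^2$ because $\mS^\top\mS=\mD$, and the second has the form $\Tr(\mM\mM^\top)\ge 0$. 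In other words, the cross-term cancellation you are hoping to engineer by hand is delivered wholesale by the fixed-point equations applied to entire Gramians; your mixed sums $\sum_p\hat{\balpha}_p\balpha_p^\top$ simply never appear. To complete your argument you would either have to prove an identity controlling those mixed sums (which is not among your stated tools) or switch to the Gramian-level computation, at which point you would be reproducing the paper's proof.
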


The proof will be given in Section~\ref{sec:svaboundproof}. First, a few remarks about this result are in order.
The first is to observe that because $\sval_1 \geq \cdots \geq \sval_n$, the error decreases when $\hat{n}$ increases, which is the desired behavior: the more states $\hat{A}$ has, the closer it is to $A$.
The second is that \eqref{eq:svatrunc} does not depend on which representation $A$ of $f$ is given as input to \FuncSty{SVATruncation}. This is a consequence of first obtaining the corresponding SVA $A'$ before truncating.
Obviously, one could obtain another approximate minimization by truncating $A$ directly. However, in that case the final error would depend on the initial $A$ and in general it does not seem possible to use this approach for providing \emph{representation independent} bounds on the quality of approximation.
To see the importance of starting the truncation procedure from the SVA canonical form let us consider the following result which follows directly from the gramian fixed-point equations for SVA.

\begin{lemma}\label{lem:svaequations}
Let $A = \wa$ be an SVA with $n$ states realizing a function $f \in \ltworat$ with Hankel singular values $\sval_1 \geq \cdots \geq \sval_n$.
Then the following are satisfied:
\begin{enumerate}
\item For all $j \in [n]$, $\sum_{i} \sval_i \sum_{\symbola} \A_{\symbola}(i,j)^2 =
\sval_j - \azero(j)^2$,
\item For all $i \in [n]$, $\sum_{j} \sval_j \sum_{\symbola} \A_{\symbola}(i,j)^2 =
\sval_i - \ainf(i)^2$.
\end{enumerate}
\end{lemma}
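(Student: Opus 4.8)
The plan is to exploit the fact that for an SVA both Gramians coincide with the diagonal matrix of Hankel singular values, and then simply read off diagonal entries from the Gramian fixed-point equations of Theorem~\ref{thm:fpeqs}.

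First I would recall how the Gramians of an SVA look. Since $A$ is an SVA for $f \in \ltworat$, its induced FB factorization has the form $\mP = \mU \mD^{1/2}$ and $\mS = \mV \mD^{1/2}$ with $\mU^\top \mU = \mV^\top \mV = \mI$ and $\mD = \diag(\sval_1, \ldots, \sval_n)$ (this is exactly the computation appearing in the proof of Proposition~\ref{prop:definedgramians}). Consequently the reachability and observability Gramians are $\mG_p = \mP^\top \mP = \mD$ and $\mG_s = \mS^\top \mS = \mD$. I would then invoke Theorem~\ref{thm:fpeqs}, which asserts that $\mG_p$ and $\mG_s$ solve the fixed-point equations \eqref{eqn:fpalpha} and \eqref{eqn:fpbeta}. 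Substituting $\mG_p = \mG_s = \mD$ yields the two matrix identities
\begin{align*}
\mD &= \azero \azero^\top + \sum_{\symbola \in \Sigma} \A_\symbola^\top \mD \A_\symbola \enspace, \\
\mD &= \ainf \ainf^\top + \sum_{\symbola \in \Sigma} \A_\symbola \mD \A_\symbola^\top \enspace.
\end{align*}

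Finally I would extract the relevant scalar equations entry by entry. For the first claim, reading the $(j,j)$ entry of the first identity and using that $\mD$ is diagonal gives the key computation $(\A_\symbola^\top \mD \A_\symbola)(j,j) = \sum_i \sval_i \A_\symbola(i,j)^2$; since $\mD(j,j) = \sval_j$ and $(\azero \azero^\top)(j,j) = \azero(j)^2$, rearranging produces $\sum_i \sval_i \sum_\symbola \A_\symbola(i,j)^2 = \sval_j - \azero(j)^2$. The second claim follows symmetrically from the $(i,i)$ entry of the second identity, where $(\A_\symbola \mD \A_\symbola^\top)(i,i) = \sum_j \sval_j \A_\symbola(i,j)^2$.

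Since every step is a direct substitution or an elementary diagonal-entry calculation, there is no serious obstacle here; the only point requiring care is the identification $\mG_p = \mG_s = \mD$, which rests on the definition of the SVA and the earlier Gramian computations rather than on any new argument. I also note that the off-diagonal entries of the two identities encode further orthogonality relations among the rows of the transition matrices, but these are not needed for the stated lemma.
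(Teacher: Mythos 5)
Your proposal is correct and follows exactly the paper's argument: the paper's own (very terse) proof simply states that the claimed identities are the diagonal entries of the Gramian fixed-point equations specialized to an SVA, where $\mG_p = \mG_s = \mD$. You have merely filled in the details the paper leaves implicit — the identification of the Gramians with $\mD$ via the SVA's induced factorization, and the entrywise computation $(\A_\symbola^\top \mD \A_\symbola)(j,j) = \sum_i \sval_i \A_\symbola(i,j)^2$ — both of which are correct.
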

\begin{proof}
These equations correspond to the diagonal entries of the gramian fixed-point equations for SVA
\begin{align}
\mD &= \azero \azero^\top + \sum_{\symbola} \A_\symbola^\top \mD \A_\symbola \enspace, \\
\mD &= \ainf \ainf^\top + \sum_{\symbola} \A_\symbola \mD \A_\symbola^\top \enspace.
\end{align}
\end{proof}

To see why this lemma justifies the truncation of an SVA we consider the following simple consequence.
By fixing $i, j \in [n]$ and $\symbola \in \Sigma$, we can use the first equation to get
\begin{align*}
\sval_i \A_{\symbola}(i,j)^2 = \sval_j - \azero(j)^2 - \left( \sum_{i} \sval_i \sum_{\symbola} \A_{\symbola}(i,j)^2 - \sval_i \A_{\symbola}(i,j)^2 \right) \leq \sval_j \enspace.
\end{align*}
Applying a similar argument to the second equation, we conclude that
\begin{equation*}
|\A_\symbola(i,j)| \leq \min\left\{ \sqrt{\frac{\sval_i}{\sval_j}},
\sqrt{\frac{\sval_j}{\sval_i}} \right\}
= \sqrt{\frac{\min\{\sval_i,\sval_j\}}{\max\{\sval_i,\sval_j\}}} \enspace.
\end{equation*}
This bound is telling us that in an SVA, transition weights further away from the diagonals of the $\A_\symbola$ are going to be small whenever there is a wide spread between the largest and smallest singular values; for example, $|\A_\symbola(1,n)| \leq \sqrt{\sval_n/\sval_1}$.
Intuitively, this means that in an SVA the last states are very weakly connected to the first states, and therefore removing these connections should not affect the output of the WFA too much.
The proof of Theorem~\ref{thm:nicebound} exploits this intuition, while at the same time leverages the full power of the fixed-point SVA Gramian equations.

We finish this section by stating another result about $\hat{A}$: SVA truncation always reduces the norm of the original function. Logically speaking, this is a preliminary to Theorem~\ref{thm:nicebound}, since it shows that the function computed by $\hat{A}$ has finite $\ltwo$ norm and already implies the finiteness of $\norm{f - \hat{f}}_2$. From an approximation point of view, this result basically says that SVA truncation can be interpreted as an algorithm for approximate minimization ``from below'', which might be a desirable property in some applications.

\begin{theorem}\label{thm:normfhat}
Let $A$ be a WFA computing a function $f \in \ltworat$ of rank $n$ and $\hat{A}$ a truncation of the SVA of $A$ with $\hat{n} < n$ states. The function $\hat{f}$ computed by $\hat{A}$ satisfies $\hat{f} \in \ltworat$ and $\norm{\hat{f}}_2 \leq \norm{f}_2$.
\end{theorem}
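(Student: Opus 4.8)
The plan is to work directly with the SVA $A' = \langle \azero', \ainf', \{\A'_\symbola\}\rangle$ of $f$, since by construction $\hat A$ is a truncation of $A'$ and hence $\hat f$ is independent of the particular representation of $A$. Recall that $A'$ is minimal with $n$ states, so by Lemma~\ref{lem:diaggramsva} its observability Gramian is $\mG_{A',s} = \mD = \diag(\sigma_1,\ldots,\sigma_n)$; write $\hat{\mD} = \mGamma \mD \mGamma^\top = \diag(\sigma_1,\ldots,\sigma_{\hat n})$ for its leading $\hat n \times \hat n$ block and $\mD_2 = \diag(\sigma_{\hat n+1},\ldots,\sigma_n) \geq \mat{0}$ for the trailing block. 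The crux of the argument will be to show that $\hat{\mD}$ is a positive semi-definite solution of the observability linear matrix inequality~\eqref{eqn:psdbeta} of $\hat A$, namely
\begin{equation*}
\hat{\mD} \geq \hainf \hainf^\top + \sum_{\symbola} \hA_\symbola \hat{\mD} \hA_\symbola^\top \enspace.
\end{equation*}

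First I would establish this inequality by taking the leading block of the SVA fixed-point equation. By Theorem~\ref{thm:fpeqs} applied to $A'$ (with $\mG_{A',s}=\mD$), we have $\mD = \ainf'\ainf'^\top + \sum_\symbola \A'_\symbola \mD \A'^\top_\symbola$. Conjugating by $\mGamma$ on the left and $\mGamma^\top$ on the right and expanding each $\A'_\symbola$ in the block form of~\eqref{eqn:svablock} (so that $\hA_\symbola = \A^{(11)}_\symbola$), a direct computation gives $\mGamma \A'_\symbola \mD \A'^\top_\symbola \mGamma^\top = \hA_\symbola \hat{\mD} \hA_\symbola^\top + \A^{(12)}_\symbola \mD_2 (\A^{(12)}_\symbola)^\top$. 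Since $\mGamma \ainf' = \hainf$ and $\mGamma \mD \mGamma^\top = \hat{\mD}$, dropping the manifestly positive semi-definite cross terms $\sum_\symbola \A^{(12)}_\symbola \mD_2 (\A^{(12)}_\symbola)^\top$ yields the claimed inequality.

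Next I would deduce that $\mG_{\hat A,s}$ is well-defined with $\mG_{\hat A,s} \leq \hat{\mD}$, which in turn gives $\hat f \in \ltworat$. This is the main obstacle, because Lemma~\ref{lem:psdlmi} presupposes that the Gramian of $\hat A$ is already well-defined, whereas establishing well-definedness is exactly what must be proven here. The resolution is that the inductive bound inside the proof of Lemma~\ref{lem:psdlmi} never invokes well-definedness: feeding the positive semi-definite solution $\hat{\mD}$ of~\eqref{eqn:psdbeta} into that induction produces $\mG_{\hat A,s}^{(t)} = \sum_{x \in \Sigma^{\leq t}} \hA_x \hainf \hainf^\top \hA_x^\top \leq \hat{\mD}$ for every $t \geq 0$. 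Combined with the monotonicity $\mG_{\hat A,s}^{(t+1)} \geq \mG_{\hat A,s}^{(t)}$ from Lemma~\ref{lem:gptpsd}, this bounded increasing sequence converges entrywise (each diagonal entry is a non-negative increasing series bounded by the corresponding $\sigma_i$, and the off-diagonals are controlled through $2|ab| \leq a^2 + b^2$), so $\mG_{\hat A,s}$ is well-defined and satisfies $\mG_{\hat A,s} \leq \hat{\mD}$. By the observation following Theorem~\ref{thm:gramiannorm}, a defined observability Gramian already implies $\hat f \in \ltworat$.

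Finally, the norm bound follows by a short computation. Writing $\mPi = \mGamma^\top \mGamma$ for the orthogonal projection onto the first $\hat n$ coordinates and using $\hazero = \mGamma \azero'$, Theorem~\ref{thm:gramiannorm}(1) applied to $\hat A$ and then to $A'$ gives
\begin{equation*}
\norm{\hat f}_2^2 = \hazero^\top \mG_{\hat A,s} \hazero \leq \hazero^\top \hat{\mD} \hazero = \azero'^\top (\mPi \mD \mPi) \azero' \leq \azero'^\top \mD \azero' = \norm{f}_2^2 \enspace,
\end{equation*}
where the first inequality uses $\mG_{\hat A,s} \leq \hat{\mD}$, and the middle inequality uses $\mD - \mPi \mD \mPi = \diag(\mat{0},\mD_2) \geq \mat{0}$. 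Taking square roots yields $\norm{\hat f}_2 \leq \norm{f}_2$, completing the argument.
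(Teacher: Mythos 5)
Your proof is correct, but it takes a genuinely different route from the paper's. The paper never works with the $\hat{n}$-state automaton $\hat{A}$ directly: it pads the truncation back up to an $n$-state automaton $\tilde{A} = A\mPi$ (Lemma~\ref{lem:equivsvatrunc}), so that its \emph{reachability} Gramian lives in $\R^{n\times n}$ and can be compared with $\mG_{A,p}=\mD$ in the same dimension; well-definedness of $\tilde{\mG}_p$ is then obtained via the monotone map $\tilde{F}_p(\mX)=\mPi F_p(\mX)\mPi$ and a decreasing majorant sequence started at $\mX_0=\mD$ (Lemma~\ref{lem:gramstildeA}), and the norm comparison uses $\norm{f}_2^2=\ainf^\top\mG_p\ainf$. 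You instead stay with $\hat{A}$ itself, work on the \emph{observability} side, and extract from the block decomposition of the SVA fixed-point equation the inequality $\hat{\mD} \geq \hainf\hainf^\top + \sum_\symbola \hA_\symbola\hat{\mD}\hA_\symbola^\top$, which exhibits $\hat{\mD}$ as a PSD solution of the linear matrix inequality~\eqref{eqn:psdbeta} for $\hat{A}$. Your handling of the resulting circularity is the delicate point, and you get it right: Lemma~\ref{lem:psdlmi} as stated presupposes the Gramian exists, but its inductive bound $\sum_{x\in\Sigma^{\leq t}}\hA_x\hainf\hainf^\top\hA_x^\top \leq \hat{\mD}$ uses only the inequality itself, and combined with Lemma~\ref{lem:gptpsd} and monotone convergence of bounded increasing PSD sequences this yields existence of $\mG_{\hat{A},s}$ together with the bound $\mG_{\hat{A},s}\leq\hat{\mD}$. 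What each approach buys: the paper's padded automaton $\tilde{A}$ does double duty, since it is reused immediately in the error analysis of Theorem~\ref{thm:nicebound}, and its argument needs no re-examination of earlier proofs; your argument is more self-contained about the truncated automaton, makes explicit \emph{why} truncating an SVA is safe (the discarded term $\sum_\symbola \A_\symbola^{(12)}\mD_2(\A_\symbola^{(12)})^\top$ is exactly the PSD ``slack'' created by deleting the weakly coupled states, echoing the discussion after Lemma~\ref{lem:svaequations}), and gives the slightly stronger quantitative conclusion $\norm{\hat{f}}_2^2 \leq \azero'^\top\mPi\mD\mPi\,\azero'$ rather than only $\norm{\hat{f}}_2\leq\norm{f}_2$.
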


It is important to note that in general truncating an arbitrary WFA does not always reduce its norm as shown by the example in Figure~\ref{fig:noreduc}.

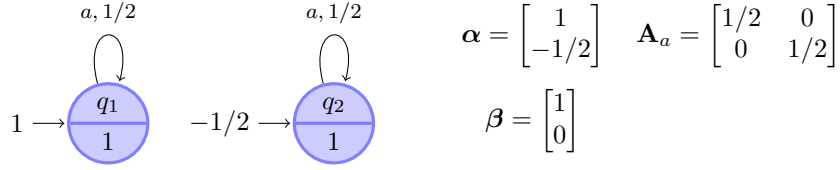
\begin{figure}[t]
\centering
\begin{subfigure}[c]{0.4\textwidth}
\centering
\begin{tikzpicture}[shorten >=1pt,node distance=3cm,auto]
\tikzstyle{every state}=[draw=blue!50,very thick,on grid,fill=blue!20,inner sep=3pt,minimum size=4mm]
\node[state with output,initial,initial text=$1$] (q_0) {$q_1$ \nodepart{lower} $1$};
\node[state with output,initial,initial text=$-1/2$] (q_1) [right of=q_0] {$q_2$ \nodepart{lower} $1$};
\path[->]
(q_0) edge [loop above] node
{\footnotesize $\begin{matrix}{a , 1/2}\end{matrix}$} ()
(q_1) edge [loop above] node
{\footnotesize $\begin{matrix}{a , 1/2}\end{matrix}$} ();
\end{tikzpicture}
\end{subfigure}
\begin{subfigure}[c]{0.4\textwidth}
\centering
\begin{minipage}[t]{\textwidth}
\begin{tabular}{cc}
$\azero =
\left[
\begin{matrix}
1\\
-1/2\\
\end{matrix}
\right]$
&
$\A_a =
\left[
\begin{matrix}
1/2 & 0 \\
0 & 1/2\\
\end{matrix}
\right]$\\[.5cm]
$\ainf =
\left[
\begin{matrix}
1\\
0\\
\end{matrix}
\right]$
&
\end{tabular}
\end{minipage}
\end{subfigure}
\caption{Example of WFA $A$ such that $\norm{f_{\hat{A}}}_2 \leq \norm{f_A}_2$, where $\hat{A}$ is the automaton obtained by removing the state $q_2$. In particular, $\norm{f_A}_2^2 = 1/3$ and $\norm{f_{\hat{A}}}_2^2 = 4/3$.}
\label{fig:noreduc}
\end{figure}

\subsection{SVA Truncation: Bounding the Norm}

The proof of the Theorem~\ref{thm:normfhat} illustrates how having different ways to represent the function $\hat{f}$ computed by the SVA truncation can be useful; this fact will also be essential in the proof of Theorem~\ref{thm:nicebound}. In particular, given an SVA $A$, we note that the automaton $\tilde{A}$ obtained by padding with zeros all the coefficients in the initial and transition weights of $A$ that are removed when taking its truncation in \FuncSty{SVATruncation} computes the same function as $\hat{A}$.

More concretely, let us recall the notation from \eqref{eqn:svablock} splitting of the weights conforming $A$ into a block corresponding to states $1$ to $\hat{n}$, and another block containing states $\hat{n}+1$ to $n$. We can define a similar partition for the initial and final weights of $A$. In particular, we write the following:
\begin{align*}
\azero &= \left[ \begin{array}{c} \azero^{(1)} \\ \azero^{(2)} \end{array} \right] \enspace, \\
\ainf &= \left[ \begin{array}{c} \ainf^{(1)} \\ \ainf^{(2)} \end{array} \right]
\enspace, \\
\A_\symbola &= \left[ \begin{array}{cc} \A_\symbola^{(11)} & \A_\symbola^{(12)} \\
\A_\symbola^{(21)} & \A_\symbola^{(22)} \end{array} \right] \enspace.
\end{align*}
Now the SVA truncation $\hat{A} = \mGamma A \mGamma^\top = \hwa$ with $\mGamma = [\mat{I}_{\hat{n}} \; \mat{0}]$ can be written in terms of this block decomposition as $\hazero = \azero^{(1)}$, $\hainf = \ainf^{(1)}$, and $\hA_\symbola = \A_\symbola^{(11)}$.

The important observation here is that starting from $A$ we can write other WFA computing the same function as $\hat{A}$. The following construction yields a WFA $\tilde{A}$ of size $n$ with this property. Define the $n \times n$ matrix
\begin{equation}
\mPi = \left[
\begin{array}{cc}
\mI_{\hat{n}} & \mat{0} \\
\mat{0} & \mat{0}
\end{array}
\right]
\end{equation}
and let $\tilde{A} = \twa$ with $\tazero = \mPi \azero$, $\tainf = \ainf$, and $\tilde{\A}_a = \A_a \mPi^\top = \A_a \mPi$. For convenience we shall sometimes write $\tilde{A} = A \mPi$. Note that the weights of $\tilde{A}$ are given by:
\begin{align*}
\tazero &= \left[ \begin{array}{c} \azero^{(1)} \\ \mat{0} \end{array} \right] \enspace, \\
\tainf &= \left[ \begin{array}{c} \ainf^{(1)} \\ \ainf^{(2)} \end{array} \right]
\enspace, \\
\tA_\symbola &= \left[ \begin{array}{cc} \A_\symbola^{(11)} & 
\mat{0} \\
\A_\symbola^{(21)} & \mat{0} \end{array} \right] \enspace.
\end{align*}

\begin{lemma}\label{lem:equivsvatrunc}
Let $A = \wa$ be an SVA with $n$ states. If $\hat{A} = \mGamma A \mGamma^\top = \hwa$ is the truncation of $A$ with $\hat{n}$ states, and $\tilde{A} = A \mPi = \twa$ is the the WFA with $n$ states defined above, then $\hat{A}$ and $\tilde{A}$ compute the same function $\hat{f}$.
\end{lemma}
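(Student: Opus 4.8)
The plan is to prove that $\tilde{A}$ and $\hat{A}$ compute the same function by a direct computation on an arbitrary string, exploiting that the padding matrix $\mPi$ is an idempotent projection that factors through the truncation map $\mGamma = [\mI_{\hat{n}}\;\mat{0}]$. The algebraic facts I would record first are: $\mPi = \mGamma^\top \mGamma$ (so $\mPi$ is symmetric), that $\mPi$ is idempotent, $\mPi^2 = \mPi$, and that $\mGamma \mGamma^\top = \mI_{\hat{n}}$. Combined with the defining relation $\hA_\symbola = \mGamma \A_\symbola \mGamma^\top$ coming from the truncation, these immediately yield the crucial identity $\mPi \A_\symbola \mPi = \mGamma^\top \mGamma \A_\symbola \mGamma^\top \mGamma = \mGamma^\top \hA_\symbola \mGamma$ for every $\symbola \in \Sigma$. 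Note that none of this uses the SVA hypothesis: the statement is a purely algebraic fact about this specific truncation/padding pair, so I would invoke no spectral structure.

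Next I would fix a string $x = x_1 \cdots x_t$ and expand $f_{\tilde{A}}(x)$ from the weights of $\tilde{A}$. Since $\tazero^\top = (\mPi \azero)^\top = \azero^\top \mPi$, each $\tA_{x_i} = \A_{x_i} \mPi$, and $\tainf = \ainf$, the product becomes
\[
f_{\tilde{A}}(x) = \azero^\top \mPi \A_{x_1} \mPi \A_{x_2} \mPi \cdots \A_{x_t} \mPi \ainf \enspace .
\]
Using idempotency, I would regroup this raw product into a product of blocks $\prod_{i=1}^{t} (\mPi \A_{x_i} \mPi)$ flanked by $\azero^\top$ and $\ainf$; this regrouping is valid because each block–block junction contributes $\mPi \mPi = \mPi$, recovering exactly the displayed expression.

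Substituting the crucial identity turns the interior into $\prod_{i=1}^{t} \mGamma^\top \hA_{x_i} \mGamma$, and the internal factors $\mGamma \mGamma^\top = \mI_{\hat{n}}$ cancel so that the whole middle telescopes to $\mGamma^\top \hA_{x_1} \cdots \hA_{x_t} \mGamma = \mGamma^\top \hA_x \mGamma$, where $\hA_x = \hA_{x_1} \cdots \hA_{x_t}$. Pushing the two surviving outer factors into the boundary vectors via $\mGamma \azero = \hazero$ and $\mGamma \ainf = \hainf$ gives $f_{\tilde{A}}(x) = \hazero^\top \hA_x \hainf = f_{\hat{A}}(x) = \hat{f}(x)$. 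The empty-string case $t = 0$ is covered by the same chain, collapsing to $\azero^\top \mPi \ainf = \hazero^\top \hainf$.

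I do not anticipate a genuine obstacle: the only point requiring care is the bookkeeping of the $\mPi$ factors, making sure the idempotent regrouping and the $\mGamma \mGamma^\top = \mI_{\hat{n}}$ cancellations leave exactly one $\mGamma^\top$ and one $\mGamma$ at the two ends, and that the degenerate $t = 0$ product is handled correctly. Everything else is routine matrix algebra.
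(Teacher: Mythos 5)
Your proof is correct: the identities $\mPi = \mGamma^\top \mGamma$, $\mGamma \mGamma^\top = \mI_{\hat{n}}$, and $\mPi \A_\symbola \mPi = \mGamma^\top \hA_\symbola \mGamma$ all hold, the idempotent regrouping and the telescoping cancellation are valid, and the empty-string case is handled. Your route differs from the paper's in execution, though not in substance. The paper defines the forward vectors $\tazero_x^\top = \tazero^\top \tA_x$ and $\hazero_x^\top = \hazero^\top \hA_x$ and proves, by induction on $|x|$ using the pattern of zeros in $\tA_\symbola = \A_\symbola \mPi$, the state-level identity $\tazero_x^\top = [\hazero_x^\top \; \mat{0}]$, from which $f_{\tilde{A}}(x) = \tazero_x^\top \tainf = \hazero_x^\top \hainf = f_{\hat{A}}(x)$ follows by contracting with the final weights. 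You instead work globally on the whole product: every padding projector is factored through the truncation map as $\mPi = \mGamma^\top \mGamma$, and the interior factors $\mGamma \mGamma^\top = \mI_{\hat{n}}$ cancel, collapsing the middle to $\mGamma^\top \hA_x \mGamma$. Both arguments encode the same phenomenon --- multiplication by $\tA_\symbola$ confines the computation to the first $\hat{n}$ coordinates --- but yours trades the induction for a finite regrouping of the product, and it makes explicit a point the paper leaves unremarked: the lemma is pure matrix algebra about the truncation/padding pair and uses no property of the SVA form (the paper's proof does not use it either, but states the lemma only for SVAs). What the paper's formulation buys is the slightly stronger state-level fact that the forward vectors of $\tilde{A}$ are exactly the zero-padded forward vectors of $\hat{A}$, which is the intuition that recurs when the Gramians of $A$ and $\tilde{A}$ are compared afterwards; what yours buys is a shorter, self-contained computation with no appeal to an omitted inductive step.
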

\begin{proof}
Given $x \in \sstar$ define $\tazero_x^\top = \tazero^\top \tA_x$ and $\hazero_x^\top = \hazero^\top \hA_x$. By using the pattern of zeros in $\tA_x$, a simple induction argument on the length of $x$ shows that the following is always satisfied:
\begin{equation}
\tazero_x^\top = [\hazero_x^\top \; \mat{0}] \enspace.
\end{equation}
Therefore for any $x \in \sstar$ we have $f_{\tilde{A}}(x) = \tazero_x^\top \tainf = \hazero_x^\top \tainf^{(1)} = f_{\hat{A}}(x)$.
\end{proof}

The advantage of having a WFA with $n$ states computing the same function as the SVA truncation is that now both $A$ and $\tilde{A}$ have Gramians of the same dimensions which can be compared. The following lemma provides such comparison.

\begin{lemma}\label{lem:gramstildeA}
Let $A$ be an SVA with $n$ states and reachability Gramian $\mG_{p} = \mD$. Let $\tilde{A} = A \mPi$ the WFA with $n$ states computing the same function as the truncation of $A$ with $\hat{n}$ states. Then the reachability Gramian $\tilde{\mG}_p$ of $\tilde{A}$ is defined and satisfies $\mG_{p} \geq \tilde{\mG}_{p}$.
\end{lemma}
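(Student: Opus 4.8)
The plan is to work with the fixed-point map governing the reachability Gramian and to exploit its monotonicity. Write $\tilde{F}_p(\mX) = \tazero\tazero^\top + \sum_{\symbola\in\Sigma}\tA_\symbola^\top \mX \tA_\symbola$ for the map associated with $\tilde{A}$, so that (as in Theorem~\ref{thm:fpeqs}) the iterates $\tilde{\mG}_p^{(t)} = \tilde{F}_p^{\,t+1}(\mat{0})$ increase towards $\tilde{\mG}_p$ whenever the latter is defined. The first step is the algebraic identity obtained from $\tazero = \mPi\azero$, $\tA_\symbola = \A_\symbola\mPi$, and the symmetry of $\mPi$:
\[
\tilde{F}_p(\mX) = \mPi\Bigl(\azero\azero^\top + \sum_{\symbola\in\Sigma}\A_\symbola^\top \mX \A_\symbola\Bigr)\mPi = \mPi F_p(\mX)\mPi \enspace,
\]
valid for every $\mX$.

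Because $A$ is an SVA with $\mG_p = \mD$, Theorem~\ref{thm:fpeqs} gives $F_p(\mD) = \mD$, so $\tilde{F}_p(\mD) = \mPi \mD \mPi$. Since $\mPi$ retains the leading $\hat{n}\times\hat{n}$ block and annihilates the rest, and $\mD = \diag(\sigma_1,\ldots,\sigma_n)$ is diagonal, this equals $\diag(\sigma_1,\ldots,\sigma_{\hat{n}},0,\ldots,0)$; hence $\mD - \tilde{F}_p(\mD) = \diag(0,\ldots,0,\sigma_{\hat{n}+1},\ldots,\sigma_n) \geq \mat{0}$. Thus $\mD$ is a supersolution of the inequality $\mX \geq \tilde{F}_p(\mX)$ attached to $\tilde{A}$.

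From here the conclusion follows by a monotone-iteration argument. As $\tilde{F}_p$ is monotone with respect to the Loewner order (Lemma~\ref{lem:FpFsmonotone} applied to $\tilde{A}$) and $\mat{0} \leq \mD$, a one-line induction using $\tilde{F}_p(\mD) \leq \mD$ shows $\tilde{\mG}_p^{(t)} = \tilde{F}_p^{\,t+1}(\mat{0}) \leq \mD$ for every $t$. By Lemma~\ref{lem:gptpsd} the sequence $(\tilde{\mG}_p^{(t)})_t$ is nondecreasing in the Loewner order, and being bounded above by $\mD$ it converges; its limit is by definition $\tilde{\mG}_p$, which is therefore well-defined and satisfies $\tilde{\mG}_p \leq \mD = \mG_p$.

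The one point requiring care --- and the reason I avoid simply invoking the least-solution characterization of Lemma~\ref{lem:psdlmi} --- is the \emph{definedness} of $\tilde{\mG}_p$, since that characterization already presumes finiteness of the Gramian. This is exactly what the monotone-bounded-convergence step supplies: the traces $\Tr(\tilde{\mG}_p^{(t)})$ are nondecreasing and bounded by $\Tr(\mD)$, and on positive semi-definite matrices $\normf{\cdot} \leq \Tr(\cdot)$, so the increments are Cauchy in Frobenius norm and the limit exists.
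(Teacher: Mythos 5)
Your proof is correct and takes essentially the same route as the paper's: the key identity $\tilde{F}_p(\mX) = \mPi F_p(\mX) \mPi$, the observation that $\mD = F_p(\mD)$ makes $\mD$ a supersolution of $\tilde{F}_p$, monotonicity of $\tilde{F}_p$ (Lemma~\ref{lem:FpFsmonotone}), and monotone bounded convergence of the iterates $\tilde{F}_p^{t+1}(\mat{0})$ via Lemma~\ref{lem:gptpsd}. The only cosmetic difference is that the paper sandwiches $\tilde{\mG}_p^{(t)}$ under an auxiliary decreasing sequence $\mX_t = \tilde{F}_p^{t}(\mD)$ whereas you induct directly on $\tilde{F}_p^{t+1}(\mat{0}) \leq \mD$, and your trace/Frobenius argument makes explicit the convergence step the paper leaves implicit.
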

\begin{proof}
Recall the definition of the map $F_p$ for $A$ from Section~\ref{sec:fpeqns} and note that the corresponding map for $\tilde{A}$ satisfies
\begin{equation}
\tilde{F}_p(\mX) = \tazero \tazero^\top + \sum_a \tA_a^\top \mX \tA_a = \mPi F_p(\mX) \mPi \enspace.
\end{equation}
Taking $\tilde{\mG}^{(0)}_{p} = \mat{0}$ and $\tilde{\mG}^{(t+1)}_{p} = \tilde{F}_p(\tilde{\mG}^{(t)}_{p})$ we have $\tilde{\mG}^{(t+1)}_{p} \geq \tilde{\mG}^{(t)}_{p}$ for all $t \geq 0$ (Lemma~\ref{lem:gptpsd}). Furthermore, $\tilde{\mG}_p = \lim_{t \to \infty} \tilde{\mG}^{(t)}_{p}$ if the limit is defined.

We will simultaneously show that the limit above is defined and satisfies $\tilde{\mG}_p \leq \mD$.
Define the sequence $\mX_0 = \mD$ and $\mX_{t+1} = \tilde{F}_p(\mX_t)$ for $t \geq 0$. Clearly all the matrices in the sequence are positive semi-definite, and furthermore we claim that they satisfy $\mX_t \geq \mX_{t+1}$ for all $t$. The case $t = 0$ is immediate since $\mX_{1} = \tilde{F}_p(\mD) = \mPi F_p(\mD) \mPi = \mPi \mD \mPi = \mPi \mD \leq \mD = \mX_0$.
For $t > 0$ we use induction and the fact that $\tilde{F}_p$ is monotonous (Lemma~\ref{lem:FpFsmonotone}): if $\mX_{t} \geq \mX_{t+1}$, then $\mX_{t+1} = \tilde{F}_p(\mX_{t}) \geq \tilde{F}_p(\mX_{t+1}) = \mX_{t+2}$.
Thus, since $\tilde{\mG}^{(0)}_{p} = \mat{0} \leq \mD = \mX_0$, for all $t \geq 0$ we have $\tilde{\mG}^{(t)}_{p} \leq \mX_{t} \leq \mD$. This implies that the monotonously increasing sequence $\tilde{\mG}^{(t)}_{p}$ is bounded by $\mD$, and therefore its limit exists and is upper bounded by $\mD$.
\end{proof}

The above lemma will be enough to prove the desired upper bound on the norm of $\hat{f}$. On the other hand, we note that because $\tilde{A}$ is not a minimal WFA, the boundedness of $\hat{f}$ or the existence of the reachability Gramian $\tilde{\mG}_p$ do not immediately imply the existence of the observability Gramian $\tilde{\mG}_s$ for $\tilde{A}$; we will see in the next section that in fact this Gramian is also defined.

\begin{proof}[Proof of Theorem~\ref{thm:normfhat}]
By Lemma~\ref{lem:equivsvatrunc} we can work with $\tilde{A}$ instead of $\hat{A}$. Now, Lemma~\ref{lem:gramstildeA} shows the Gramian $\tilde{\mG}_p$ of $\tilde{A}$ is defined, so by Theorem~\ref{thm:gramiannorm} the function $\hat{f}$ computed by $\tilde{A}$ has finite $\ltwo$ norm. Furthermore, since $\mG_{A,p} \geq \mG_{\tilde{A},p}$, the expressions for the norm $\norm{f}_2$ in Theorem~\ref{thm:gramiannorm} imply that $\norm{f}_2^2 = \ainf^\top \mG_{A,p} \ainf \geq \ainf^\top \mG_{\tilde{A},p} \ainf = \norm{\hat{f}}_2^2$.
\end{proof}

\subsection{SVA Trucantion: Error Analysis}\label{sec:svaboundproof}

In this section we prove the bound on $\norm{f - \hat{f}}_2$ given in Theorem~\ref{thm:nicebound}, where $\hat{f}$ is the function computed by the SVA truncation of $f$ with $\hat{n}$ states. In fact, the bound will follow from an exact closed-form expression for the error $\norm{f - \hat{f}}_2$ given in terms of the Gramians of a WFA computing $\bar{f} = f - \hat{f}$.

We recall from last section the automaton $\tilde{A} = A \mPi$ with $n$ states computing $\hat{f}$, where $\mPi = \diag(\mI_{\hat{n}}, \mat{0})$. Now we proceed to combine $A$ and $\tilde{A}$ to obtain a WFA computing the difference $\bar{f} = f - \hat{f}$. The construction follows the same argument used to show that the difference of two rational functions is a rational functions, and yields the WFA $\bar{A} = \bwa$ with $2n$ states given by
\begin{align*}
\bazero &= \left[ \begin{array}{c} \azero \\ \tazero \end{array} \right] \enspace,  \\
\bainf &= \left[ \begin{array}{c} \ainf \\ -\tainf \end{array} \right] \enspace,
\\
\bA_\symbola &=  \left[ \begin{array}{cc} \A_\symbola & \mat{0} \\ \mat{0} &
\tA_\symbola \end{array} \right] = \diag(\A_\symbola,\tA_\symbola) \enspace.
\end{align*}
It is immediate to check from this constructions that $\bar{A}$ satisfies $f_{\bar{A}} = \bar{f}$. The following lemmas establish a few preliminary facts about $\bar{A}$.

\begin{lemma}
The observability Gramian $\tilde{\mG}_s$ of $\tilde{A}$ is defined.
\end{lemma}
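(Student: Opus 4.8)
The plan is to mirror the argument already used for the reachability Gramian in Lemma~\ref{lem:gramstildeA}, but now in the observability direction, and to exhibit an explicit Loewner upper bound for the monotone sequence of partial observability Gramians. The first step is to write down the observability map $\tilde{F}_s$ of $\tilde{A}$ and relate it to the observability map $F_s$ of the SVA $A$. Since $\tainf = \ainf$ and $\tA_\symbola = \A_\symbola \mPi$, and $\mPi$ is symmetric, one has $(\A_\symbola \mPi)\mY(\A_\symbola \mPi)^\top = \A_\symbola (\mPi \mY \mPi) \A_\symbola^\top$, so that $\tilde{F}_s(\mY) = \ainf \ainf^\top + \sum_\symbola \A_\symbola (\mPi \mY \mPi) \A_\symbola^\top = F_s(\mPi \mY \mPi)$. (Note this is shaped differently from the reachability identity $\tilde{F}_p(\mX) = \mPi F_p(\mX)\mPi$, precisely because $\tainf$ is not projected while $\tazero = \mPi\azero$ is.)

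Next I would use that for the SVA $A$ we have $\mG_{A,s} = \mD$, so $\mD$ is the fixed point $F_s(\mD) = \mD$. Because $\mPi \mD \mPi = \diag(\sval_1,\ldots,\sval_{\hat{n}},0,\ldots,0) \leq \mD$ in the Loewner order, monotonicity of $F_s$ (Lemma~\ref{lem:FpFsmonotone}) yields $\tilde{F}_s(\mD) = F_s(\mPi \mD \mPi) \leq F_s(\mD) = \mD$, i.e.\ $\mD$ is a super-solution of the observability fixed-point map of $\tilde{A}$. I would then replay the boundedness argument of Lemma~\ref{lem:gramstildeA}: set $\mX_0 = \mD$ and $\mX_{t+1} = \tilde{F}_s(\mX_t)$; the base case $\mX_1 = \tilde{F}_s(\mD) \leq \mD = \mX_0$ together with monotonicity of $\tilde{F}_s$ gives a decreasing sequence of positive semi-definite matrices, so $\mat{0} \leq \mX_t \leq \mD$ for all $t$. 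Letting $\tilde{\mG}_s^{(t)}$ denote the partial observability Gramians of $\tilde{A}$, which start from $\mat{0}$ and increase with $t$ (Lemma~\ref{lem:gptpsd}), an induction using monotonicity of $\tilde{F}_s$ shows $\tilde{\mG}_s^{(t)} \leq \mX_t \leq \mD$ for every $t$. Hence the monotonically increasing sequence $\tilde{\mG}_s^{(t)}$ is bounded above, its limit $\tilde{\mG}_s = \lim_{t \to \infty} \tilde{\mG}_s^{(t)}$ exists, and the observability Gramian of $\tilde{A}$ is defined (and in fact satisfies $\tilde{\mG}_s \leq \mD$, which should be convenient in the error analysis to follow).

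The main obstacle to flag is that $\tilde{A}$ is \emph{not} minimal, so one cannot simply invoke Proposition~\ref{prop:definedgramians} to conclude that both Gramians are defined from $\hat{f} \in \ltworat$; indeed Figure~\ref{fig:nogram} gives a non-minimal automaton in $\ltworat$ whose Gramian is undefined. The crux is therefore to produce a concrete Loewner upper bound for the partial observability Gramians rather than to argue via membership in $\ltworat$, and the single computation that makes this work is the super-solution inequality $\tilde{F}_s(\mD) \leq \mD$, which rests on $\mPi \mD \mPi \leq \mD$ and the monotonicity of $F_s$. Everything else is the same monotone-convergence bookkeeping already carried out for $\tilde{\mG}_p$.
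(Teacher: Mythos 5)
Your proof is correct, but it takes a genuinely different route from the paper's. The paper proves this lemma by a direct trace computation: it expands $\normf{\tilde{\mS}}^2 = \sum_{x \in \sstar} \Tr(\tA_x \ainf \ainf^\top \tA_x^\top)$, substitutes $\ainf\ainf^\top = \mD - \sum_a \A_a \mD \A_a^\top$ from the SVA fixed-point equation, splits each $\A_a \mD \A_a^\top$ into its $\mPi$- and $(\mI-\mPi)$-parts, and lets all terms $\Tr(\tA_x \mD \tA_x^\top)$ with $x \in \Sigma^+$ telescope away, leaving $\normf{\tilde{\mS}}^2 = \Tr(\mD) - \sum_{x,a}\Tr(\tA_x \A_a (\mI-\mPi)\mD \A_a^\top \tA_x^\top) \leq \Tr(\mD)$. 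You instead transplant the monotone super-solution argument of Lemma~\ref{lem:gramstildeA} to the observability side, and the one new computation you need --- which you correctly identify --- is that the projection now sits inside rather than outside the map: $\tilde{F}_s(\mY) = F_s(\mPi \mY \mPi)$, so that $\tilde{F}_s(\mD) \leq \mD$ requires $\mPi \mD \mPi \leq \mD$ together with monotonicity of $F_s$, instead of being immediate as in the reachability case where $\tilde{F}_p(\mX) = \mPi F_p(\mX) \mPi$. Your route is arguably cleaner: it reuses machinery already in place (Lemmas~\ref{lem:FpFsmonotone} and~\ref{lem:gptpsd}), avoids the rearrangement of infinite sums implicit in the paper's cancellation step, and delivers the stronger Loewner bound $\tilde{\mG}_s \leq \mD$ rather than only the trace bound $\Tr(\tilde{\mG}_s) \leq \Tr(\mD)$. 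What the paper's computation buys in exchange is an exact closed-form expression for $\normf{\tilde{\mS}}^2$ quantifying the deficit, and it rehearses exactly the kind of trace manipulation that reappears in the error analysis of Theorem~\ref{thm:exactbound}.
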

\begin{proof}
Let $\H_{\hat{f}} = \tilde{\mP} \tilde{\mS}^\top$ be the factorization induced by $\tilde{A}$ and recall that $\tilde{\mG}_s = \tilde{\mS}^\top \tilde{\mS}$ if the corresponding inner products between the columns of $\tilde{\mS}$ are defined. Thus, to prove that $\tilde{\mG}_s$ is defined it suffices to show that all the columns of $\tilde{\mS}$ have finite $\ltwo$ norm, which is equivalent to showing that $\normf{\tilde{\mS}} < \infty$. Expanding this Frobenius norm we have
\begin{align}
\normf{\tilde{\mS}}^2 &=
\sum_{x \in \sstar} \norm{\tA_x \ainf}_2^2 \notag \\
&= \sum_{x \in \sstar} \ainf^\top \tA_x^\top \tA_x \ainf \notag \\
&= \sum_{x \in \sstar} \Tr(\tA_x \ainf \ainf^\top \tA_x^\top) \label{eqn:normst} \enspace,
\end{align}
where the last equality uses the cyclic property of the trace. Now note that using the SVA fixed-point equation $\mD = \ainf \ainf^\top + \sum_a \A_a \mD \A_a^\top$ we can rewrite any term in the infinite sum as
\begin{align*}
\Tr(\tA_x \ainf \ainf^\top \tA_x^\top) &=
\Tr\left(\tA_x \left(\mD - \sum_a \A_a \mD \A_a^\top\right) \tA_x^\top\right) \\
&= \Tr\left(\tA_x \left(\mD - \sum_a \A_a \mPi \mD \A_a^\top - \sum_a \A_a (\mI -\mPi) \mD \A_a^\top \right) \tA_x^\top\right) \enspace.
\end{align*}
Since $\mPi$ is idempotent and commutes with diagonal matrices, we have $\A_a \mPi \mD \A_a^\top = \A_a \mPi \mD \mPi \A_a^\top = \tA_a \mD \tA_a^\top$. Therefore, by linearity of the matrix trace, we can plug the last two observations into \eqref{eqn:normst} and get
\begin{align*}
\normf{\tilde{\mS}}^2 &=
\Tr(\mD)
- \sum_a \Tr (\tA_a \mD \tA_a^\top)
- \sum_a \Tr(\A_a (\mI - \mPi) \mD \A_a^\top) \\
&\;\; + \sum_{x \in \Sigma^+} \Tr(\tA_x \mD \tA_x^\top)
- \sum_{x \in \Sigma^+} \sum_a \Tr(\tA_x \tA_a \mD \tA_a^\top \tA_x^\top) \\
&\;\;
- \sum_{x \in \Sigma^+} \sum_a \Tr(\tA_x \A_a (\mI - \mPi) \mD \A_a^\top \tA_x^\top) \enspace.
\end{align*}
By aggregating terms we see that all terms of the form $ \Tr(\tA_x \mD \tA_x^\top)$ for $x \in \Sigma^+$ cancel and finally get
\begin{equation}
\normf{\tilde{\mS}}^2 = \Tr(\mD) - \sum_{x \in \sstar} \sum_a \Tr(\tA_x \A_a (\mI - \mPi) \mD \A_a^\top \tA_x^\top) \leq \Tr(\mD) \enspace,
\end{equation}
where we used that $\tA_x \A_a (\mI - \mPi) \mD \A_a^\top \tA_x^\top \geq \mat{0}$ and the trace of a positive semi-definite matrix is always non-negative.
\end{proof}

\begin{lemma}
Let $\H_f = \mP \mS^\top$ be the factorization induced by the SVA $A$, and $\H_{\hat{f}} = \tilde{\mP} \tilde{\mS}^\top$ be the factorization induced by $\tilde{A}$. Then the WFA $\bar{A}$ computing $\bar{f} = f - \hat{f}$ induces the factorization $\H_{\bar{f}} = \bar{\mP} \bar{\mS}^\top$ with $\bar{\mP} = [ \mP \; \tilde{\mP}]$ and $\bar{\mS} = [\mS \; -\tilde{\mS}]$. Furthermore, the Gramians $\bar{\mG}_p$ and $\bar{\mG}_s$ are defined and can be written as
\begin{align*}
\bar{\mG}_p &= \bar{\mP}^\top \bar{\mP} = \left[
\begin{array}{cc}
\mG_p & \mP^\top \tilde{\mP} \\
\tilde{\mP}^\top \mP & \tilde{\mG}_p
\end{array}
\right] \enspace, \\
\bar{\mG}_s &= \bar{\mS}^\top \bar{\mS} = \left[
\begin{array}{cc}
\mG_s & - \mS^\top \tilde{\mS} \\
- \tilde{\mS}^\top \mS & \tilde{\mG}_s
\end{array}
\right] \enspace.
\end{align*}
\end{lemma}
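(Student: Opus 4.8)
The plan is to read off the forward and backward matrices of $\bar{A}$ directly from its block structure, obtain the factorization from the general forward--backward correspondence, and then verify that all inner products defining the two Gramians are finite so that the stated block forms follow by block multiplication.

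First I would compute the forward and backward matrices of $\bar{A}$ using the fact that $\bA_\symbola = \diag(\A_\symbola, \tA_\symbola)$ is block diagonal. Consequently $\bA_p = \diag(\A_p, \tA_p)$ for every prefix $p$, so the row of the forward matrix of $\bar{A}$ indexed by $p$ is
\begin{equation*}
\bazero^\top \bA_p = [\azero^\top \A_p \;\; \tazero^\top \tA_p] = [\mP(p,:) \;\; \tilde{\mP}(p,:)] \enspace,
\end{equation*}
which identifies it as $\bar{\mP} = [\mP \;\; \tilde{\mP}]$. An identical computation for the backward vectors, $(\bA_s \bainf)^\top = [(\A_s \ainf)^\top \;\; -(\tA_s \tainf)^\top] = [\mS(s,:) \;\; -\tilde{\mS}(s,:)]$, gives $\bar{\mS} = [\mS \;\; -\tilde{\mS}]$. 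Since $\bar{A}$ realizes $\bar{f} = f - \hat{f}$, the general FB correspondence from Section~\ref{sec:sva} immediately yields $\H_{\bar{f}} = \bar{\mP} \bar{\mS}^\top$; this is also directly consistent, as $\bar{\mP}\bar{\mS}^\top = \mP\mS^\top - \tilde{\mP}\tilde{\mS}^\top = \H_f - \H_{\hat{f}}$.

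Next I would argue that $\bar{\mG}_p$ and $\bar{\mG}_s$ are well-defined, i.e.\ that every inner product among the columns of $\bar{\mP}$ (resp.\ $\bar{\mS}$) is finite. The columns of $\bar{\mP}$ are exactly those of $\mP$ together with those of $\tilde{\mP}$. Inner products internal to $\mP$ are finite because $A$ is a (minimal) SVA and hence has $\mG_p = \mD$ defined (Proposition~\ref{prop:definedgramians}), and those internal to $\tilde{\mP}$ are finite by Lemma~\ref{lem:gramstildeA}, which shows $\tilde{\mG}_p$ is defined. The only new quantities are the cross inner products $\langle \mP(:,i), \tilde{\mP}(:,j) \rangle$; since both columns lie in $\ltwo$ (their squared norms are the finite diagonal entries of $\mG_p$ and $\tilde{\mG}_p$), Cauchy--Schwarz guarantees each is finite, so $\bar{\mG}_p = \bar{\mP}^\top \bar{\mP}$ is well-defined. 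The identical reasoning applies to $\bar{\mS}$, using $\mG_s = \mD$, the preceding lemma establishing that $\tilde{\mG}_s$ is defined, and Cauchy--Schwarz for the cross terms $\langle \mS(:,i), \tilde{\mS}(:,j) \rangle$.

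Finally, with all inner products finite, the claimed block forms follow by multiplying the block-partitioned matrices,
\begin{equation*}
\bar{\mP}^\top \bar{\mP} = \begin{bmatrix} \mP^\top \mP & \mP^\top \tilde{\mP} \\ \tilde{\mP}^\top \mP & \tilde{\mP}^\top \tilde{\mP} \end{bmatrix}, \qquad \bar{\mS}^\top \bar{\mS} = \begin{bmatrix} \mS^\top \mS & -\mS^\top \tilde{\mS} \\ -\tilde{\mS}^\top \mS & \tilde{\mS}^\top \tilde{\mS} \end{bmatrix},
\end{equation*}
and substituting $\mP^\top\mP = \mG_p$, $\tilde{\mP}^\top\tilde{\mP} = \tilde{\mG}_p$, $\mS^\top\mS = \mG_s$, and $\tilde{\mS}^\top\tilde{\mS} = \tilde{\mG}_s$. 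The only nontrivial point in the whole argument is the finiteness of the off-diagonal cross-Gramian blocks; everything else is bookkeeping with the block-diagonal transition structure. Since this finiteness is handed to us by Cauchy--Schwarz once the four diagonal Gramians are known to be defined, I expect no genuine obstacle: the real conceptual work has already been carried out in the two preceding lemmas establishing that $\tilde{\mG}_p$ and $\tilde{\mG}_s$ are defined.
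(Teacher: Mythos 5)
Your proof is correct and takes essentially the same approach as the paper's: the block forms of $\bar{\mP}$, $\bar{\mS}$, $\bar{\mG}_p$, $\bar{\mG}_s$ are obtained by the same direct computation, and your finiteness argument matches the paper's observation that all columns of $\mP$, $\mS$, $\tilde{\mP}$, $\tilde{\mS}$ have finite $\ltwo$ norm, with you simply making explicit the Cauchy--Schwarz step for the cross blocks that the paper leaves implicit. There are no gaps.
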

\begin{proof}
The structure of $\bar{\mP}$, $\bar{\mS}$, $\bar{\mG}_p$, and $\bar{\mG}_s$ follow from a straightforward computation. That these Gramians are defined follows from noting that because all the Gramians of $A$ and $\tilde{A}$ are defined, then all the columns of $\mP$, $\mS$, $\tilde{\mP}$, and $\tilde{\mS}$ have finite $\ltwo$ norm.
\end{proof}

Now we are ready to prove the main result of this section giving an exact closed-form expression for the $\ltwo$ distance between $f$ and $\hat{f}$.

\begin{theorem}\label{thm:exactbound}
For any truncation size $1 \leq \hat{n} < n$ we have
\begin{equation}
\norm{f - \hat{f}}_2^2 =
\Tr \left( \mD^{1/2} (\mI - \mPi) (\tilde{\mS} \mS^\top + \mS \tilde{\mS}^\top - \tilde{\mS} \tilde{\mS}^\top) (\mI - \mPi) \mD^{1/2} \right) \label{eqn:exacterror} \enspace.
\end{equation}
\end{theorem}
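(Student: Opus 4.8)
The plan is to compute $\norm{f-\hat{f}}_2^2$ as the squared $\ltwo$ norm of the function realized by the $2n$-state automaton $\bar{A}$, and then to collapse the resulting expression onto the diagonal matrix $\mD$ using the SVA fixed-point equations. Since the two preceding lemmas guarantee that the Gramian $\bar{\mG}_s$ is defined and supply its block form, item~1 of Theorem~\ref{thm:gramiannorm} yields $\norm{f-\hat{f}}_2^2 = \bazero^\top \bar{\mG}_s \bazero$. Substituting the block form of $\bar{\mG}_s$, the SVA identities $\mG_s = \mD$ and $\tilde{\mG}_s = \tilde{\mS}^\top \tilde{\mS}$, and $\tazero = \mPi \azero$, and writing $\mN = \mS^\top \tilde{\mS}$ for the cross-Gramian, this collapses to $\norm{f-\hat{f}}_2^2 = \azero^\top \mM \azero = \Tr(\mM \azero\azero^\top)$, where $\mM = \mD - \mN\mPi - \mPi\mN^\top + \mPi \tilde{\mG}_s \mPi$.

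The decisive step is to eliminate $\azero$. I would invoke the reachability SVA equation (Lemma~\ref{lem:svaequations}, Theorem~\ref{thm:fpeqs}) rewritten as $\azero\azero^\top = \mD - \sum_\symbola \A_\symbola^\top \mD \A_\symbola$; plugging this into the trace and using cyclicity gives
\[
\norm{f-\hat{f}}_2^2 = \Tr\Big(\big(\mM - \textstyle\sum_\symbola \A_\symbola \mM \A_\symbola^\top\big)\mD\Big).
\]
To evaluate $\sum_\symbola \A_\symbola \mM \A_\symbola^\top$ I would derive, by the same first-symbol string-splitting argument as in the proof of Theorem~\ref{thm:fpeqs}, three Sylvester-type equations: the observability SVA equation $\sum_\symbola \A_\symbola \mD \A_\symbola^\top = \mD - \ainf\ainf^\top$; the cross-Gramian equation $\mN = \ainf\ainf^\top + \sum_\symbola \A_\symbola \mN \mPi \A_\symbola^\top$ (which uses $\tA_\symbola = \A_\symbola \mPi$ and $\tainf = \ainf$); and the truncated-Gramian equation $\tilde{\mG}_s = \ainf\ainf^\top + \sum_\symbola \A_\symbola \mPi \tilde{\mG}_s \mPi \A_\symbola^\top$. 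Feeding these in, the four $\ainf\ainf^\top$ terms cancel and I obtain the clean identity $\sum_\symbola \A_\symbola \mM \A_\symbola^\top = \mD - \mN - \mN^\top + \tilde{\mG}_s$.

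It then remains to subtract and simplify. One finds $\mM - \sum_\symbola \A_\symbola \mM \A_\symbola^\top = \mN(\mI - \mPi) + (\mI - \mPi)\mN^\top - \tilde{\mG}_s + \mPi \tilde{\mG}_s \mPi$, and I would push this through $\Tr(\,\cdot\,\mD)$. Using cyclicity of the trace together with the facts that $\mD$ and $\mPi$ commute and $\mPi^2 = \mPi$ — in particular $\Tr(\mPi \tilde{\mG}_s \mPi \mD) - \Tr(\tilde{\mG}_s \mD) = -\Tr(\mD(\mI-\mPi)\tilde{\mG}_s)$ — all terms collapse to $\Tr\big(\mD (\mI - \mPi)(\mN + \mN^\top - \tilde{\mG}_s)\big)$. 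Re-symmetrizing the diagonal factor (again because $\mD$ and $\mPi$ commute and $\mPi^2=\mPi$) rewrites this as $\Tr(\mD^{1/2}(\mI-\mPi)(\mS^\top\tilde{\mS} + \tilde{\mS}^\top\mS - \tilde{\mS}^\top\tilde{\mS})(\mI-\mPi)\mD^{1/2})$, which is the claimed identity \eqref{eqn:exacterror}. I expect the main obstacle to be bookkeeping the placement of the projector $\mPi$ in the cross- and truncated-Gramian fixed-point equations — the asymmetry $\tA_\symbola = \A_\symbola\mPi$ forces a single $\mPi$ in the cross equation but a two-sided $\mPi\,(\cdot)\,\mPi$ in the $\tilde{\mG}_s$ equation — and verifying the exact cancellation of the $\ainf\ainf^\top$ contributions, which is precisely what leaves the error depending only on the truncated block through $\mI-\mPi$.
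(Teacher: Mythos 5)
Your proposal is correct and takes essentially the same route as the paper's proof: both start from $\norm{f-\hat{f}}_2^2 = \bazero^\top \bar{\mG}_s \bazero$ via Theorem~\ref{thm:gramiannorm}, eliminate $\azero$ through the SVA reachability equation $\azero\azero^\top = \mD - \sum_{\symbola} \A_\symbola^\top \mD \A_\symbola$, and cancel the remaining terms via observability fixed-point equations --- your three $n \times n$ Sylvester equations for $\mD$, $\mN = \mS^\top\tilde{\mS}$, and $\tilde{\mG}_s$ are precisely the diagonal and off-diagonal blocks of the single $2n$-dimensional equation $\bar{\mS}^\top\bar{\mS} = \bainf\bainf^\top + \sum_{\symbola} \bA_\symbola \bar{\mS}^\top\bar{\mS} \bA_\symbola^\top$ that the paper manipulates, and your cancellation of the four $\ainf\ainf^\top$ terms is exactly the paper's observation that $[\mI \; \mI]\,\bainf = \mat{0}$. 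Note also that your final expression $\Tr\bigl(\mD^{1/2}(\mI-\mPi)(\mS^\top\tilde{\mS} + \tilde{\mS}^\top\mS - \tilde{\mS}^\top\tilde{\mS})(\mI-\mPi)\mD^{1/2}\bigr)$ coincides with what the paper's proof actually derives in \eqref{eqn:bound1}; the placement of transposes in the printed statement \eqref{eqn:exacterror} is a typo, since as written $\tilde{\mS}\mS^\top$ would be an infinite $\sstar \times \sstar$ matrix, dimensionally incompatible with the surrounding $n \times n$ factors $(\mI - \mPi)$ and $\mD^{1/2}$.
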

\begin{proof}
Recall from Theorem~\ref{thm:gramiannorm} that $\norm{\bar{f}}_2^2 = \bazero^\top \bar{\mG}_s \bazero = \Tr(\bazero^\top \bar{\mS}^\top \bar{\mS} \bazero) = \Tr(\bar{\mS} \bazero \bazero^\top \bar{\mS}^\top)$, where the last equality follows from a standard property of the trace. Note that by construction of $\bar{A}$ we have $\bazero^\top = \azero^\top [ \mI \; \mPi ]$, which when plugged in the previous equation yields
\begin{equation}
\norm{\bar{f}}_2^2 = \Tr\left(\bar{\mS} \left[\begin{array}{c} \mI \\ \mPi \end{array} \right] \azero \azero^\top [\mI \; \mPi] \bar{\mS}^\top\right) \enspace. \label{eqn:trace1}
\end{equation}

Recall that $A$ is an SVA, and therefore the fixed-point equation \eqref{eqn:fpalpha} applied to $A$ yields $\azero \azero^\top = \mD - \sum_{\symbola} \A_\symbola^\top \mD \A_\symbola$. When combined with \eqref{eqn:trace1} we obtain, by linearity of the trace:
\begin{align}
\Tr\left(\bar{\mS} \left[\begin{array}{c} \mI\\ \mPi \end{array} \right] \azero \azero^\top [\mI \; \mPi] \bar{\mS}^\top \right)
&=
\Tr\left(\bar{\mS} \left[\begin{array}{c} \mI \\ \mPi \end{array} \right] \mD [\mI \; \mPi] \bar{\mS}^\top \right) \label{eqn:step1}
\\
&-
\Tr\left(\bar{\mS} \left[\begin{array}{c} \mI \\ \mPi \end{array} \right]  
\left(\sum_\symbola \A_\symbola^\top \mD \A_\symbola \right)
[\mI \; \mPi] \bar{\mS}^\top \right) \enspace. \notag
\end{align}
Using that $[\mI \; \mPi] = [\mI \; \mI] - [\mat{0} \; \mI - \mPi]$, we decompose the first term as:
\begin{align}
& \Tr\left(\bar{\mS} \left[\begin{array}{c} \mI \\ \mI \end{array} \right] \mD [\mI \; \mI] \bar{\mS}^\top \right)
+
\Tr\left(\bar{\mS} \left[\begin{array}{c} \mat{0} \\ \mI - \mPi \end{array} \right] \mD [\mat{0} \; \mI - \mPi] \bar{\mS}^\top \right)
\label{eqn:term1}
\\
-
&\Tr\left(\bar{\mS} \left[\begin{array}{c} \mI \\ \mI \end{array} \right] \mD [\mat{0} \; \mI - \mPi] \bar{\mS}^\top \right)
-
\Tr\left(\bar{\mS} \left[\begin{array}{c} \mat{0} \\ \mI - \mPi \end{array} \right] \mD [\mI \; \mI] \bar{\mS}^\top \right) \enspace. \notag
\end{align}

We now proceed to bound the sum of the last three terms in this expression. Note in the first place that each of these terms is of the form $\Tr(\mM \mD \mN) = \Tr(\mD^{1/2} \mN \mM \mD^{1/2}) = \TrD(\mN \mM)$, where in the last step we just introduced a bit of convenient notation. Furthermore, recall that by definition of $\bar{A}$ we have $\bar{\mS} = [\mS \; -\tilde{\mS}]$. With these observations we obtain the following three equations:
\begin{align}
\Tr\left(\bar{\mS} \left[\begin{array}{c} \mat{0} \\ \mI - \mPi \end{array} \right] \mD [\mat{0} \; \mI - \mPi] \bar{\mS}^\top \right)
&=
\TrD\left((\mI - \mPi) \tilde{\mS}^\top \tilde{\mS} (\mI - \mPi) \right) \enspace,
\\
- \Tr\left(\bar{\mS} \left[\begin{array}{c} \mI \\ \mI \end{array} \right] \mD [\mat{0} \; \mI - \mPi] \bar{\mS}^\top \right)
&=
\TrD\left((\mI - \mPi) \tilde{\mS}^\top (\mS - \tilde{\mS}) \right) \enspace,
\\
- \Tr\left(\bar{\mS} \left[\begin{array}{c} \mat{0} \\ \mI - \mPi \end{array} \right] \mD [\mI \; \mI] \bar{\mS}^\top \right)
&=
\TrD\left( (\mS^\top - \tilde{\mS}^\top) \tilde{\mS} (\mI - \mPi) \right)
\enspace.
\end{align}
By observing that we have $\TrD((\mI - \mPi) \mM) = \TrD(\mM (\mI - \mPi)) = \TrD((\mI - \mPi) \mM (\mI - \mPi))$ for any square matrix $\mM$, we conclude that the sum of the last three terms in \eqref{eqn:term1} equals
\begin{equation}
\TrD\left( (\mI - \mPi) (\tilde{\mS}^\top \mS + \mS^\top \tilde{\mS} - \tilde{\mS}^\top \tilde{\mS}) (\mI - \mPi) \right) \label{eqn:bound1}
\enspace.
\end{equation}

To complete the proof of the equation we will now show that the sum of the remaining terms in \eqref{eqn:step1} vanish; that is:
\begin{equation}
\Tr\left(\bar{\mS} \left[\begin{array}{c} \mI \\ \mI \end{array} \right] \mD [\mI \; \mI] \bar{\mS}^\top \right)
-
\Tr\left(\bar{\mS} \left[\begin{array}{c} \mI \\ \mPi \end{array} \right]  
\left(\sum_\symbola \A_\symbola^\top \mD \A_\symbola \right)
[\mI \; \mPi] \bar{\mS}^\top \right) = 0 \enspace. \label{eqn:last}
\end{equation}
We start by noting the following identity:
\begin{equation}
\A_\symbola [ \mI \; \mPi ] = [ \A_\symbola \, \hA_\symbola ] = [\mI \; \mI] \bA_\symbola \enspace.
\end{equation}
Therefore, using the fixed-point equation \eqref{eqn:fpbeta} we see that
\begin{align*}
\Tr\left(\bar{\mS} \left[\begin{array}{c} \mI \\ \mPi \end{array} \right]  
\left(\sum_\symbola \A_\symbola^\top \mD \A_\symbola \right)
[\mI \; \mPi] \bar{\mS}^\top \right)
&=
\Tr\left(\bar{\mS} \left(\sum_\symbola \bA_\symbola^\top \left[\begin{array}{c} \mI \\ \mI \end{array} \right] \mD [\mI \; \mI] \bA_\symbola \right)
\bar{\mS}^\top \right)
\\
&=
\TrD\left([\mI \; \mI] \left(\sum_\symbola \bA_\symbola \bar{\mS}^\top \bar{\mS} \bA_\symbola^\top  \right) \left[\begin{array}{c} \mI \\ \mI \end{array} \right] \right)
\\
&=
\TrD\left([\mI \; \mI] \left(\bar{\mS}^\top \bar{\mS} - \bainf \bainf^\top \right) \left[\begin{array}{c} \mI \\ \mI \end{array} \right] \right) \enspace.
\end{align*}
Now \eqref{eqn:last} follows from simply observing that by the construction of $\bar{A}$ we have $[\mI \; \mI] \bainf = 0$.
\end{proof}

Finally we can show how the bound in Theorem~\ref{thm:nicebound} follows directly from the exact expression for the error obtained in Theorem~\ref{thm:exactbound}.

\begin{proof}[Proof of Theorem~\ref{thm:nicebound}]
We start noting that \eqref{eqn:exacterror} can be rewritten as
\begin{equation}
\TrD\left( (\mI - \mPi) \mS^\top \mS (\mI - \mPi) \right)
- \TrD\left(
(\mI - \mPi)
\left[ \mI \; \mI \right]
\bar{\mS}^\top \bar{\mS}
\left[ \begin{array}{c} \mI \\ \mI \end{array} \right]
(\mI - \mPi) \right)
\enspace.
\end{equation}
Note that the second term has the form $\Tr(\mM \mM^\top)$ and therefore is non-negative. Using that $A$ is an SVA, we see that the first term is
\begin{equation}
\TrD\left( (\mI - \mPi) \mS^\top \mS (\mI - \mPi) \right) =
\Tr\left( \mD^{1/2} (\mI - \mPi) \mD (\mI - \mPi) \mD^{1/2} \right) =
\sum_{i = \hat{n}+1}^{n} \sval_i^2 \enspace.
\end{equation}
Thus, it follows from the last two observations that \eqref{eqn:exacterror} is at most $\sum_{i = \hat{n}+1}^{n} \sval_i^2$.
\end{proof}

\section{Related Work}\label{sec:related}

In this section we provide wider context for our work by relating it to
recent developments in machine learning and to well established results in
the theory of linear dynamical systems. 

Spectral techniques for learning weighted automata and other latent
variable models have recently drawn a lot of attention in the machine
learning community.
Following the significant milestone papers~\citep{hsu09,denis}, in which an
efficient spectral algorithm for learning hidden Markov models (HMM) and
stochastic rational languages was given, the field has grown very rapidly.
The original algorithm, which is based on singular value decompositions of
finite sub-blocks of Hankel matrices, has been extended to reduced-rank
HMMs~\citep{siddiqi10}, predictive state representations
(PSR)~\citep{Boots:2009}, finite-state transducers~\citep{fst,bailly2013fst},
and many other classes of functions on strings
\citep{bailly11,nips12,recasens2013spectral}.
Although each of these papers works with slightly different problems and
analysis techniques, the key ingredient turns out to be always the same:
parametrize the target model as a WFA and learn
this WFA from the SVD of a finite sub-block of its Hankel
matrix~\citep{mlj13spectral}. 
Therefore, it is possible (and desirable) to study all these learning algorithms
from the point of view of rational series, which are exactly the class of
real-valued functions on strings that can be computed by WFA.

The appeal of spectral learning techniques comes from their
computational superiority when compared to iterative algorithms like
Expectation--Maximization (EM)~\citep{dempster}.
Another very attractive property of spectral methods is the possibility of
proving rigorous statistical guarantees about the learned automaton.
For example, under a realizability assumption, these methods are known to be
consistent and amenable to finite-sample analysis in the PAC sense~\citep{hsu09}.
An important detail is that, in addition to realizability, these results work
under the assumption that the user correctly guesses the number of latent
states of the target distribution.
Though this is not a real caveat when it comes to using these algorithms in
practice -- the optimal number of states can be identified using a model
selection procedure~\citep{icml2014balle} -- it is one of the barriers in
extending the statistical analysis of spectral methods to the non-realizable
setting.

Tackling the non-realizability question requires, as a special case, dealing
with the situation in which data is generated from a WFA with $n$ states and the
learning algorithm is asked to produce a WFA with $\hat{n} < n$ states.
This case is already a non-trivial problem which -- barring the noisiness
introduced by estimating the Hankel matrix from observed data --
can in fact be interpreted as an approximate minimization of WFA. 
From this point of view, we believe our results provide the fundamental tools necessary for
addressing important problems in the theory of learning weighted automata,
including the robust statistical analysis of spectral learning algorithms.

A connection between spectral learning algorithms and approximate
minimization for a small class of hidden Markov models was considered
in~\citep{kulesza2014low}.  This paper also presents a
theoretical result bounding the error between the original and minimized
HMM in terms of the total variation distance. The bounds in this paper are
incomparable to ours.  However, in a follow-up work \citep{kulesza2015low},
published concurrently with our original paper on SVA
\citep{balle2015canonical}, a problem similar to the one considered here is
addressed, albeit different methods are used and the results are less
general that our approximate minimization method.
Another paper on which the issue of approximate minimization of weighted
automata is considered in a tangential manner is \citep{14KW-ICALP}.
In this case the authors again focus on an $\lone$-like accuracy measure to
compare two automata: an original one, and another one obtained by removing
transitions with small weights occurring during an exact minimization procedure.
Though the removal operation is introduced as a means of obtaining a numerically
stable minimization algorithm, the paper also presents some experiments
exploring the effect of removing transitions with larger weights.
With the exception of these timid results, the problem of approximate
minimization for general WFA remained largely unstudied before our paper. 

However, the case of an alphabet with one symbol, $|\Sigma| = 1$, has been
thoroughly studied from multiple points of view.  In the control theory
literature several methods have been proposed for approximate minimization
of time-invariant linear dynamical systems under the names of model
reduction, truncation, and approximation; see \citep{antoulas2005approximation} for a comprehensive presentation. One possible approach
to the model reduction problem is to consider so-called balanced
realizations of a linear dynamical system and apply a convenient truncation
method to the balanced realization to obtain a smaller system
\citep{enns1984model}. In the one symbol case, the connection with weighted automata
arises from observing that the impulse response of a time-invariant linear
dynamical system can be parametrized as a weighted automata with one letter
(and possibly vector-valued outputs for multiple-input multiple-output
systems) \citep{antoulas2005approximation}. From this point of view, the canonical form for
weighted automata given by our SVA can be interpreted as a generalization of balanced
realizations to the case where the alphabet has two or more letters.

The study of model reduction techniques in the one symbol case can also be
connected to sophisticated ideas in the study of approximations for Hankel
operators in the functional and complex analysis literatures; see e.g.\ \citep{peller2012hankel} for a comprehensive treatment of the theory of Hankel operators. In
the same way we do in Section~\ref{sec:ratfuns}, when the alphabet $\Sigma$ has
only one symbol the Hankel matrix of a rational function yields a linear
operator between Hilbert or Bannach spaces of sequences. The spectral
properties of these Hankel operators have been thoroughly studied. For
example, deep connections to the theory of complex function on the unit
disk and Fourier analysis have been uncovered \citep{nikol2012treatise}.
Along these lines one finds the celebrated AAK theorem characterizing optimal approximations of Hankel operators by Hankel operators of bounded rank \citep{adamyan1971analytic}. This theorem has been widely exploited in control theory to provide alternative approaches to balanced realizations for model reduction, thus providing a link between the abstract setting of Hankel operators and the concrete problem of approximating of linear dynamical systems \citep{glover1984all} (see also \citep{fuhrmann2011polynomial}).
One of the fundamental ideas in this line of work is realizing that for
$|\Sigma| = 1$ the free monoid $\sstar$ can be identified with the natural
numbers $\N$, which can be canonically embedded in the \emph{abelian} group
$\Z$.
Unfortunately for us, this approach cannot be directly generalized to the case
$|\Sigma| > 1$ because in this case the corresponding embedding yields a free
non-abelian group, and standard Fourier analysis on those groups is not
available. Although some recent attempts have been made to extend some of the results
about Hankel operators to the non-commutative case using methods from functional analysis \citep{popescu2003multivariable}, this theory is still largely underdeveloped, and the few existing results can only be obtained via non-constructive arguments.

\section{Conclusion and Future Work}\label{sec:conclusion}

In the present paper we have given a new approximate minimization technique
based on spectral theory ideas.  The essential point was to use the
singular value decomposition to decide how to truncate the original
automaton without losing too much accuracy.  We have given quantitative
bounds on how close the approximate machine is to the original.

One crucial aspect that we have not addressed is the question of
constructing the \emph{best possible} approximation given a bound on the
size of the state space or, equivalently, the dimension of the vector space
on which the machine is defined. In the one-letter case,
sophisticated results from the theory of
Hankel operators~\citep{adamyan1971analytic,peller2012hankel} provide a satisfactory answer to this problem.  However,
extending this to the multiple-letter case means extending an already deep
and difficult theory to the non-commutative case.  Nevertheless, it remains
an exciting challenge.

A different approach is to change the approximation measure from $\ltwo$ to a more natural metric between WFA.
In recent work~\citep{Balle17} we developed a metric to measure the distance between
WFAs based on bisimulation. This metric has interesting properties , but unfortunately it is hard to compute for the
present type of approximation.  Nevertheless, it might be fruitful to
explore approximation schemes based on approximate bisimulation as has been
done for some types of Markov processes~\citep{Desharnais03}.  It would be
interesting to compare the quality of such approximation schemes with the
present one.

\section*{Acknowledgements}

We thank Fran\c{c}ois Denis for sharing the
proof of Lemma~\ref{lem:denis} with us,
Guillaume Rabusseau for useful discussions, and an anonymous referee for suggesting improvements to the presentation of the paper.

\bibliographystyle{abbrvnat}
\bibliography{paper}

\end{document}